\documentclass[11pt]{article}
\usepackage[numbers]{natbib}

\usepackage{algorithm}
\usepackage{algpseudocode}
\usepackage{amsmath}
\usepackage{amsthm}
\usepackage{amssymb}
\usepackage{subcaption}
\usepackage{graphicx}
\usepackage{grffile}
\usepackage{wrapfig,epsfig}
\usepackage{url}
\usepackage[dvipsnames]{xcolor}
\usepackage{epstopdf}
\usepackage[utf8]{inputenc}
\usepackage{multicol}

\usepackage{bbm}
\usepackage{dsfont}

\setlength{\parskip}{.7em}

\allowdisplaybreaks

\usepackage{tikz}
\usetikzlibrary{shapes.misc}
\usepackage{hyperref}
\usetikzlibrary{arrows}
\usepackage[margin=1in]{geometry}

\newtheorem{theorem}{Theorem}[section]
\newtheorem{lemma}[theorem]{Lemma}
\newtheorem{definition}[theorem]{Definition}

\newtheorem{corollary}[theorem]{Corollary}

\newcommand{\wh}{\widehat}
\newcommand{\wt}{\widetilde}

\newcommand{\R}{\mathbb{R}}

\renewcommand{\hat}{\wh}
\renewcommand{\epsilon}{\varepsilon}

\DeclareMathOperator*{\E}{{\mathbb{E}}}

\DeclareMathOperator{\tr}{tr}

\graphicspath{{figures/}}

\makeatletter
\newcommand*{\RN}[1]{\expandafter\@slowromancap\romannumeral #1@}
\makeatother

\usepackage{lineno}

\let\citep\cite
\begin{document}
\title{Graph-based Nearest Neighbors with Dynamic Updates via Random Walks}
\author{Nina Mishra\thanks{\texttt{nmishra@amazon.com}. Amazon} \and Yonatan Naamad\thanks{\texttt{ynaamad@amazon.com}. Amazon.} \and Tal Wagner\thanks{\texttt{tal.wagner@gmail.com}. Amazon and Tel Aviv University.} \and Lichen Zhang\thanks{\texttt{lichenz@mit.edu}. MIT. Work done while interning at Amazon.}}
\maketitle

\begin{abstract}
    Approximate nearest neighbor search (ANN) is a common way to retrieve relevant search results, especially now in the context of large language models and retrieval augmented generation. One of the most widely used algorithms for ANN is based on constructing a multi-layer graph over the dataset, called the Hierarchical Navigable Small World (HNSW). While this algorithm supports insertion of new data, it does not support deletion of existing data. Moreover, deletion algorithms described by prior work come at the cost of increased query latency, decreased recall, or prolonged deletion time. In this paper, we propose a new theoretical framework for graph-based ANN based on random walks. We then utilize this framework to analyze a randomized deletion approach that preserves hitting time statistics compared to the graph before deleting the point. We then turn this theoretical framework into a \emph{deterministic} deletion algorithm, and show that it provides better tradeoff between query latency, recall, deletion time, and memory usage through an extensive collection of experiments.
\end{abstract}

\section{Introduction}
We study the \emph{approximate nearest neighbor search problem} (ANN): given a collection of $n$ points $P\subset \R^d$, a parameter $k$ and a query point $q\in \R^d$, the goal is to find the top-$k$ points in $P$ that are closest to $q$ under some distance measurements such as $\ell_2$ distance or cosine similarity\footnote{In this paper, we use $\|\cdot\|$ to denote such abstract distance measurement.}. This problem is a fundamental component of Retrieval Augmented Generation (RAG), widely adopted to improve the accuracy of Large Language Models (LLM)~\citep{lpp+20,gxg23,jxg+23,fdn+24}. One of the most popular ways to solve the ANN problem is through graph-based approaches, where a data-dependent graph is constructed of the dataset, and the queries can be quickly routed on such graph~\citep{jds+19,my20,fu12fast}. In this work, we focus on the Hierarchical Navigable Small World (HNSW) graph~\citep{my20}, a data structure with strong practical performances. This is a multi-layer graph data structure, where the higher layers of the graph tend to have fewer vertices and ``long-range'' edges that help to navigate between clusters, while the lower layers have ``short-range'' edges that enhance the connectivity within the clusters. To perform a query search, one starts at the higher layers via the long-range edges to make coarse-grained progress towards the clusters containing the nearest neighbors, and drops to lower layers for finer-grained progress to find the nearest neighbors within the cluster.

While the HNSW algorithm can naturally handle insertions~\citep{hdcs24}, they do not possess \emph{deletion} capabilities. This becomes more and more problematic as modern datasets are highly dynamic and require many deletions. For example, advertisers may remove ads when the campaign budget is exhausted; a clothing retailer may remove spring clothing in the fall and remove fall clothing in the spring; a shoe company may eliminate a previous shoe version when a new generation drops. Thus, deletion is a common and necessary operation. 

Since the HNSW algorithm does not come with a deletion procedure, a common, practical approach is to never delete a point at all. Instead, the corresponding vertex in the graph is marked with a ``tombstone'' indicating that the point should never be outputted as a nearest neighbor. At query time, these tombstones can either be handled in post-processing (the corresponding points would be removed from the output) or in real-time (modifying the HNSW algorithm to never stop its search at a tombstone). We consider this algorithm as a baseline for our proposed approach. The advantage of this approach is that graph navigability is preserved, and therefore tombstones maintain high recall. However, as tombstones are never removed from the graph, the query latency rises significantly when there are large amounts of tombstones, as the search would take many more steps to reach the desired nearest neighbors. Moreover, the memory usage of the data structure stays constant while the number of points declines, leading to unnecessary storage.

Due to the importance of developing a good deletion algorithm with improved query throughput and memory usage, a rich literature has considered approaches beyond tombstones. An even simpler idea is to delete the point from the graph and the data structure without patching the graph (we will refer to it as a `no patching' algorithm). Such an approach suffers when there are many deletions or queries are drifting towards the clusters being deleted, resulting in much worse recall~\citep{xu2022proximitygraphmaintenancefast}. Another class of deletion algorithms is to reconnect the subgraph after deletion as in~\citep{ssks21,xu2022proximitygraphmaintenancefast,xiao2024enhancing,zth+23,xll+23,xdb+25}. The reconnection strategy ranges from local to global: if we want to delete $p$, a local reconnect attempts to find for every $u\in N(p)$, the neighborhood of $p$, another point $v\in N(p)$ that is the nearest neighbor of $u$. This approach has low query latency and slightly improved recall compared to no patching, but the recall is still significantly lower than in other alternatives. To further improve recall,~\cite{ssks21} proposes the FreshDiskANN algorithm that uses a 2-hop neighborhood: instead of only rerouting within $N(p)$, FreshDiskANN adds edges between $u$ and $N(p)\cup N(u)$, then prunes edges to ensure the sparsity of the graph. A more global approach introduced by~\cite{xu2022proximitygraphmaintenancefast} (where they termed the algorithm global reconnect) is to re-insert all the points in $N(p)$ and improve the connectivity of the subgraph by utilizing the robustness of the HNSW insertion procedure. Both FreshDiskANN and global reconnect suffer from much longer deletion times due to the non-local nature of the algorithm and the need to interact with larger subgraphs, and require nontrivial effort to parallelize the insertions.

\noindent\textbf{Our results.} 
Our main contribution is a deletion procedure that arises through a theoretically grounded twin framing of the HNSW algorithm. This algorithm enjoys good recall, query speed, overall deletion time and memory usage, as summarized in Table~\ref{tab:compare_alg}.
In the original HNSW algorithm, for a given query $q$, the algorithm walks to the adjacent vertex $u$ that minimizes $\|q-u\|$, in a deterministic and greedy fashion.  
In our twin formulation, we ``soften'' it by walking to a random neighbor with probability proportional to $\exp(-r^2\cdot \|q-u\|^2)$ for $r>0$. We call this a ``softmax walk''. 
While highly similar to the ``hard'' greedy walk (as we confirm experimentally), this random walk interpretation leads to a theoretical deletion procedure that maintains the properties of the original random walk. Specifically, the algorithm functions by first computing local edge weights over $N(p)$ that precisely preserve the random walk probability, then using a simple randomized sparsification scheme that approximates the \emph{hitting time} of a walk, i.e., the expected number of steps it takes for a random walk at a vertex $u$ to reach a vertex $v$. While this does not directly guarantee that we will hit the same nearest neighbors exactly, it does ensure that we make a similar number of steps to the target as before. We then turn this randomized, theoretical algorithm into a deterministic, practical algorithm: instead of random sampling edges, we could compute the edge weights first then simply take heaviest edges. We conduct extensive experiments on various datasets in the \emph{mass deletion setting}, where large fraction of the points are removed from the dataset and show that our method has strong advantage when points are frequently deleted in large batches. We show the edge of our method over other alternatives in four key metrics: recall, query speed, deletion time and memory usage.

\newcommand{\success}{{\color{ForestGreen}{\pmb \checkmark}}}

\newcommand{\failure}{{\color{red}{\pmb \times}}}

\begin{table}[!ht]
\caption{Summary of HNSW deletion algorithms.}
    \centering
    \small
    \begin{tabular}{|l|c|c|c|c|}
    \hline
     {\bf Method}    &  {\bf Recall} & {\bf Speed} & {\bf Del Time} & {\bf Space}\\ \hline
       Tombstone  & $\success$ & $\failure$ & $\success$ & $\failure$ \\ \hline
       No patch & $\failure$ & $\success$ & $\success$ & $\success$ \\ \hline
       Local & $\failure$ & $\success$ & $\success$ & $\success$ \\ \hline
       FreshDiskANN& $\success$ & $\success$ & $\failure$ & $\success$ \\ \hline
       Global & $\success$ & $\success$ & $\failure$ & $\success$ \\ \hline
       \texttt{SPatch} (ours)& $\success$ & $\success$ & $\success$ & $\success$ \\ \hline
    \end{tabular}
    \label{tab:compare_alg}
\end{table}

\section{Related Work}

\newcommand{\HNSW}{\textsc{HNSW}}

\newcommand{\vecv}{{\vec{v}}}
\newcommand{\vecu}{{\vec{u}}}
\newcommand{\vecw}{{\vec{w}}}

\noindent\textbf{Approximate Nearest Neighbor Search.} Approximate nearest neighbor (ANN) search is a core algorithmic task with a long line of research, both in theory and in practice. Theoretically, Locality Sensitive Hashing (LSH)~\citep{indyk1998approximate,ai08,ar15,ailrs15,alrw17,andoni2018approximate,dirw20} has been a popular solution with provably fast query time and efficient memory usage. While LSH offers good performance in theory, its practical variants are usually \emph{data-oblivious}. On the other hand, practical ANN algorithms are oftentimes \emph{data-dependent}; this class of algorithms includes quantization methods~\citep{jegou2010product,ghks13,ka14,yxw+24,gl24} and graph-based methods~\citep{my20,jds+19,wang2021comprehensive,fu12fast,hd16,gpwl23}, the latter of which are the focus of this work. In these algorithms, the points in the dataset are coalesced into a specially-constructed proximity graph. To handle a query, one runs a greedy search algorithm to traverse the graph (i.e., iteratively move from the current point to its adjacent point closest to the query vector) to find the approximate nearest neighbors of the query. Among this body of work, we focus on the Hierarchical Navigable Small World (HNSW) algorithm of~\cite{my20}, due to its wide adoption, fast query throughput, and good recall characteristics. A key limitation of HNSW is that it only explicitly supports insert and query operations, and implicitly assumes that the dataset is either static or only incremental. In practice, deletions (if any) are usually handled ad-hoc; the industry standard is to mark any deleted vectors as unreturnable ``tombstones'', and periodically rebuild the data structure entirely from scratch at great cost~\citep{xu2022proximitygraphmaintenancefast}. Some more recent approaches try to directly incorporate some notion of deletion into the graph data structure, preempting the need for periodic batch rebuilds ~\citep{ssks21,xu2022proximitygraphmaintenancefast,xiao2024enhancing,xll+23,zth+23,xdb+25}. 
These methods typically operate by first excising the deleted node from the graph, then ``patching'' the graph by adding new connections among the removed node's former neighborhood.
However, most of these approaches offer no theoretical guarantees and typically come with both performance and recall penalties on real-world datasets, especially for mass deletion.

\noindent\textbf{Graph Sparsification and Random Walks.} Given a graph, an elementary way to explore it is through random walks, which is both practical~\citep{hyl17} and has rich theoretical connections to electrical networks~\citep{ds84,t91} and spectral graph theory~\citep{c97,s07}. Important statistics of random walks, including hitting time and commuting time~\citep{af02} can be computed by solving linear systems in the graph Laplacian matrix~\citep{merris1994laplacian}. A popular approach to improve the efficiency of solving Laplacian linear systems is via \emph{spectral sparsification} that reduces the number of edges in the graph while preserving all Laplacian quadratic forms by sampling according to the effective resistances of edges~\citep{ss11,bss12}. All state-of-the-art solvers for Laplacian systems utilize spectral sparsification~\citep{st04,kmp10,ps14,ckm+14,ks16}. In this work, we focus on another type of sparsification, based on sampling by row norms~\citep{dk01,fkv04,kv17}. It is conceptually simpler and more efficient to implement, though it provides weaker guarantees.
\section{Preliminaries}

\label{sec:preli:delete}
\noindent\textbf{HNSW.} The Hierarchical Navigable Small World (HNSW) data structure proposed in~\cite{my20} is a graph-based ANN data structure that utilizes a \emph{hierarchical} structure. In particular, it is a sequence of undirected graphs\footnote{In some library implementations such as \texttt{FAISS}, directed graphs are used instead.} with each graph in the sequence called a ``layer''. For the reminder of this paper, we assume there are $L$ layers of graphs, with the top one as the $L$-th layer and the bottom one as the 1st layer. The bottom layer of the HNSW contains one vertex for each point in the dataset, and each other layer contains a random subset of points in the layer below. In addition to internal edges within a layer, the two vertices representing the same point in two consecutive layers also have a vertical edge between them so that a search could traverse between layers. 

The fundamental operation of the HNSW is the search operation. Given a query point $q$, the search starts at the $L$-th layer with an entry point. At any timestamp $t$, we let $u_t$ be the point where the query $q$ is currently on with $u_0$ being the entry point, then we move $q$ to $u_{t+1}:=\arg\min_{v\in N(u_t)} \|q-v\|$. When the search can no longer make progress, it uses the vertical edge to move one layer down, and repeat the process until the bottom layer. In the reminder of this paper, we refer to this procedure as the ``greedy search''. The insertion is then executed by running greedy search on the point-to-be-inserted (with more entry points to move down layers) and add edges along the way. For a more comprehensive overview of HNSW and related algorithms, see Appendix~\ref{sec:app:hnsw}.

HNSW and other graph-based nearest neighbor search data structures have also been studied through the lens of theory.~\cite{l18} studies the performance of greedy search when the size of the dataset $n=2^d$.~\cite{fu12fast} analyzes the time and space complexity of searching in the monotonic graphs, and it requires the query to be one of the points in the dataset.~\cite{ps20} proves that the greedy search can be done in sublinear time on the plain nearest neighbor graph in both the dense and sparse regime, and adding vertical edges as in HNSW effectively reduces the number of steps to find the correct nearest neighbor.~\cite{ssx23} relaxes the assumption of nearest neighbor graphs to approximate nearest neighbor graphs.~\cite{ix23} crafts a condition called $\alpha$-shortcut reachability, and proves a class of graph-based algorithms are provably efficient for $\alpha$-shortcut reachable graphs.~\cite{lxi24} further speeds up the search process by using approximate distances instead of exact distances and proves probabilistic guarantees for this approximation.~\cite{dgm+24} considers a class of simplified HNSW graphs and proves the navigability under certain construction algorithm.~\cite{om24} shows that adaptively choosing the entry point provably functions better than using a fixed entry point.

\noindent\textbf{Deletion Strategies for Graph-based ANN.}
One could alternatively interpret HNSW as a multi-layer version of the DiskANN data structure~\citep{jds+19}, with improved navigability between distant clusters. While these data structures are naturally attuned for insertions, deletion is much more challenging and is heavily based on heuristics. What would be some metrics we'd want a good deletion algorithm to have? 1).\ Memory usage. We would like the space consumed by the data structure to be proportional to the number of points stored in the data structure; 2).\ Efficiency. We would like the deletion algorithm to be performed efficiently, and ideally, as efficient as the search algorithm; 3).\ Recall. The deletion algorithm should not impair the recall performance of the algorithm. If one only cares about the recall, a theoretically ``optimal'' strategy could be developed:

\begin{theorem}
\label{thm:nn_exact}
Let $P\subset \R^d$ be an $n$-point dataset preprocessed by an HNSW and $p\in P$ be a point to-be-deleted that is not the entry point. Fix a query point $q\in \R^d$ and suppose the search reaches layer $l\in \{1,\ldots,L\}$, let $N(p)$ denote the neighborhood of $p$ at layer $l$. Suppose $q$ reaches $N(p)$, visits and leaves $p$. Consider the deletion procedure that removes $p$ at layer $l$ and forms a clique over $N(p)$, then the search of $q$ on the new graph is equivalent to the search of $q$ on the old graph.
\end{theorem}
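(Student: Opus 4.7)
The plan is to trace the greedy walk on the new graph step by step and show it visits exactly the same sequence of vertices as the old walk, with $p$ simply skipped.

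Let $u_0 \to u_1 \to \cdots \to u_T$ denote the greedy walk on the old graph at layer $l$, with $u_t = p$, and write $w := u_{t+1}$. By the assumption that $q$ visits and leaves $p$, the walk is strictly monotone in $\|q-\cdot\|$, and applying the greedy rule at $p$ gives $w = \argmin_{v \in N(p)} \|q-v\|$ with $\|q-w\| < \|q-p\|$. Two observations drive everything. First, a vertex $v \notin N(p)$ has its neighborhood unchanged by the deletion (since $p \notin N_{\mathrm{old}}(v)$), whereas a vertex $v \in N(p)$ has $N_{\mathrm{new}}(v) = (N_{\mathrm{old}}(v) \setminus \{p\}) \cup (N(p) \setminus \{v\})$, and every vertex in $N(p)\setminus\{v\}$ is at distance at least $\|q-w\|$ from $q$. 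Second, $u_{t-1}$ is the \emph{only} vertex among $u_0,\ldots,u_{t-1}$ that lies in $N(p)$: for any earlier $u_j \in N(p)$ with $j < t-1$, the greedy rule at $u_j$ (for which $p$ is a candidate neighbor) would give $\|q-u_{j+1}\| \le \|q-p\|$, contradicting strict monotonicity $\|q-u_{j+1}\| > \|q-u_t\| = \|q-p\|$.

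Combining these, I would walk through the new graph as follows. For $i < t-1$ the vertex $u_i$ lies outside $N(p)$, so its neighborhood is unchanged and the new walk reproduces $u_0,\ldots,u_{t-1}$. At $u_{t-1}$, the old argmin was $p$, so every vertex in $N_{\mathrm{old}}(u_{t-1}) \setminus \{p\}$ has distance $\ge \|q-p\| > \|q-w\|$, while $w \in N(p) \setminus \{u_{t-1}\}$ achieves $\|q-w\|$; hence the new argmin is $w$, and the walk jumps directly from $u_{t-1}$ to $w = u_{t+1}$. At $w$ and beyond, strict monotonicity gives $\|q-u_k\| < \|q-w\|$ for every $k \ge t+2$, so for any $u_k \in N(p)$ with $k \ge t+1$ the newly-added clique neighbors, all at distance $\ge \|q-w\| > \|q-u_k\|$, cannot beat the old argmin $u_{k+1}$; vertices $u_k \notin N(p)$ are unaffected. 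The same bound shows $u_T$ remains a local minimum in the new graph, so the new walk traces exactly $u_0,\ldots,u_{t-1}, u_{t+1}, \ldots, u_T$, establishing equivalence.

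The main obstacle is the bookkeeping around the unique $N(p)$-entry vertex $u_{t-1}$: one must simultaneously rule out any earlier $N(p)$-entry (the strict-monotonicity contradiction above) and show that at $u_{t-1}$ the clique shortcut to $w$ is \emph{strictly} preferred to every non-$p$ neighbor. Once that is in place, the stretch past $w$ is automatic, because the greedy walk has strictly crossed below the distance threshold $\|q-w\|$ that caps every clique shortcut, so the clique can never again divert the walk.
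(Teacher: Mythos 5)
Your proof is correct and takes the same approach as the paper's: identify the $u_{t-1}\to p\to u_{t+1}$ window, show that $u_{t+1}=\argmin_{c\in N(p)}\|q-c\|$ becomes the new argmin at $u_{t-1}$ via the clique, and argue that the remainder of the walk is unaffected. In fact you are more thorough than the paper: you explicitly prove that $u_{t-1}$ is the \emph{only} vertex in $N(p)$ visited before $p$ (so no earlier step can be diverted by a clique shortcut), and that once the walk is strictly below the threshold $\|q-u_{t+1}\|$ no clique edge can ever beat the old argmin again --- both points the paper asserts (``the walk remains the same until it first hits $a$'', ``it proceeds as before'') without justification.
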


We defer the proof to Appendix~\ref{sec:app:missing}. The above theorem indicates that it is enough to design a data structure that emulates the subgraph without deleting $p$. Two obvious choices for an exact data structures are 1).\ Tombstone, where the subgraph structure is preserved and we will still visit $p$; 2).\ Clique, where all possible paths in $N(p)$ are preserved.

These two approaches share similar drawbacks. While the tombstone algorithm cannot ever free memory and reduces query throughput through lengthened walks, the clique algorithm densifies subgraphs and reduces throughput by increasing the number of distance calculations made per step. 
\section{\texttt{SPatch}: Vertex Deletion via Random Walk Preserving Sparsification}
\label{sec:softmax_walk}
To motivate our deletion algorithm, we develop a theoretical framework for analyzing HNSW that, {\it instead of walking to the nearest point over the neighborhood, performs a random walk with probability given by the softmax of the squared distance.}

Specifically, let $q$ be the query point and suppose the search is currently on point $u$. Instead of deterministically moving to $\mathrm{argmin}_{c\in N(u)}\|c-q\|$, we move to a random $c\in N(u)$ with probability $\frac{\exp(-r^2\cdot \|c-q\|^2)}{\sum_{v\in N(u)} \exp(-r^2\cdot \|v-q\|^2)}$. We call this search algorithm the \textbf{\emph{softmax walk}}. While this is different from greedy search, we empirically validate that the performance of the softmax walk for large enough $r$ is very similar to that of the greedy walk (see Section~\ref{sec:exp}). This simple modification enables us to analyze the graph search from the perspective of a random walk. Instead of interpreting the HNSW graph as an unweighted graph, we can now view it as a \emph{weighted} graph with edge weight determined by the query point $q$: letting $u$ be the point in the graph that the search for $q$ is currently on, for any edge $\{u, v\}$, the edge weight is $w(u, v)=\exp(-r^2\cdot \|q-v\|^2)$. The HNSW search then performs a random walk on this weighted graph. We could interpret it as a probabilistic HNSW where:
\begin{enumerate}
    \item Instead of greedy search, we use \emph{random walk};
    \item Instead of taking the top-$t$ nearest neighbors, we \emph{sample $t$ edges proportional to edge weights}.
\end{enumerate}

The edge weights we choose are equal to the Gaussian kernel between the query and the dataset. Intuitively, this is particularly useful to emulate greedy search because it helps differentiate the distances: if $q$ is far away from $v$, then $\exp(-r^2\cdot \|q-v\|^2)$ would be much smaller than a $v'$ that is closer to $q$. Hence, under a proper choice of $r$, the Gaussian kernel function assigns exponentially high probability to nearby points, emulating the greedy search.

We remark that graphs with Gaussian kernel weights are very widely used. Note, however, the difference in our case: the weight of an edge $\{u,v\}$ is not $\exp(-r^2\cdot\|u-v\|^2)$ as it would normally be, but rather, the edge is viewed as two directed edges $u\rightarrow v$ and $v\rightarrow u$, with respective weights $\exp(-r^2\cdot\|q-v\|^2)$ and $\exp(-r^2\cdot\|q-u\|^2)$. Thus, the edge weight is independent of its source point, and varies by the query $q$ being searched.

\iffalse
\noindent\textbf{Graph Construction via Sparsification.}
\fi

Thus, the graph constructed by our twin HNSW formulation can be formally viewed as the result of randomized graph sparsification. 
This algorithm can be integrated into the multi-layer HNSW structure by repeating the procedure from layer $l$ to $0$. This weighted graph sparsification perspective provides the motivation and theoretical foundation of our deletion algorithm.

As discussed in Section~\ref{sec:preli:delete}, we would like a deletion algorithm that is optimized for memory usage, efficiency, and recall. Since we are now working with random walks, we first prove a variant of Theorem~\ref{thm:nn_exact} in our model: instead of preserving the exact search path after deletion, we aim to preserve the random walk probabilities after deletion.

\begin{theorem}
\label{thm:nn_prob}
Let $G=(V, E, w)$ be a weighted graph. Define the random walk under the weights $w$ for any edge $\{u, v\}\in E$ as $\Pr[u\rightarrow v\mid w] =  \frac{w(u, v)}{\deg(u)}$ where $\deg(u)=\sum_{z\in N(u)}w(u, z)$. Let $p$ be the point to be deleted. For all $u, v\in N(p)$, define the new weights $w'(u, v)$ as $w'(u, v) =  w(u, v) + \frac{w(u, p)\cdot w(p, v)}{\deg(p)}$. Let $E(p)=\{\{u, p\}: u\in N(p) \}$ and $C(p)=\{\{u, v\}: u\neq v, u, v\in N(p)  \}$. Then for the new graph $G'=(V\setminus \{p\}, E\setminus E(p)\cup C(p), w')$, we have
\begin{align*}
    \Pr[u\rightarrow v\mid w'] = & ~ \Pr[(u\rightarrow p \rightarrow v) \lor (u\rightarrow v)\mid w].
\end{align*}
\end{theorem}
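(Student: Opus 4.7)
The plan is to verify the identity by unfolding both sides with the definitions and comparing term by term. First I would expand the left-hand side as
\[
\Pr[u \to v \mid w'] = \frac{w'(u,v)}{\deg'(u)} = \frac{1}{\deg'(u)}\left[w(u,v) + \frac{w(u,p)\,w(p,v)}{\deg(p)}\right],
\]
directly substituting the definition of $w'(u,v)$ from the theorem statement. For the right-hand side I would observe that the one-step event $\{u \to v\}$ and the two-step trajectory $\{u \to p \to v\}$ are disjoint events in the Markov chain induced by $w$, so by the Markov property
\[
\Pr\!\big[(u \to p \to v) \lor (u \to v)\mid w\big] = \frac{w(u,v)}{\deg(u)} + \frac{w(u,p)}{\deg(u)}\cdot \frac{w(p,v)}{\deg(p)} = \frac{w'(u,v)}{\deg(u)}.
\]

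With the two numerators now in exact agreement by construction of $w'$, the theorem reduces to the identity $\deg'(u) = \deg(u)$. I would establish this by first enumerating $u$'s neighborhood in $G'$ as $N'(u) = (N(u)\setminus\{p\}) \cup (N(p)\setminus\{u\})$, and then splitting the sum $\deg'(u) = \sum_{z\in N'(u)} w'(u,z)$ into two groups: (i) edges inherited unchanged from $G$, where $w'(u,z) = w(u,z)$, and (ii) the new clique edges $\{u,z\}$ for $z \in N(p)\setminus\{u\}$, where $w'(u,z) = w(u,z) + w(u,p)w(p,z)/\deg(p)$. The key algebraic step is to factor $w(u,p)/\deg(p)$ out of the clique contributions, producing $(w(u,p)/\deg(p))\cdot \sum_{z \in N(p)\setminus\{u\}} w(p,z)$, and then apply $\sum_{z \in N(p)} w(p,z) = \deg(p)$ so that this clique contribution reinstalls exactly the $w(u,p)$ mass that was lost when the edge $\{u,p\}$ was excised.

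The main obstacle I anticipate is the bookkeeping at this final step: a naive accounting in fact leaves a residual of $-w(u,p)\,w(p,u)/\deg(p)$, which is precisely the probability mass of the two-step return $u \to p \to u$ that has no natural destination once $p$ is removed. To close this gap cleanly, I would either augment $C(p)$ with a self-loop at each $u \in N(p)$ of weight $w(u,p)\,w(p,u)/\deg(p)$, or read the statement as restricted to $v \neq u$ with the return mass absorbed by renormalization. Either convention makes $\deg'(u) = \deg(u)$ hold on the nose, so that the matched numerators immediately yield $\Pr[u \to v \mid w'] = \Pr[(u \to p \to v) \lor (u \to v) \mid w]$, as claimed.
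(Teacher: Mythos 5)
Your argument follows essentially the same route as the paper's: expand $\Pr[u\to v\mid w']=w'(u,v)/\deg'(u)$, expand the right-hand side by one step of the Markov property to get $w'(u,v)/\deg(u)$, and then reduce the claim to $\deg'(u)=\deg(u)$, which you verify by splitting the degree sum into inherited edges and new clique edges and applying $\sum_{z\in N(p)}w(p,z)=\deg(p)$.

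What you do better than the paper is flag a real subtlety. The paper's computation of $\deg'(u)$ sums $w'(u,v)$ over \emph{all} of $N(p)$, including $v=u$, so that $\sum_{v\in N(p)}w(u,p)w(p,v)/\deg(p)$ telescopes cleanly to $w(u,p)$. But $C(p)$ is defined with $u\neq v$, so the graph $G'$ as stated has no self-loop at $u$; if you compute $\deg'(u)$ honestly over $N(p)\setminus\{u\}$, you get $\deg'(u)=\deg(u)-w(u,p)w(p,u)/\deg(p)$, exactly the residual you identified, which is the probability mass of the two-step return $u\to p\to u$. This mass is also absent from the right-hand side (the union $(u\to p\to v)\lor(u\to v)$ with $v\neq u$ omits it), so the two sides agree as \emph{un}-normalized measures, but the left-hand side is a genuine one-step transition kernel on $G'$ only if it sums to $1$. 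Your two remedies are both sound: adding a self-loop at each $u\in N(p)$ of weight $w(u,p)w(p,u)/\deg(p)$ makes the star-mesh transform exact as stated (this is the standard Schur-complement convention, where the deleted vertex's return mass becomes a self-loop), and restricting to $v\neq u$ while reading both sides as conditional on not returning to $u$ gives the same identity after renormalizing by $1-w(u,p)w(p,u)/(\deg(u)\deg(p))$ on each side. The paper's proof is correct in spirit but glosses over this point; your version closes the gap. One small remark: your decomposition $N'(u)=(N(u)\setminus\{p\})\cup(N(p)\setminus\{u\})$ is the right one, and you should make sure the first group is $N(u)\setminus(N(p)\cup\{p\})$ so that edges $\{u,z\}$ with $z\in N(u)\cap N(p)$ are not double-counted; they belong to the clique group and carry the updated weight $w'$.
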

The proof is deferred to Appendix~\ref{sec:app:missing}. 
This theorem is known as a ``star-mesh transform'', and arises from Schur complements and Gaussian elimination~\citep{rosen1924new,dorfler2012kron,wagner2018semi}. 
The new edges $C(p)$ induce a clique over the neighborhood of $p$, making the graph denser than before. To sparsify the graph, we adapt a simple strategy: sampling edges according to edge weights. The complete algorithm, called \texttt{SPatch}: Sparsified Patching, is given in Algorithm~\ref{alg:sparsify} and shown in Figure~\ref{fig:delete}.

\begin{algorithm}[!ht]
\caption{\texttt{SPatch}: Sparsified Patching}
\label{alg:sparsify}
\begin{algorithmic}[1]
\Procedure{SPatch}{$G(V,E); p\in V; r,\alpha,t\geq 0$}
\State \Comment{$p$ is the point to be deleted from the graph.}
\State Define $w(u, v)=\exp(-r^2\cdot \|u-v\|^2)$ for all $u,v$
\State {Let $N_{\rm in}(p)$ be in-neighbors of $p$ and $N_{\rm out}(p)$ be out-neighbors of $p$}
\State {$t= \alpha \cdot (|N_{\rm in}(p)|+|N_{\rm out}(p)|)$}
\State {$\deg(p)= \sum_{u\in N_{\rm in}(p)} w(u, p)+\sum_{v\in N_{\rm out}(p)} w(p, v)$}
\For{{$u\in N_{\rm in}(p)$ and $v\in N_{\rm out}(p)$}}
\State {$w'(u, v)=w(u,v) + \frac{w(u, p)\cdot w(p, v)}{\deg(p)}$ \Comment{Compute new weights for all pairs between in- and out-neighbors.}}
\EndFor
\State {$E'= \{(u,v): u\in N_{\rm in}(p), v\in N_{\rm out}(p), (u,v)\in  \texttt{Top-t}(w'(u,v))\}$}
\State $V\gets V\setminus \{p\}$ \Comment{Delete $p$ from the graph.}
\State $E(p)=\{(u,v): u\in N_{\rm in}(p), v\in N_{\rm out}(p) \}$
\State $E\gets E\setminus E(p)\cup E'$ \quad\Comment{Remove existing edges from $N_{\rm in}(p)$ to $N_{\rm out}(p)$, add edges from $E'$.}
\EndProcedure
\end{algorithmic}
\end{algorithm}

\begin{figure*}[!ht]
\centering
\begin{subfigure}[b]{0.22\textwidth}
\centering
\includegraphics[scale=0.15]{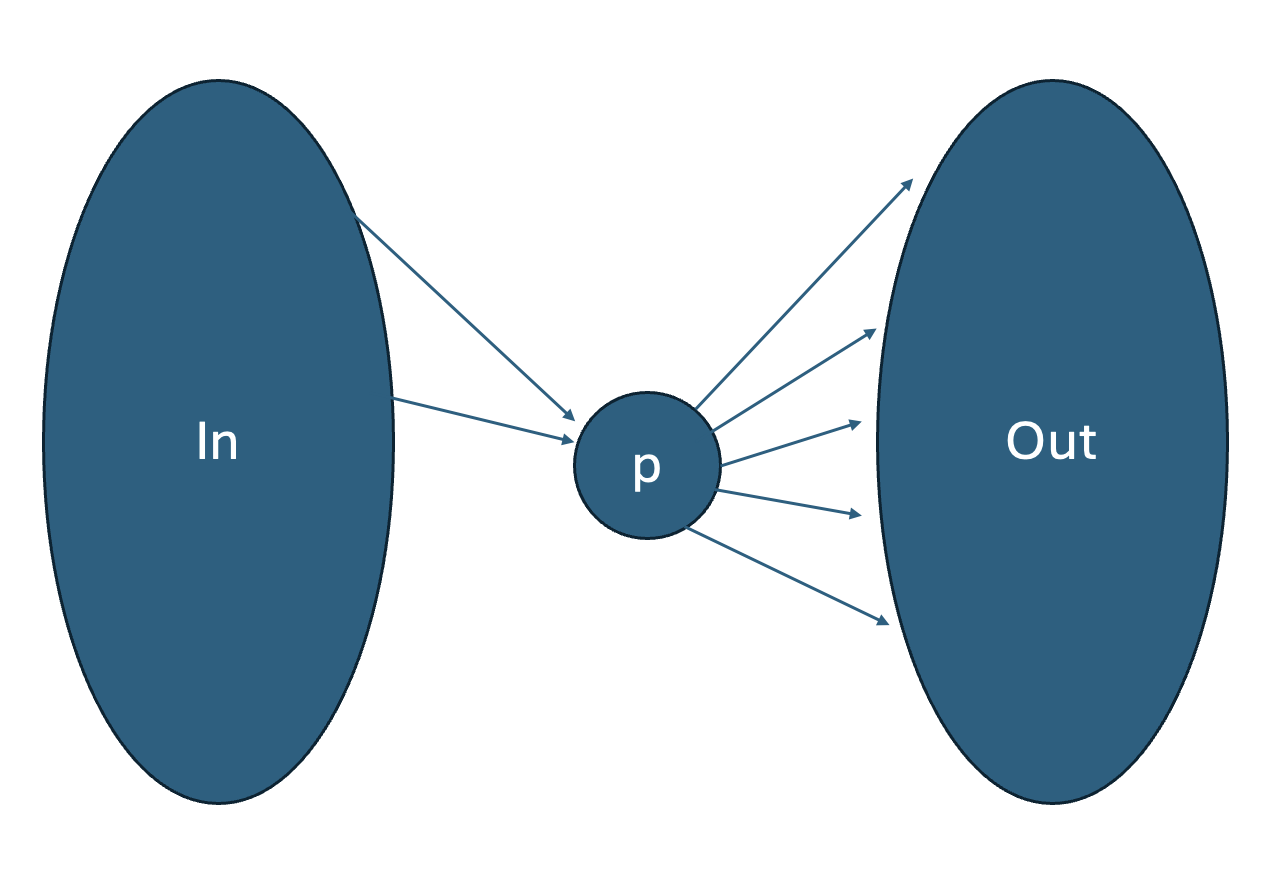}
\caption{$p$ and its neighbors.}
\end{subfigure}
\hfill
\begin{subfigure}[b]{0.22\textwidth}
\centering
\includegraphics[scale=0.15]{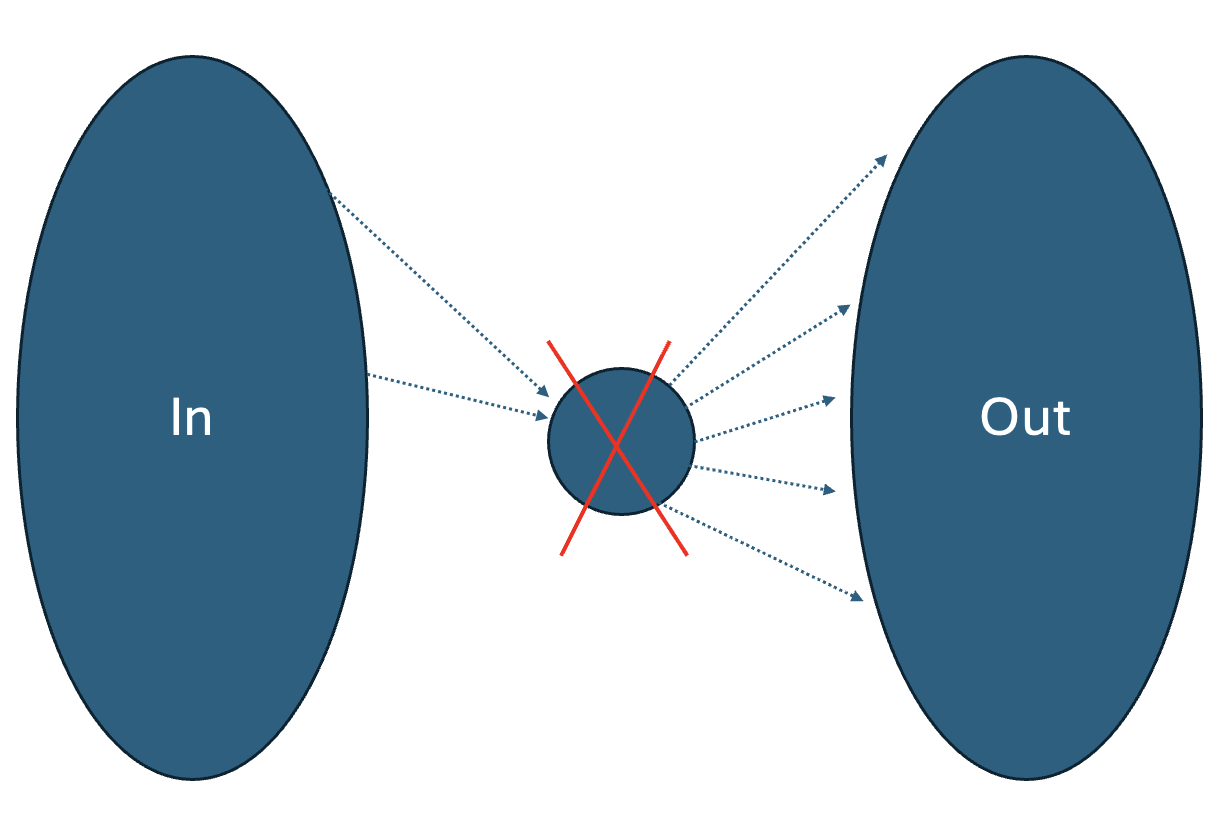}
\caption{Delete $p$.}
\end{subfigure}
\hfill
\begin{subfigure}[b]{0.22\textwidth}
\centering
\includegraphics[scale=0.15]{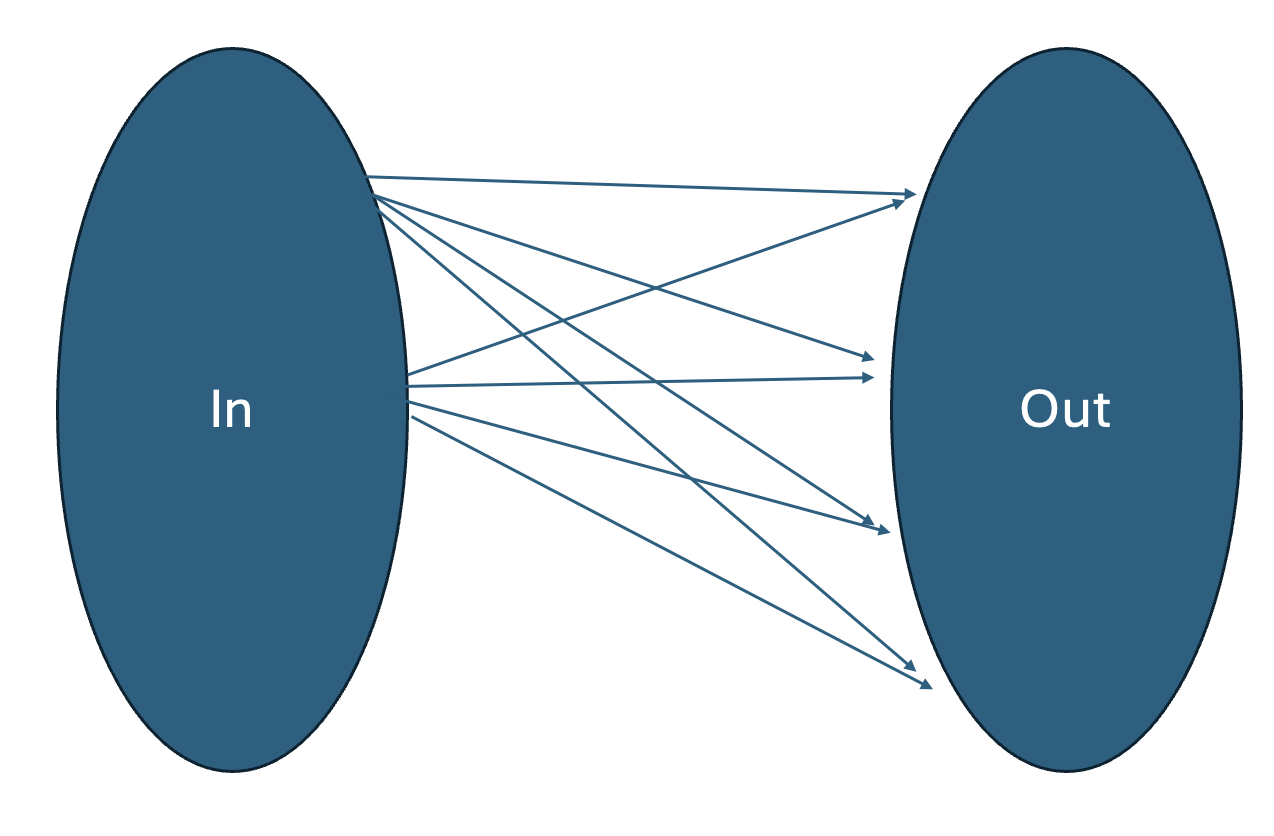}
\caption{Form a clique.}
\end{subfigure}
\hfill
\begin{subfigure}[b]{0.22\textwidth}
\centering
\includegraphics[scale=0.15]{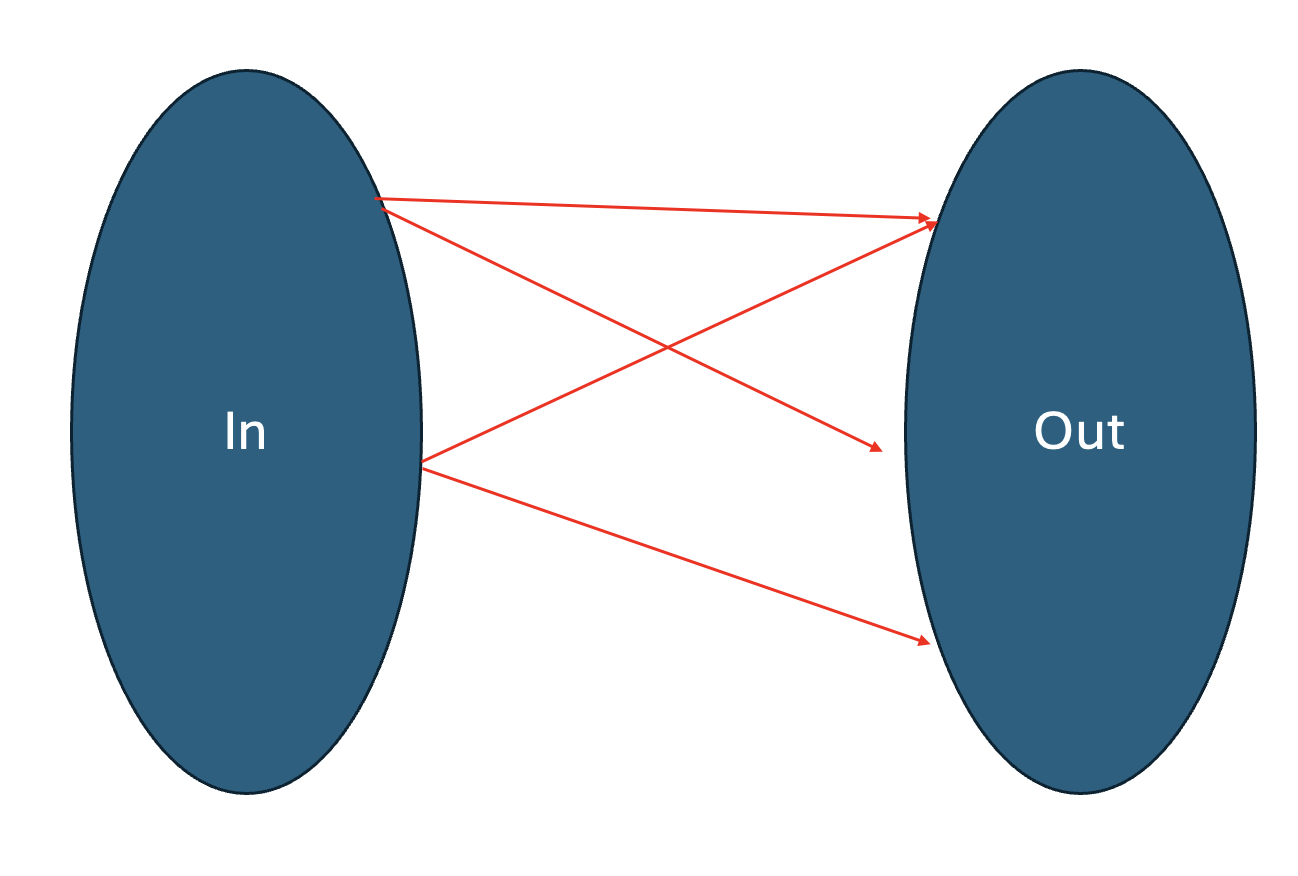}
\caption{Sparsify the clique.}
\end{subfigure}
\caption{The deletion procedure of Algorithm~\ref{alg:sparsify}. It proceeds by first forming a clique over the neighborhood of a deleted point, and then sparsifies the clique according to edge weights.}
\label{fig:delete}
\end{figure*}

To prove theoretical guarantees for Algorithm~\ref{alg:sparsify}, we need to introduce a few linear algebraic definitions related to graphs.

\begin{definition}
Let $G=(V, E, w)$. Let $W\in \R^{m\times m}$ be the diagonal matrix for edge weights, and let $B\in \R^{m\times n}$ be the signed edge-vertex incidence matrix, i.e., each row of $B$ corresponds to an edge and it has two nonzero entries on the two endpoints of the edge, with one being $+1$ and the other being $-1$ randomly assigned. The graph Laplacian matrix of $G$ is defined as $L =  B^\top W B$. Equivalently, let $A\in \R^{n\times n}$ be its weighted adjacency matrix and $D\in \R^{n\times n}$ be its diagonal degree matrix, then we have $L=D-A$.
\end{definition}
Graph Laplacians are among the most important linear operators associated with graphs, and have a rich literature~\citep{merris1994laplacian,c97,s07}. In the following discussion, we consider the Laplacian matrix of the clique over $N(p)$. Sparsifying by edge weights is equivalent to sparsifying by the \emph{squared row norms} of $\sqrt W B$, which provides an additive error approximation in terms of the Frobenius norm~\citep{dk01,fkv04,kv17}.
\begin{theorem}
\label{thm:sparsify}
Let $G=(V, E, w)$ and $L\in \R^{n\times n}$ be its Laplacian matrix. Let $\epsilon\in (0, 1)$. Suppose we generate a matrix $\wt C\in \R^{s\times n}$ by sampling each row of $\sqrt W B$ proportionally to its squared row norm with $s=200\epsilon^{-2}$, and reweight row $i$ by $1/(p_i s)$ where $p_i=\|(\sqrt W B)_{i,*}\|_2^2/\|\sqrt W B\|_F^2$. Then with probability at least 0.99, $\|\wt C^\top \wt C-L\|_F \leq \epsilon \cdot\tr[W]$, where $\tr[W]$ is the trace of matrix $W$.
\end{theorem}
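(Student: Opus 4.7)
The plan is to follow the standard length-squared (or norm-) sampling argument of~\cite{dk01,fkv04,kv17} applied to the matrix $A := \sqrt{W} B \in \R^{m\times n}$, for which $A^\top A = B^\top W B = L$. The reweighting scheme in the statement should be read as scaling each sampled row by $1/\sqrt{p_i s}$, so that the squared scaling (i.e.\ the weight in the outer product $\wt C^\top \wt C$) is $1/(p_i s)$. Under this normalization, $\wt C^\top \wt C$ becomes an average of $s$ i.i.d.\ rank-one matrices, and the goal reduces to a second-moment computation followed by Markov's inequality.

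Concretely, I would first verify unbiasedness: writing $X_j := \frac{1}{p_{i_j} s}\,A_{i_j,*}^\top A_{i_j,*}$ for the $j$-th sampled row, a direct calculation gives $\E[X_j] = \sum_i p_i \cdot \tfrac{1}{p_i s} A_{i,*}^\top A_{i,*} = L/s$, and hence $\E[\wt C^\top \wt C] = L$. Next, because the samples are independent and the $X_j$'s are identically distributed,
\begin{equation*}
\E\bigl[\|\wt C^\top \wt C - L\|_F^2\bigr] = \var\Bigl(\sum_{j=1}^s X_j\Bigr) \le \sum_{j=1}^s \E\bigl[\|X_j\|_F^2\bigr] = s\cdot \E\bigl[\|X_1\|_F^2\bigr].
\end{equation*}
The key computation is then
\begin{equation*}
\E\bigl[\|X_1\|_F^2\bigr] = \sum_i p_i \cdot \frac{1}{p_i^2 s^2}\,\|A_{i,*}\|_2^4 = \frac{1}{s^2}\sum_i \frac{\|A_{i,*}\|_2^4}{p_i} = \frac{\|A\|_F^2}{s^2}\sum_i \|A_{i,*}\|_2^2 = \frac{\|A\|_F^4}{s^2},
\end{equation*}
where the third equality uses the explicit choice $p_i = \|A_{i,*}\|_2^2/\|A\|_F^2$ (this is precisely why length-squared sampling is optimal for the second moment). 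Combining, $\E[\|\wt C^\top \wt C - L\|_F^2] \le \|A\|_F^4/s$.

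With $s = 200\epsilon^{-2}$, Markov's inequality applied to the nonnegative random variable $\|\wt C^\top \wt C - L\|_F^2$ yields that with probability at least $0.99$,
\begin{equation*}
\|\wt C^\top \wt C - L\|_F^2 \le 100 \cdot \frac{\|A\|_F^4}{s} = \frac{\epsilon^2 \|A\|_F^4}{2},
\end{equation*}
so $\|\wt C^\top \wt C - L\|_F \le \epsilon \|A\|_F^2/\sqrt{2}$. Finally I need to translate $\|A\|_F^2$ into $\tr[W]$: each row of $B$ corresponds to an edge $\{u,v\}$ and has exactly two nonzero entries $\pm 1$, so $\|B_{i,*}\|_2^2 = 2$ and thus $\|A_{i,*}\|_2^2 = 2 w_i$, giving $\|A\|_F^2 = 2\tr[W]$. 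Substituting yields $\|\wt C^\top \wt C - L\|_F \le \sqrt{2}\,\epsilon\, \tr[W]$, and the stated bound follows after absorbing the $\sqrt{2}$ constant into the sample complexity (i.e.\ the constant $200$ is chosen loosely).

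The only mildly non-routine step is the variance calculation in the second display: one must be careful that the $1/p_i$ factor introduced by the scaling cancels exactly against the sampling probability, leaving a clean $\|A\|_F^4$. Everything else (unbiasedness, Markov, and identifying $\|A\|_F^2$ with a trace of $W$) is mechanical, so I expect this to be the main obstacle, though it is a standard identity in the length-squared sampling literature.
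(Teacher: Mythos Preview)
Your proposal is correct and follows essentially the same route as the paper: both set $C=\sqrt{W}B$, show unbiasedness, compute $\E[\|X_1\|_F^2]=\|C\|_F^4/s^2$ via the identity $\sum_i \|C_{i,*}\|_2^4/p_i=\|C\|_F^4$ specific to length-squared sampling, bound $\E[\|\wt C^\top\wt C-L\|_F^2]\le\|C\|_F^4/s$, apply Markov, and finish with $\|C\|_F^2=2\tr[W]$. The only cosmetic difference is that you invoke additivity of variance for independent summands directly, whereas the paper expands the cross terms $\E[\tr[X_iX_j]]$ explicitly; both end with the same harmless constant mismatch (a stray factor of $\sqrt{2}$ or $2$) that is absorbed in the choice of $s$.
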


The proof is deferred to Appendix~\ref{sec:app:missing}. Since Theorem~\ref{thm:sparsify} sparsifies the graph by sampling edges, the resulting matrix $\wt C^\top \wt C$ also forms a weighted graph, which we denote by $G'=(V, E', w')$. Compared to the standard spectral sparsification of graphs~\citep{ss11}, our sparsification scheme does not necessarily preserve the \emph{connectivity} of the resulting graph $G'$. Intuitively, sampling by edge weights is likely to ensure a cluster of close points to be well-connected, but it falls short when there are intercluster edges with small edge weights. However, we only perform the sparsification process on the \emph{bottom layer} of HNSW, whose role is to search through edges within a cluster, after intercluster connections have been effectively handled by top layers. This is also confirmed by our experiments (see Section~\ref{sec:exp}). Hence, we assume the sparsified graph $G'$ remains connected. 

What does a Frobenius norm error approximation of the Laplacian imply for random walks? We prove that it approximately preserves the \emph{hitting time} of the random walk.

\begin{definition}
\label{def:hitting_time}
Let $G=(V, E, w)$ be a graph and $u, v\in V$. The \emph{hitting time from $u$ to $v$}, denoted by $h_G(u, v)$, is the expected number of steps for a random walk starting at $u$ to reach $v$. When $G$ is clear from context, we denote it by $h(u, v)$.
\end{definition}

Our main result is that an additive Frobenius norm error approximation of Laplacian gives a multiplicative-additive error approximation on the hitting time.
\begin{theorem}[Informal version of Theorem~\ref{thm:hitting_time_formal}]
\label{thm:hitting_time_informal}
Let $G=(V, E, w)$ be a graph, let $G'=(V, E', w')$ be the graph induced by Theorem~\ref{thm:sparsify}, and suppose $G'$ is connected. Let $d_{\min},d_{\max}$ be the min and max degree of $G$, and let $\phi(G)$ be the edge expansion of the graph $G$. For any $u\neq v\in V$,
\begin{align*}
     |h_G(u, v) - h_{G'}(u, v)| 
    \leq & ~ \frac{\epsilon\tr[W]}{d_{\min}} h_G(u, v) + \epsilon \tr[W]^2 \left(\frac{\phi(G)^2}{d_{\max}}-\epsilon\tr[W]\right)^{-2} 
\end{align*}
holds with probability at least 0.99, where $\phi(G) =  \min_{S\subseteq V} \frac{\sum_{u\sim v, u\in S, v\not\in S} w(u, v)}{\min\{|S|, |V|-|S| \}}$.
\end{theorem}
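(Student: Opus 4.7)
The hitting time $h_G(u,v)$ admits the closed form $h_G(u,v) = e_u^\top Z_G \mathbf{1}$, where $Z_G := (I - Q_G)^{-1}$ is the fundamental matrix of the random walk on $G$ absorbed at $v$, and $Q_G$ is the substochastic matrix obtained from the transition matrix $P_G = D_G^{-1} A_G$ by deleting the row and column indexed by $v$ (the analogous identities hold for $G'$). My plan is to apply the resolvent identity
\begin{equation*}
Z_G - Z_{G'} \;=\; Z_G\,(Q_G - Q_{G'})\,Z_{G'},
\end{equation*}
which yields $h_G(u,v) - h_{G'}(u,v) = e_u^\top Z_G\,(Q_G - Q_{G'})\,h_{G'}$, and then to bound the three factors on the right-hand side.

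First I would translate the Frobenius bound of Theorem~\ref{thm:sparsify} on $L_G - L_{G'}$ into a bound on $Q_G - Q_{G'}$. Using $P = I - D^{-1} L$ and the splitting
\begin{equation*}
P_G - P_{G'} \;=\; D_G^{-1}(L_{G'} - L_G) \;+\; (D_G^{-1} - D_{G'}^{-1})\,L_{G'},
\end{equation*}
together with the fact that the Frobenius norm controls both the Laplacian difference and the induced degree difference, one obtains $\|Q_G - Q_{G'}\|_F = O(\epsilon \tr[W]/d_{\min})$. The factor $1/d_{\min}$ is exactly the degree-normalization that converts a Laplacian-scale perturbation into a transition-matrix-scale perturbation; this is the source of $d_{\min}$ appearing in the theorem.

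Next I would exploit the identity $\|e_u^\top Z_G\|_1 = h_G(u,v)$, since the row sums of the fundamental matrix equal expected times-to-absorption, to peel off the first factor as a multiplier of $h_G(u,v)$. This reduces the task to controlling $\|(Q_G - Q_{G'})\,h_{G'}\|_\infty$. I would decompose $h_{G'}$ into a constant-vector part plus a mean-zero fluctuation: the constant part, combined with the row-sum bound on $Q_G - Q_{G'}$, produces precisely the multiplicative contribution $(\epsilon\tr[W]/d_{\min})\cdot h_G(u,v)$, while the fluctuation part is controlled by the spectral gap of the random walk on $G'$. By Cheeger's inequality the gap of $G$ is at least $\phi(G)^2/(2d_{\max})$, and the Frobenius-norm sparsifier of Theorem~\ref{thm:sparsify} preserves Laplacian quadratic forms up to additive $\epsilon\tr[W]$, so the gap of $G'$ is at least $\phi(G)^2/d_{\max} - \epsilon\tr[W]$ up to absolute constants. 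Applying the inverse-gap bound twice — once to bound the relevant norm of $h_{G'}$ and once to bound the contribution of $Z_{G'}$ itself — then yields the squared denominator in the additive term.

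The main obstacle will be transferring the spectral gap from $G$ to $G'$ using only the Frobenius-norm sparsification guarantee, which is weaker than the spectral sparsifier guarantee of~\cite{ss11}: it does not preserve all Laplacian quadratic forms multiplicatively, only up to an additive $\epsilon\tr[W]$ slack. Verifying that this additive slack propagates cleanly through Cheeger and through the two resolvent inversions, producing exactly the denominator $(\phi(G)^2/d_{\max} - \epsilon\tr[W])^2$ with the correct power of two, is the delicate part of the argument. A secondary subtlety is choosing the constant/fluctuation split of $h_{G'}$ so that the multiplicative and additive pieces decouple without introducing spurious $\log n$ or $n$ factors that would spoil the claimed bound.
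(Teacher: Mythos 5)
Your route is genuinely different from the paper's. You work with the fundamental matrix $Z_G=(I-Q_G)^{-1}$ of the absorbing random walk and the resolvent identity $Z_G - Z_{G'} = Z_G (Q_G - Q_{G'}) Z_{G'}$, whereas the paper never touches the transition matrix at all: it invokes Tetali's identity $h_G(u,v) = \tfrac12 \sum_{z} \deg_G(z)\bigl(R_G(u,v)+R_G(v,z)-R_G(u,z)\bigr)$, perturbs the degrees (Corollary~\ref{cor:deg_approx}) and the effective resistances (Corollary~\ref{cor:er_approx}, which combines Lemma~\ref{lem:p_inv} on pseudo-inverse perturbation with Cheeger and Weyl), and then re-sums to split the error into a multiplicative part in $h_G(u,v)$ and an additive part in $\sum_z \deg_G(z) = 2\tr[W]$. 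That Tetali path is attractive precisely because effective resistance $R(u,v)=\chi_{u,v}^\top L^\dagger \chi_{u,v}$ and degrees $\deg(z)=e_z^\top L e_z$ are both directly and cleanly controlled by the $\ell_2$/Frobenius perturbation of the \emph{unnormalized} Laplacian, which is exactly what Theorem~\ref{thm:sparsify} supplies.

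Your plan has gaps that are not cosmetic. First, the intermediate claim $\|Q_G - Q_{G'}\|_F = O(\epsilon \tr[W]/d_{\min})$ does not follow from the splitting $P_G - P_{G'} = D_G^{-1}(L_{G'}-L_G) + (D_{G'}^{-1}-D_G^{-1})L_{G'}$: the first term is fine, but the second involves $\|L_{G'}\|_F$, which in general scales like $\sqrt{\sum_z \deg(z)^2 + \sum_e w_e^2}$ and is not bounded by $d_{\min}$ times anything useful; the resulting contribution can dominate $\epsilon \tr[W]/d_{\min}$. Second, the spectral quantity governing $Z_{G'}=(I-Q_{G'})^{-1}$ is the spectrum of the \emph{grounded} (Dirichlet) normalized Laplacian with coordinate $v$ removed, not $\lambda_2$ of the combinatorial Laplacian $L_{G'}$; passing from the Frobenius guarantee on $L$ through Cheeger to a bound on $\|Z_{G'}\|$ requires relating those two spectra, which introduces additional degree-dependent factors you have not accounted for and is unlikely to reproduce the clean $(\phi(G)^2/d_{\max} - \epsilon\tr[W])^{-2}$ denominator. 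Third, the decomposition of $h_{G'}$ into ``constant plus mean-zero'' is delicate because $Q$ is substochastic, so $(Q_G-Q_{G'})\mathbf{1}$ is the difference of escape probabilities to $v$, not something directly controlled by the Laplacian error; the assertion that this gives ``precisely'' the multiplicative term is unjustified. Taken together, the resolvent-identity route may be salvageable with considerably more work, but it does not, as written, yield the theorem's bound, and it is substantially heavier than the Tetali argument the paper uses.
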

If our softmax walk algorithm was to perform its random walk without stopping at local minima, then Theorem~\ref{thm:hitting_time_informal} would say that if $u$ is the entry point and $v$ is the desired destination, then the expected numbers of steps to reach $v$ from $u$ before and after sparsification are similar. However, the softmax walk stops at local minima, therefore, a bound on hitting time \emph{does not} imply the walk could hit the correct destinations. Our experiments (Section~\ref{sec:exp}) suggest that instead of directly correlating to the correctness of the algorithm, hitting time is a good proxy when designing the sparsification algorithm, as Algorithm~\ref{alg:sparsify} has good recall, query speed, deletion time, and memory utilization.

Theorem~\ref{thm:sparsify} states that to obtain an additive Frobenius norm error of $\epsilon\cdot \tr[W]$, we need to sample $O(\epsilon^{-2})$ edges, and this in turn provides an error bound on the hitting time per Theorem~\ref{thm:hitting_time_informal}. How many edges do we need to sample in order to minimize the overall error? We show that in two common settings, the number of edges is linear in the number of points, $|N(p)|$:
\begin{corollary}
\label{lem:error_example}
Let $G=(V, E, w)$ be a weighted complete graph with $|V|=n$ and $G'=(V, E', w')$ be the induced graph by applying Theorem~\ref{thm:sparsify} to $G$. If $|E'|=O(\max_{u, v\in N(p)}h_G(u, v)\cdot n)$, then with probability at least 0.99, for any $u, v\in V$, $|h_G(u, v) - h_{G'}(u, v)|\leq \sqrt{n\cdot h_G(u, v)}$. given one of the two settings:
\begin{enumerate}
    \item Single cluster: for any $u, v\in V$, $w(u, v)=O(1)$;
    \vspace{-1.5mm}
    \item Many small clusters: there are $\sqrt{n}$ clusters of size $\sqrt{n}$. Within each cluster, the edge weights are $O(1)$, and between clusters, the edge weights are $O(1/n)$.
\end{enumerate}
\end{corollary}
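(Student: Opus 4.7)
I will combine Theorems~\ref{thm:sparsify} and~\ref{thm:hitting_time_informal} and verify the bound by case-by-case substitution. Theorem~\ref{thm:sparsify} uses $s = 200/\epsilon^2$ samples, so the hypothesis $|E'| = O(h_{\max}\cdot n)$ with $h_{\max} := \max_{u,v\in N(p)} h_G(u,v)$ is equivalent to fixing $\epsilon = \Theta(1/\sqrt{h_{\max} n})$. Plugging this $\epsilon$, together with case-specific values of $\tr[W]$, $d_{\min}$, $d_{\max}$, and $\phi(G)$, into the bound of Theorem~\ref{thm:hitting_time_informal} and checking that both summands are $O(\sqrt{n\cdot h_G(u,v)})$ completes the argument.

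For Case 1 (complete graph with $\Theta(1)$ weights), routine edge-counting gives $\tr[W] = \Theta(n^2)$, $d_{\min} = d_{\max} = \Theta(n)$, and a min-cut analysis yields $\phi(G) = \Theta(n)$. The near-uniform random walk has $h_G(u,v) = \Theta(n)$ by a coupon-collector argument, so $\epsilon = \Theta(1/n)$. For Case 2 ($\sqrt n$ clusters of size $\sqrt n$, with intra/inter weights $\Theta(1)/\Theta(1/n)$), intra-cluster edges dominate, giving $\tr[W] = \Theta(n^{3/2})$ and $d_{\min} = d_{\max} = \Theta(\sqrt n)$; isolating a single cluster is the $\phi$-minimizer, giving $\phi(G) = \Theta(1)$; and the commute-time identity $\text{commute}(u,v) = 2\tr[W]\cdot R^{\text{eff}}(u,v)$ with $R^{\text{eff}}(u,v) = \Theta(1)$ for inter-cluster pairs yields $h_G(u,v) = \Theta(n^{3/2})$, so $\epsilon = \Theta(n^{-5/4})$.

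Substituting, the multiplicative term $\epsilon \tr[W] h_G(u,v)/d_{\min}$ simplifies to $\Theta(h_G(u,v))$ in Case 1 and to $\Theta(h_G(u,v)/n^{1/4})$ in Case 2; in both, this is $O(\sqrt{n\cdot h_G})$ once the hitting-time bounds above are used. The additive term $\epsilon \tr[W]^2/(\phi(G)^2/d_{\max} - \epsilon \tr[W])^2$ then requires separate treatment of the two regimes: in Case 1 both parts of the denominator are of order $n$, and the constant hidden in $|E'| = O(h_{\max} n)$ can be tuned so that the difference remains $\Theta(n)$, yielding additive error $\Theta(1/\epsilon) = \Theta(n) = \Theta(\sqrt{n h_G})$; in Case 2 the subtraction is dominated by $\epsilon \tr[W] = \Theta(n^{1/4}) \gg \Theta(1/\sqrt n) = \phi(G)^2/d_{\max}$, so the squared denominator is $\Theta((\epsilon \tr[W])^2)$, cancelling the $\epsilon \tr[W]^2$ in the numerator and again yielding $\Theta(1/\epsilon) = \Theta(n^{5/4}) = \Theta(\sqrt{n h_G})$.

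The hard part is precisely this additive term's denominator. In Case 1 it requires verifying that $\phi(G)^2/d_{\max} - \epsilon \tr[W]$ does not collapse, which is a matter of fixing the hidden constant in $|E'| = O(h_{\max} n)$ large enough; in Case 2 it requires the opposite observation that, even though the raw difference is negative, its square still equals $\Theta((\epsilon\tr[W])^2)$ and gives the clean bound $\Theta(1/\epsilon)$. Handling these two parameter regimes cleanly (rather than through a single uniform argument) is the technical core of the proof, after which the remaining verification is routine substitution.
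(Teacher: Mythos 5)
Your proof takes essentially the same route as the paper: estimate $\tr[W]$, $d_{\min}$, $d_{\max}$, and $\phi(G)$ for each setting, substitute into the bound of Theorem~\ref{thm:hitting_time_informal}, and balance the multiplicative and additive terms by setting $\epsilon^{-1} = \sqrt{n\cdot h_G(u,v)}$, which via Theorem~\ref{thm:sparsify} translates into the stated $O(h_G\cdot n)$ edge budget. Your parameter estimates (Case~1: $\tr[W]=\Theta(n^2)$, $d_{\min}=d_{\max}=\Theta(n)$, $\phi(G)=\Theta(n)$; Case~2: $\tr[W]=\Theta(n^{3/2})$, $d_{\min}=d_{\max}=\Theta(\sqrt n)$, $\phi(G)=\Theta(1)$) match the paper's, and your remark that in Case~2 the quantity $\phi(G)^2/d_{\max}-\epsilon\tr[W]$ is negative but is squared anyway reflects exactly how the paper evaluates the formula in that regime.
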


\noindent{\bf From Random Sampling to Top-$t$ Selection.} In essence, the \texttt{SPatch} framework suggests a novel approach for performing local reconnect: instead of using the distances between points in $N(p)$ directly, one should incorporate the distance between $N(p)$ and $p$ as well. This offers a natural transition to a more practical and efficient \emph{deterministic} deletion algorithm: first compute the new local edge weights $w'(u,v)$ for all $u, v\in N(p)$, then keep the top-$t$ edges with the largest weights. Note that for large enough $r$, the probability of sampling edges outside of the top-$t$ edges is exponentially small. Thus, we could safely replace the ``sample $t$ edges'' step with ``keep the top-$t$ edges with largest weights''. This switch offers several practical advantages: in general, computing the edge weights then selecting the top-$t$ heaviest edges is more efficient than sampling $t$ edges without replacement. Thus, all our experiments are performed with the deterministic deletion algorithm.

\section{Experiments}
\label{sec:exp}
We conduct extensive experiments to test the practical performance of our deletion algorithm. In the following, we will give a preliminary overview of the experimental setups, then we focus on discussing two sets of experiments: the major focus is on a \emph{mass deletion experiment} where points are gradually deleted with no new points inserted. We wrap up the experiment by showing that the random walk search algorithm performs as well as the HNSW greedy search. Due to space constraints, we defer more details to Appendix~\ref{sec:app:add_exp}.
\vspace{-2mm}
\subsection{Setup}

\noindent\textbf{Hardware.}
All experiments run on 8×3.7 GHz AMD EPYC 7R13 cores with 64 GiB RAM.

\noindent\textbf{Implementation.} We implement the HNSW data structure by utilizing the \texttt{FAISS} library~\citep{jdj19,dgd+24}. In particular, we construct the HNSW graph by invoking the \texttt{HNSWFlatIndex} of \texttt{FAISS} with the degree parameter $m=32$. We then extract the resulting graphs, which are directed, and convert them into a sequence of undirected \texttt{networkx} graphs~\citep{hss08} for our deletion operations.

\noindent\textbf{Datasets.} We use 4 datasets for our ANN benchmarks: \texttt{SIFT}~\citep{jtda11}, \texttt{GIST}~\citep{sj10}, and one embedding of the \texttt{MS MARCO}~\citep{bccd+16} dataset with each of \texttt{MPNet}~\citep{stq+20} and \texttt{MiniLM}~\citep{wang2020minilm}. All of the datasets contain 1M points, with dimensions 768 (\texttt{MPNet}), 128 (\texttt{SIFT}), 960 (\texttt{GIST}) and 384 (\texttt{MiniLM}).

\noindent\textbf{Deletion algorithm.} For all deletion algorithms, we adopt the strategy that uses the tombstone for all top layers and only performs the deletion at the bottom layer. This is viable as the bottom layer is much more connected and denser than all top layers. For \texttt{SPatch}, we implement a modified version in the following aspects: 1).\ Since the graph is directed, we let $L$ be the set of in-neighbors of $p$ and let $R$ be the set of out-neighbors, for each $u\in R$, we pick the top-$t:=\alpha \cdot \lceil \frac{|L|+|R|}{|R|} \rceil$ points $v\in L$ where $w'(u, v)$ is maximized. 2).\ Instead of replacing all edges in $N(p)$ including those exist before deleting $p$, we continue to add new edges with large weights until all top-$t$ edges are added. This slightly densifies the graph without adding many edges.

\noindent\textbf{Evaluation metrics.} We focus on four important evaluation metrics: top-10 recall (defined as the fraction of top-10 nearest neighbors returned by the data structure over the top-10 true nearest neighbors), number of distance computations for queries, the total time for deletion and the number of edges at the bottom layer. We use the number of distance computations as a metric for query throughput because for either the greedy search or random walk-based search, the main runtime bottleneck is the number of distance computations. For deletion procedure, as it involves more complicated operations such as computing edge weights for a clique, we directly measure the overall runtime of deletion. Finally, we use the number of edges at the bottom layer as a proxy measurement for the memory/space usage of the deletion algorithm, and in Appendix~\ref{sec:memory} in particular Figure~\ref{fig:memory}, we show that it directly corresponds to the reduction in empirical memory utilization. 
\vspace{-2mm}
\subsection{Deletion Experiment}

In this experiment, we evaluate the performance of our deletion algorithm under the mass deletion setting, as follows: In total 80\% of the points will be removed from the dataset, for which every 0.8\% of the points being deleted, we run the query through the remaining dataset and record the following 4 metrics: top-10 recall, number of distance computations, overall deletion time and the number of edges at the bottom layer. For the query phase, we randomly pick 5,000 query points for \texttt{SIFT}, \texttt{MPNet}, \texttt{MiniLM} and 1,000 query points for \texttt{GIST}. These queries are fixed throughout the process. 

We compare our algorithm against several popular deletion prototypes for HNSW: 1).\ No patching, where the point is deleted from the graph without rerouting or adding any new edges. 2).\ Tombstone, where the point to be deleted is marked as a tombstone vertex without being deleted, and subsequent queries do not get stuck at a tombstone vertex.~\cite{xll+23} adapts a version of tombstone with periodical scan and merge to ensure the freshness of the index. 3).\ Local reconnect~\citep{xu2022proximitygraphmaintenancefast}, where for each point in $N(p)$, an edge to its nearest neighbor in $N(p)$ is added; 4).\ 2-hop reconnect, where the points $u\in N(p)$ are rerouted to $N(p)\cup N(u)$ then pruned. A wide array of deletion algorithms fall into this category, including~\cite{ssks21,xll+23,zth+23,xiao2024enhancing,xdb+25}. We implement and test the FreshDiskANN primitive~\citep{ssks21}. We also consider a periodic rebuild strategy: every batch, we rebuild the HNSW from scratch. 

\begin{figure*}[!ht]
    \centering
    \includegraphics[width=1\linewidth, height=3.2cm]{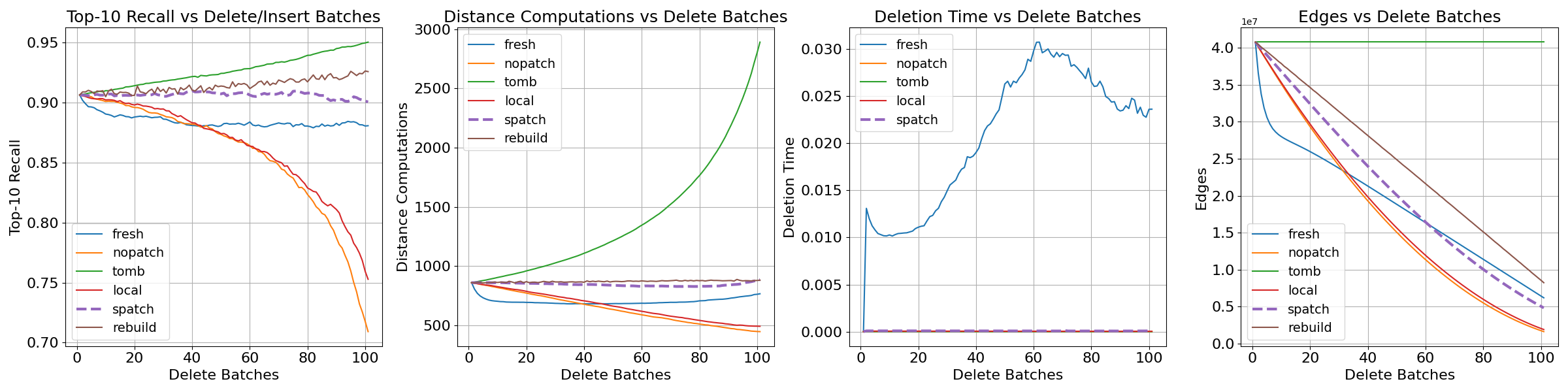}
    
    \includegraphics[width=1\linewidth, height=3.2cm]{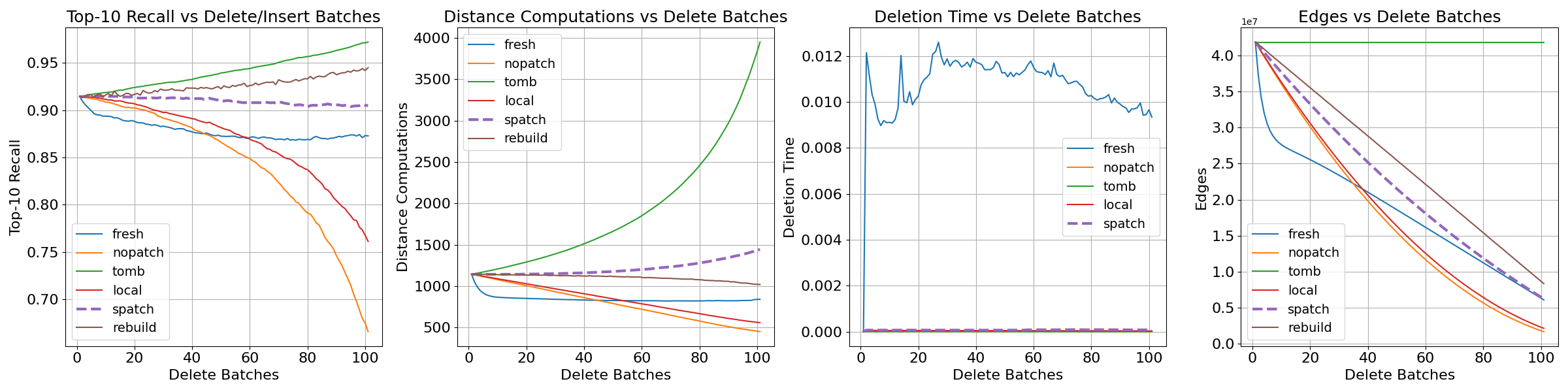}
    \includegraphics[width=1\linewidth, height=3.2cm]{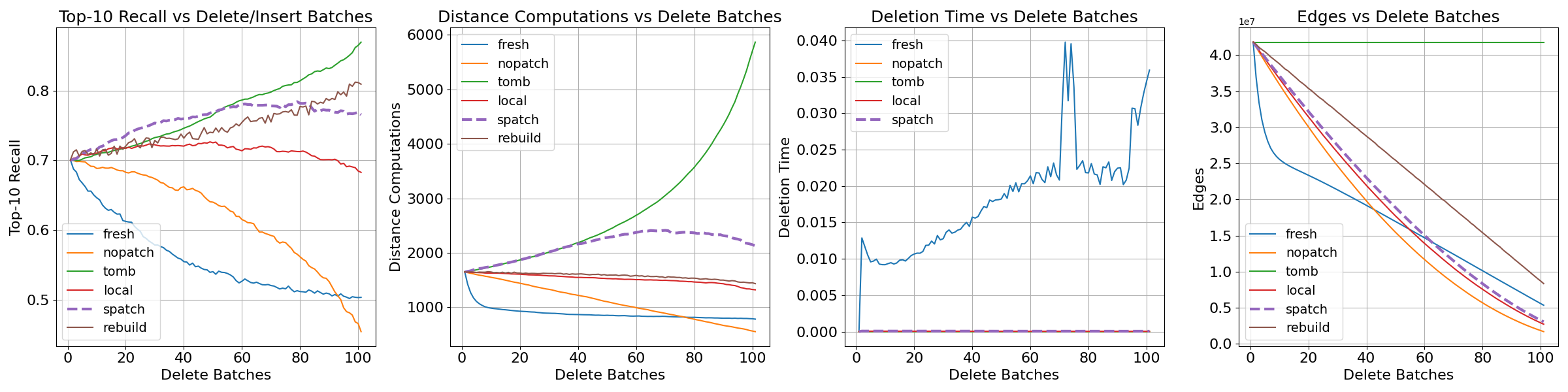}

    \includegraphics[width=1\linewidth, height=3.2cm]{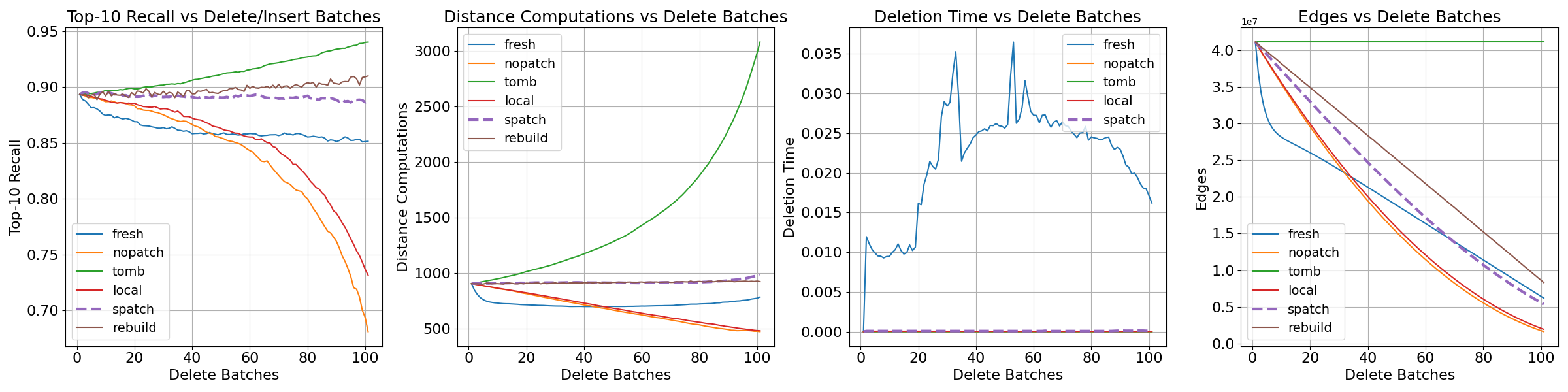}
    
    \caption{The rows are \texttt{MPNet}, \texttt{SIFT}, \texttt{GIST} and \texttt{MiniLM}, the columns are top-10 recall, number of distance computations per query, total deletion time and number of edges in the bottom layer of the graph. Legends: \texttt{spatch} -- our algorithm \texttt{SPatch}, \texttt{fresh} -- FreshDiskANN, \texttt{tomb} -- tombstone, \texttt{nopatch} -- no patching, local -- local reconnect. For \texttt{MPNet}: we also include \texttt{rebuild} without plotting its deletion time..}
    \label{fig:mass_deletion}
\end{figure*}

\noindent We examine Figure~\ref{fig:mass_deletion} method by method. 
\begin{itemize}
    \item While Tombstone has the best recall among all methods as per Theorem~\ref{thm:nn_exact} (even better than periodic rebuild), it quickly falls short in terms of query speed ($2.5-3\times$ more distance computations than \texttt{SPatch}) and its memory usage stays constant as more points are deleted;
    \item Recall degrades most quickly when we do not patch, despite its fast query speed, deletion time and low memory consumption;
    \item Local reconnect has slightly improved recall compared to no patching, yet still worse than others;
    \item FreshDiskANN has better recall for datasets such as \texttt{SIFT} and \texttt{MPNet}, but the deletion time is slower. Although it might be suitable when deletions are rare, in the setting of frequent deletions it is inefficient. We also include Figure~\ref{fig:delete_time_no_fresh} for a deletion time comparison without FreshDiskANN;
    \item \texttt{SPatch} gives the best \emph{overall} performance among various tradeoffs. While its recall is slightly lower than periodic rebuild and Tombstone, its query speed is much faster and memory usage decreases as more points are deleted than Tombstone. \texttt{SPatch} performs deletion much faster than FreshDiskANN and periodic rebuild.
\end{itemize}

\subsection{Random Softmax Walk vs. Greedy Search}
To empirically substantiate the validity of our randomized ``twin'' formulation of HNSW, we compare the two variants (softmax vs.~greedy walk) in Table~\ref{tab:r_v_g}. 
The results show that they are highly similar empirically in recall and throughput, with the softmax variant incurring only a slight loss.
This validates our use of it as a theoretical model for HNSW.

\begin{table}[!ht]
    \centering
    \begin{tabular}{|l|l|l|c|c|} \hline
         & \multicolumn{2}{|c|}{{\bf Top-10 Recall}} &  \multicolumn{2}{|c|}{{\bf Distance Computations}} \\
         \hline
    {\bf Dataset} & {\bf Softmax} & {\bf Greedy} & {\bf Softmax} & {\bf Greedy} \\ 
    \hline
    \texttt{MPNet} & 89.98\%& 91.60\% & 1562 & 1722\\ \hline
    \texttt{SIFT} & 91.13\% & 91.95\% & 1049 &1109\\  \hline
       \texttt{GIST}  & 71.38\%& 72.46\% & 1606 & 1653 \\ \hline 
       \texttt{MiniLM} & 89.27\%&90.09\% &1708 & 1802\\ \hline
    \end{tabular}
    \caption{Comparison of softmax walk to greedy walk.}
    \label{tab:r_v_g}
\end{table}

\iffalse
We examine the performance of random walk-based search against the standard greedy HNSW search~\cite{my20}. \Lichen{Emphasize we are not trying to beat greedy search.}
\begin{table}[!ht]
    \centering
    \begin{tabular}{|l|l|l|l|l|} \hline
    {\bf Dataset} & {\bf RW Rec} & {\bf GS Rec} & {\bf RW Eff} & {\bf GS Eff} \\ \hline
    \texttt{SIFT} & 91.13\% & 91.95\% & 1049 &1109\\  \hline
       \texttt{GIST}  & 71.38\%& 72.46\% & 1606 & 1653 \\ \hline 
       \texttt{MPNet} & 89.98\%& 91.60\% & 1562 & 1722\\ \hline
       \texttt{MBREAD} & 91.30\% & 91.77\% & 1705 & 1783 \\ \hline
       \texttt{MiniLM} & 89.27\%&90.09\% &1708 & 1802\\ \hline
    \end{tabular}
    \caption{We use {\bf RW} as a shorthand for random walk and {\bf GS} as a shorthand for greedy search, {\bf Rec} as a shorthand for top-10 recall and {\bf Eff} for query throughput.}
    \label{tab:r_v_g}
\end{table}

As we can see from Table~\ref{tab:r_v_g}, both top-10 recall and query throughput of the random walk search and standard HNSW greedy search are similar. 
\fi

\vspace{-2mm}
\section{Conclusion}

In this work, we provide a theoretical framework for HNSW using random walks and sampling. We then propose a deletion algorithm, \texttt{SPatch}, based on a deterministic implementation of the theoretical motivation. Theoretical guarantees and empirical evidence demonstrate that \texttt{SPatch} has good recall, speed, deletion time and memory. This random walk interpretation opens up new opportunities to study HNSW through the lens of random walks. We hope this framework sheds light on more theoretical investigations of HNSW and inspires novel practical algorithms.

\bibliographystyle{alpha}

\bibliography{ref}
\newpage
\appendix
\onecolumn
\section*{Appendix}
\paragraph{Roadmap.} In Section~\ref{sec:app:preli}, we provide more preliminaries regarding notations and inequalities used throughout the paper. In Section~\ref{sec:app:hitting}, we show how to obtain a multiplicative-additive approximation for hitting time given an additive Frobenius norm approximation of the Laplacian. In Section~\ref{sec:app:missing}, we provide missing proofs for prior statements. In Section~\ref{sec:app:hnsw}, we supply more details for HNSW data structure. In Section~\ref{sec:app:add_exp}, we present more details about experiments in Section~\ref{sec:exp} and an extra set of experiments in the steady state setting.

\section{More Preliminaries}
\label{sec:app:preli}
\subsection{Notation}

Let $n$ be a natural number, we use $[n]$ to denote the set $\{1,2,\ldots,n\}$. Let $X$ be a random variable, we use $\E[X]$ to denote the expectation of $X$, and let $E$ be an event, we use $\Pr[E]$ to denote the probability that event $E$ happens. 

Let $G=(V, E, w)$ be a graph where $V$ is the set of vertices, $E$ is the set of edges and $w: E\rightarrow \R_{+}$ be the weight function that assigns a positive real number to each edge. We adopt the convention and let $n:=|V|$ and $m:=|E|$. Given a vertex $u\in V$, we use $N(u)$ to denote its neighborhood, i.e., $N(u)=\{v\in V: \{u, v\}\in E \}$. 

Let $x\in \R^d$, without specification, we use $\|x\|$ to denote a general norm of $x$, we use $\|x\|_2$ to denote the $\ell_2$ or Euclidean norm of $x$. Let $M\in \R^{n\times d}$ be a matrix, we use $\|M\|$ to denote the spectral norm of $M$, $\|M\|_F$ to denote its Frobenius norm. We use $M^\dagger$ to denote the pseudo-inverse of matrix $M$. If $M\in \R^{n\times n}$ and is real, symmetric, we use $\lambda_1(M),\ldots,\lambda_n(M)$ to denote its eigenvalues, ordered in ascending order. When $M$ is clear from context, we use $\lambda_1,\ldots,\lambda_n$ to denote these eigenvalues. We use $\tr[M]$ to denote the trace of $M$, i.e., $\tr[M]=\sum_{i=1}^n M_{i,i}$.

\subsection{Useful Inequalities}

We collect some useful inequalities to be used later.

\begin{lemma}[Markov's inequality]
\label{lem:markov}
Let $X$ be a non-negative random variable and $a>0$, then
\begin{align*}
    \Pr[X>a\cdot \E[X]] \leq & ~ \frac{1}{a}.
\end{align*}
\end{lemma}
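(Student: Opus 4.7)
The plan is to prove the standard Markov inequality by the usual decomposition of the expectation based on the event $\{X > a\E[X]\}$. The argument is classical and requires only the non-negativity of $X$ together with linearity/monotonicity of expectation.

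First I would write $\E[X]$ as the expectation of $X$ split over the indicator of the event $\{X > a\E[X]\}$ and its complement:
\begin{align*}
\E[X] = \E[X \cdot \mathbf{1}_{X > a\E[X]}] + \E[X \cdot \mathbf{1}_{X \leq a\E[X]}].
\end{align*}
Since $X \geq 0$, both terms on the right-hand side are non-negative, so in particular $\E[X] \geq \E[X \cdot \mathbf{1}_{X > a\E[X]}]$. Next I would lower bound the surviving term using the defining inequality of the indicator: on the event $\{X > a\E[X]\}$ we have $X > a\E[X]$ pointwise, hence
\begin{align*}
\E[X \cdot \mathbf{1}_{X > a\E[X]}] \geq a\E[X] \cdot \E[\mathbf{1}_{X > a\E[X]}] = a\E[X] \cdot \Pr[X > a\E[X]].
\end{align*}

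Combining the two inequalities gives $\E[X] \geq a\E[X] \cdot \Pr[X > a\E[X]]$. If $\E[X] > 0$, dividing both sides by $a\E[X]$ yields the claim. If $\E[X] = 0$, then since $X \geq 0$ we must have $X = 0$ almost surely, so $\Pr[X > 0] = 0$ and the bound holds trivially (in fact the event $\{X > a\E[X]\} = \{X > 0\}$ has probability zero). There is no real obstacle here; the only minor care point is handling the degenerate case $\E[X] = 0$ and noting that the strict inequality in the event makes no difference compared to the more common $\geq$ formulation, since the same pointwise bound $X \cdot \mathbf{1}_{X > a\E[X]} \geq a\E[X] \cdot \mathbf{1}_{X > a\E[X]}$ still applies.
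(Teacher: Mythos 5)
Your proof is correct and is the standard textbook argument for Markov's inequality, including the proper handling of the degenerate case $\E[X]=0$. The paper states this lemma without proof (it is a cited classical fact), so there is no in-paper argument to compare against; your derivation is exactly what one would expect.
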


\begin{lemma}[Weyl's inequality,~\cite{w12}]
\label{lem:weyl}
Let $A, B\in \R^{n\times n}$ be symmetric matrices, then for any $i\in [n]$,
\begin{align*}
    |\lambda_i(A)-\lambda_i(B)| \leq & ~ \|A-B\|.
\end{align*}
\end{lemma}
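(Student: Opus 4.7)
The plan is to prove Weyl's inequality via the Courant--Fischer min-max characterization of eigenvalues. For any real symmetric matrix $M \in \mathbb{R}^{n \times n}$ with eigenvalues $\lambda_1(M) \leq \cdots \leq \lambda_n(M)$, the $i$-th eigenvalue is
\[
\lambda_i(M) = \min_{\substack{S \subseteq \mathbb{R}^n \\ \dim(S) = i}} \ \max_{\substack{x \in S \\ \|x\|_2 = 1}} x^\top M x,
\]
where the minimum ranges over all $i$-dimensional subspaces of $\mathbb{R}^n$. I would invoke this as a known consequence of the spectral theorem for symmetric matrices rather than re-derive it.

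First I would write $B = A + E$ with $E := B - A$, which is symmetric since both $A$ and $B$ are. For any unit vector $x$, linearity gives $x^\top B x = x^\top A x + x^\top E x$, and by definition of the spectral norm, $|x^\top E x| \leq \|E\| = \|A - B\|$. Consequently, for every subspace $S \subseteq \mathbb{R}^n$,
\[
\max_{\substack{x \in S \\ \|x\|_2 = 1}} x^\top B x \ \leq \ \max_{\substack{x \in S \\ \|x\|_2 = 1}} x^\top A x \ + \ \|A - B\|.
\]

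Next I would take the minimum over $i$-dimensional subspaces on both sides. Since adding a constant commutes with the min-operation, this yields $\lambda_i(B) \leq \lambda_i(A) + \|A - B\|$. Applying the same argument with the roles of $A$ and $B$ swapped (using that $\|A - B\| = \|B - A\|$) gives $\lambda_i(A) \leq \lambda_i(B) + \|A - B\|$, and combining the two bounds yields $|\lambda_i(A) - \lambda_i(B)| \leq \|A - B\|$.

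There is no real obstacle here beyond correctly invoking Courant--Fischer; the entire argument is a one-line perturbation of the variational characterization. The only subtlety worth flagging is consistency of the eigenvalue ordering convention (ascending, as fixed in Section~\ref{sec:app:preli}), which ensures that the \emph{same} index $i$ appears on both sides of the inequality when we take mins over $i$-dimensional subspaces.
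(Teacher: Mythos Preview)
Your proof is correct and follows the standard route via the Courant--Fischer variational characterization. Note that the paper does not actually prove this lemma; it is stated as a cited classical result (Weyl, 1912) in the preliminaries without proof, so there is no paper-proof to compare against.
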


\begin{lemma}[Theorem 4.1 of~\cite{w73}]
\label{lem:p_inv}
For two conforming matrices $A, B$, 
\begin{align*}
    \|A^\dagger - B^\dagger \| \leq & ~ 2\cdot \max\{\|A^\dagger\|^2, \|B^\dagger\|^2 \}\cdot \|A-B\|.
\end{align*}
\end{lemma}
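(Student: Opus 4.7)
\textbf{Proof proposal for Lemma~\ref{lem:p_inv}.}

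This is the classical perturbation bound of Wedin (1973); the plan is to reproduce his argument via an algebraic identity that decomposes $A^\dagger - B^\dagger$ into three pieces, each of which admits a direct submultiplicativity bound. The only ingredients needed are the Moore-Penrose defining relations $A A^\dagger A = A$, $A^\dagger A A^\dagger = A^\dagger$, $(A A^\dagger)^\top = A A^\dagger$, $(A^\dagger A)^\top = A^\dagger A$, together with the fact that $I - A A^\dagger$ and $I - A^\dagger A$ are orthogonal projections of spectral norm at most $1$.

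First I would verify, by expanding the right-hand side and repeatedly applying the Moore-Penrose relations, the identity
\[
B^\dagger - A^\dagger \;=\; \underbrace{-B^\dagger (B - A) A^\dagger}_{T_1} \;+\; \underbrace{B^\dagger (B^\dagger)^\top (B - A)^\top (I - A A^\dagger)}_{T_2} \;+\; \underbrace{(I - B^\dagger B)(B - A)^\top (A^\dagger)^\top A^\dagger}_{T_3}.
\]
Here $T_1$ is the analogue of the invertible-case identity $A^{-1} - B^{-1} = B^{-1}(B - A) A^{-1}$, while $T_2$ and $T_3$ are correction terms accounting for the discrepancy between the column and row spaces of $A$ and $B$. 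I would then take spectral norms of each term: submultiplicativity, $\|M^\top\| = \|M\|$, and $\|I - A A^\dagger\|, \|I - B^\dagger B\| \leq 1$ together yield that each of the three terms is bounded by $\max\{\|A^\dagger\|^2, \|B^\dagger\|^2\}\cdot \|A - B\|$ (using $\|A^\dagger\|\cdot\|B^\dagger\| \leq \max\{\|A^\dagger\|^2, \|B^\dagger\|^2\}$ on $T_1$).

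A naive triangle-inequality sum gives a constant of $3$; the main obstacle is sharpening this to the claimed $2$. For that I would exploit orthogonality of ranges and kernels: the left factor $I - B^\dagger B$ in $T_3$ forces $\mathrm{range}(T_3) \subseteq \ker(B)$, while $\mathrm{range}(T_1), \mathrm{range}(T_2) \subseteq \mathrm{range}(B^\dagger)$ is the orthogonal complement of $\ker(B)$; symmetrically, the right factor $I - A A^\dagger$ in $T_2$ annihilates $\mathrm{range}(A)$, while $T_1$ and $T_3$ are zero on $\mathrm{range}(A)^\perp$. Decomposing an arbitrary unit input $x$ as $x_1 + x_2$ with $x_1 \in \mathrm{range}(A)$ and applying Pythagoras to the mutually orthogonal output subspaces, together with $(a + b)^2 \leq 2(a^2 + b^2)$, then yields $\|T_1 + T_2 + T_3\| \leq \sqrt{3}\cdot \max\{\|A^\dagger\|^2, \|B^\dagger\|^2\}\cdot \|A - B\|$, which is within the claimed constant $2$. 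For the downstream application in Section~\ref{sec:app:hitting} the exact constant is immaterial; moreover, in the symmetric PSD setting arising there (the Laplacian $L$ and its connected sparsifier share a common kernel), both correction terms vanish and the identity collapses to $T_1$ alone, giving the bound with constant $1$.
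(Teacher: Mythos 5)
The paper does not prove this lemma; it cites it as a black box from Wedin (1973) and uses it directly in Lemma~\ref{lem:laplacian_inv_diff}, so there is no paper proof to compare against. Your blind reconstruction is the standard Wedin argument, and it is correct: the three-term decomposition checks out upon expansion via the Moore--Penrose identities (each $T_i$ collapses cleanly using $A^\top(I - AA^\dagger)=0$, $(I - B^\dagger B)B^\top = 0$, and $A^\top(A^\dagger)^\top A^\dagger = A^\dagger$, and the sum telescopes to $B^\dagger - A^\dagger$), and the orthogonality bookkeeping — $\mathrm{range}(T_3) \subseteq \ker(B) \perp \mathrm{range}(B^\dagger) \supseteq \mathrm{range}(T_1), \mathrm{range}(T_2)$ on the output side, and $T_2$ annihilating $\mathrm{range}(A)$ while $T_1, T_3$ annihilate $\mathrm{range}(A)^\perp$ on the input side — is exactly Wedin's device for beating the triangle-inequality constant of $3$. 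One small refinement: with $a = \|x_1\|$, $b = \|x_2\|$, $a^2+b^2=1$, the quantity you bound is $(a+b)^2 + a^2$; maximizing it exactly gives $\tfrac{3+\sqrt{5}}{2} = \bigl(\tfrac{1+\sqrt{5}}{2}\bigr)^2$, so the sharp constant from this argument is the golden ratio $\tfrac{1+\sqrt{5}}{2} \approx 1.618$ (which is in fact what Wedin's Theorem 4.1 states for the spectral norm), rather than the $\sqrt{3} \approx 1.732$ you obtain from the cruder $(a+b)^2 \le 2(a^2+b^2)$ step — though both are below $2$, so the paper's stated constant is validated either way. Your closing observation that for the paper's actual use case ($L$ and its connected sparsifier $L'$ share kernel $\mathrm{span}(\mathbf{1})$, hence share the projector $AA^\dagger = BB^\dagger$) the correction terms $T_2, T_3$ vanish and the bound holds with constant $1$ is correct and a genuinely useful strengthening of what the paper invokes.
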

\section{Hitting Time and Multiplicative-Additive Approximation}
\label{sec:app:hitting}

In this section, we prove that if we can generate a matrix $L'$ with $\|L'-L\|_F\leq \delta$, then the hitting time can be also approximated in a multiplicative-additive error manner. We first introduce effective resistance, an important metric on graphs that could be associated with the graph Laplacian matrix.

\begin{definition}
\label{def:er}
Let $G=(V, E, w)$ be a graph and $u, v\in V$, the \emph{effective resistance between $u$ and $v$}, denoted by $R_G(u, v)$, is defined as
\begin{align*}
    R_G(u, v) = & ~ \chi_{u,v}^\top L^\dagger \chi_{u,v},
\end{align*}
where $\chi_{u, v}=e_u-e_v$.
\end{definition}

Tetali~\citep{t91} proves that hitting time and effective resistance are intrinsically connected as follows.

\begin{lemma}[Theorem 5 of~\cite{t91}]
\label{lem:tetali_eq}
Let $G=(V, E, w)$ be a graph and $u, v\in V$ with $u\neq v$, the hitting time and effective resistance obeys the following identity:
\begin{align*}
    h(u, v) = & ~ \frac{1}{2}\sum_{z\in V} \deg(z) (R_G(u, v)+R_G(v, z)-R_G(u, z)).
\end{align*}
\end{lemma}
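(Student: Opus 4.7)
My plan is to prove the identity by first encoding the hitting times as the solution of a Laplacian linear system, then converting the resulting pseudo-inverse expression into effective resistances via polarization. Fix the target vertex $v$, and define $f_v(z) := h(z, v)$. First-step analysis under the random walk of Theorem~\ref{thm:nn_prob} yields, for every $z \neq v$,
\begin{align*}
    h(z, v) = 1 + \sum_{y \in N(z)} \frac{w(z, y)}{\deg(z)}\, h(y, v),
\end{align*}
which after multiplying through by $\deg(z)$ becomes $(L f_v)(z) = \deg(z)$, with $L = D - A$ the weighted graph Laplacian. Since $L \mathbf{1} = 0$, summing the coordinates of $L f_v$ must give zero, forcing $(L f_v)(v) = \deg(v) - \mathrm{vol}(G)$ where $\mathrm{vol}(G) := \sum_z \deg(z)$. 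Thus $L f_v = d - \mathrm{vol}(G)\, e_v$, where $d$ denotes the weighted degree vector. The right-hand side is orthogonal to $\mathbf{1}$ and hence in the range of $L$, so $f_v = L^\dagger(d - \mathrm{vol}(G) e_v) + c \mathbf{1}$ for a constant $c$ pinned down by the boundary condition $f_v(v) = 0$. Subtracting the value at $v$ from the value at $u$ eliminates $c$ and delivers
\begin{align*}
    h(u, v) = (e_u - e_v)^\top L^\dagger \bigl(d - \mathrm{vol}(G)\, e_v\bigr).
\end{align*}

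Next I would convert the right-hand side to effective resistances. From Definition~\ref{def:er} and the identity $\chi_{u,z} = \chi_{u,v} - \chi_{z,v}$, expanding $\chi_{u,z}^\top L^\dagger \chi_{u,z}$ gives the polarization identity
\begin{align*}
    (e_u - e_v)^\top L^\dagger (e_z - e_v) = \tfrac{1}{2}\bigl[R_G(u, v) + R_G(v, z) - R_G(u, z)\bigr].
\end{align*}
Writing $d = \sum_z \deg(z)\, e_z$ and splitting each $e_z = (e_z - e_v) + e_v$ inside the bilinear form, the sum $(e_u - e_v)^\top L^\dagger d$ decomposes into a principal term equal to $\sum_z \deg(z)\, \tfrac{1}{2}[R_G(u,v) + R_G(v,z) - R_G(u,z)]$ plus a correction $\mathrm{vol}(G)\cdot (e_u - e_v)^\top L^\dagger e_v$. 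This correction cancels exactly against the $-\mathrm{vol}(G) \cdot (e_u - e_v)^\top L^\dagger e_v$ term from the $-\mathrm{vol}(G) e_v$ factor in the formula for $h(u,v)$, leaving precisely the claimed identity.

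The main obstacle is a bookkeeping one: making sure the pseudo-inverse manipulations are well posed and that the unknown additive constant $c$ never matters. Two observations handle both. First, $d - \mathrm{vol}(G)\, e_v$ is orthogonal to $\mathbf{1}$ since $\mathbf{1}^\top(d - \mathrm{vol}(G) e_v) = \mathrm{vol}(G) - \mathrm{vol}(G) = 0$, so $L^\dagger$ acts meaningfully on it and the equation $L f_v = d - \mathrm{vol}(G) e_v$ is consistent on a connected $G$. Second, the additive $c\mathbf{1}$ is annihilated by the row $e_u - e_v$, so $c$ drops out once we express $h(u,v)$ as a difference. With these justified, the entire argument reduces to the one-line polarization step above, and the identity follows.
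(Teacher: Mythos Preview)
Your proof is correct. Note, however, that the paper does not supply its own proof of this lemma: it is stated purely as a citation of Theorem~5 of Tetali~\cite{t91} and then used as a black box in the proof of Theorem~\ref{thm:hitting_time_formal}. So there is no ``paper's proof'' to compare against. Your argument---deriving the Laplacian system $L f_v = d - \mathrm{vol}(G)\,e_v$ from first-step analysis, then applying the polarization identity to rewrite the bilinear forms $(e_u-e_v)^\top L^\dagger(e_z-e_v)$ as combinations of effective resistances---is a clean and standard derivation of Tetali's identity, and every step (consistency of the linear system, elimination of the additive constant, the cancellation of the $\mathrm{vol}(G)\,(e_u-e_v)^\top L^\dagger e_v$ correction) is properly justified.
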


We need to define the edge expansion of a graph, as we will use Cheeger's inequality to lower bound the smallest nontrivial eigenvalue of the Laplacian matrix.
\begin{definition}[Edge expansion]
\label{def:edge_expansion}
Let $G=(V, E, w)$ be a graph and $S\subseteq V$, define $e(S)=\sum_{u\sim v, u\in S, v\in V\setminus S} w(u, v)$, the \emph{edge expansion of a set $S$} is defined as 
\begin{align*}
    \phi(S) = & ~ \frac{e(S)}{\min\{|S|, |V\setminus S| \}},
\end{align*}
the \emph{edge expansion of the graph $G$} is defined as
\begin{align*}
    \phi(G) = & ~ \min_{S\subseteq V} \phi(S). 
\end{align*}
\end{definition}

Cheeger's inequality gives a lower bound on $\lambda_2(L)$ in terms of edge expansion.
\begin{lemma}[Cheeger's inequality,~\cite{c71}]
\label{lem:cheeger}
Let $G=(V, E, w)$ be a graph and $\phi(G)$ be the edge expansion of $G$ (Definition~\ref{def:edge_expansion}). Let $\lambda_2$ be the second smallest eigenvalue of $L$ and $d_{\max}=\max_{u\in V}\deg(u)$, then
\begin{align*}
    \frac{\phi(G)^2}{2d_{\max}} \leq \lambda_2 \leq 2\phi(G).
\end{align*}
\end{lemma}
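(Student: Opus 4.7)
The plan is to prove the two inequalities of Lemma~\ref{lem:cheeger} separately, via the standard Rayleigh quotient characterization
\begin{align*}
\lambda_2 \;=\; \min_{f \perp \mathbf{1},\, f \neq 0}\; \frac{f^\top L f}{f^\top f}
\;=\; \min_{f \perp \mathbf{1},\, f \neq 0}\; \frac{\sum_{\{u,v\} \in E} w(u,v)\,(f(u)-f(v))^2}{\sum_{u \in V} f(u)^2}.
\end{align*}

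For the \textbf{upper bound} $\lambda_2 \leq 2\phi(G)$, the plan is to exhibit an explicit test function in the Rayleigh quotient. I would take a set $S^\ast \subseteq V$ attaining $\phi(G)$ with $|S^\ast| \leq |V|/2$, set $a = |S^\ast|$, $b = |V|-|S^\ast|$, and define $f(u) = b$ if $u \in S^\ast$ and $f(u) = -a$ otherwise. A direct check shows $f \perp \mathbf{1}$, $f^\top L f = e(S^\ast)\cdot(a+b)^2$, and $f^\top f = ab(a+b)$. Plugging into the Rayleigh quotient gives $\lambda_2 \leq e(S^\ast)(a+b)/(ab) \leq 2 e(S^\ast)/a = 2\phi(G)$, using $b \geq (a+b)/2$. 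This side is essentially a one-line calculation.

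For the \textbf{lower bound} $\lambda_2 \geq \phi(G)^2 / (2 d_{\max})$, the plan is the classical sweep-cut argument. Let $f$ be a $\lambda_2$-eigenvector; since $f \perp \mathbf{1}$, both $V_+ = \{u : f(u) > 0\}$ and $V_- = \{u : f(u) < 0\}$ are nonempty, and one of them has size $\leq |V|/2$; WLOG $|V_+| \leq |V|/2$. Define $g(u) = \max\{f(u), 0\}$, which is supported on at most half the vertices. The first key step is to show $R(g) := g^\top L g / g^\top g \leq \lambda_2$. I would prove this via the pointwise bound $(Lg)(u) \leq \lambda_2 g(u)$: for $u \in V_+$, using $L f = \lambda_2 f$ and $f(v) \leq 0$ off $V_+$ one gets $(Lg)(u) = \lambda_2 f(u) + \sum_{v \notin V_+,\, v\sim u} w(u,v) f(v) \leq \lambda_2 f(u)$, and for $u \notin V_+$ it is immediate since $g(u)=0$ and $(Lg)(u) \leq 0$. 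Multiplying by $g(u)$ and summing yields $g^\top L g \leq \lambda_2 \|g\|_2^2$.

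The second and \emph{main} step is the sweep-cut bound: there exists $S \subseteq V_+$ with $e(S)/|S| \leq \sqrt{2 d_{\max}\, R(g)}$. I would sample a threshold by drawing $T^2$ uniformly in $[0, \max_u g(u)^2]$ and considering $S_T = \{u : g(u) > T\}$. Computing expectations gives $\mathbb{E}[|S_T|] = \|g\|_2^2/\max g^2$, and a pair $\{u,v\}$ is cut with probability $|g(u)^2 - g(v)^2|/\max g^2$, so $\mathbb{E}[e(S_T)] = \sum_{\{u,v\}} w(u,v)|g(u)^2 - g(v)^2|/\max g^2$. The heart of the argument is the Cauchy--Schwarz estimate
\begin{align*}
\sum_{\{u,v\} \in E} w(u,v)\,|g(u)^2 - g(v)^2|
\;\leq\; \sqrt{\sum_{\{u,v\}} w(u,v)(g(u)-g(v))^2}\;\cdot\;\sqrt{\sum_{\{u,v\}} w(u,v)(g(u)+g(v))^2},
\end{align*}
where the first factor equals $\sqrt{g^\top L g}$ and the second is bounded using $(a+b)^2 \leq 2(a^2+b^2)$ by $\sqrt{2 \sum_u \deg(u) g(u)^2} \leq \sqrt{2 d_{\max}}\,\|g\|_2$. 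Dividing, $\mathbb{E}[e(S_T)]/\mathbb{E}[|S_T|] \leq \sqrt{2 d_{\max}\, R(g)}$, and a standard averaging argument (if $\mathbb{E}[X] \leq C\,\mathbb{E}[Y]$ then some realization satisfies $X \leq CY$) yields a deterministic $S_T$ with $e(S_T)/|S_T| \leq \sqrt{2 d_{\max}\, R(g)}$. Since $|S_T| \leq |V_+| \leq |V|/2$, this gives $\phi(G) \leq \phi(S_T) \leq \sqrt{2 d_{\max}\, \lambda_2}$, and squaring finishes the proof.

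The main obstacle is the sweep-cut step, specifically justifying the Cauchy--Schwarz step and the passage from an expectation inequality to an existential statement about some individual threshold. The eigenvector manipulation ($R(g) \leq \lambda_2$) is also a subtle point that must be handled carefully, since restriction to the positive part does not trivially preserve the Rayleigh quotient, and the pointwise operator inequality is what makes it work.
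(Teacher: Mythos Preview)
Your proof plan is correct and is the standard proof of Cheeger's inequality. Note, however, that the paper does not actually prove Lemma~\ref{lem:cheeger}: it is stated as a cited result from~\cite{c71} and used as a black box in the proof of Lemma~\ref{lem:laplacian_inv_diff}. So there is no ``paper's own proof'' to compare against; you have supplied a full argument where the paper simply invokes the literature. Both your upper bound (test function based on the indicator of an optimal cut) and your lower bound (restriction of the Fiedler vector to its positive part followed by the random-threshold sweep cut with Cauchy--Schwarz) are the textbook route, and the subtle points you flag---the pointwise inequality $(Lg)(u)\leq\lambda_2 g(u)$ and the passage from $\E[e(S_T)]\leq C\,\E[|S_T|]$ to an existential statement---are handled correctly.
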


We next prove that if $\|L-L'\|_F\leq \delta$, then $\|L^\dagger - (L')^\dagger\|$ can also be bounded by invoking Lemma~\ref{lem:p_inv}.

\begin{lemma}
\label{lem:laplacian_inv_diff}
Let $G=(V, E, w)$ be a graph and $L$ be its corresponding Laplacian matrix, let $G'=(V, E', w')$ be the graph where $E'\subseteq E$, $L'$ be its Laplacian matrix and $G'$ is connected. Suppose $\|L-L'\|_F\leq \delta$ for some $\delta>0$, then 
\begin{align*}
    \|L^\dagger - (L')^\dagger\| \leq & ~ 2\delta \left(\frac{\phi(G)^2}{2d_{\max}}-\delta\right)^{-2},
\end{align*}
where $\phi(G)$ is the edge expansion of $G$ (Definition~\ref{def:edge_expansion}) and $d_{\max}$ is the max degree of $G$.
\end{lemma}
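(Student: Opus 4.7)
The plan is to apply Wedin's pseudo-inverse perturbation bound (Lemma~\ref{lem:p_inv}) directly to $L$ and $L'$, which reduces the problem to bounding $\|L - L'\|$ (easy, via Frobenius) and bounding the operator norms of the two pseudo-inverses $\|L^\dagger\|$ and $\|(L')^\dagger\|$ (the main content).

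For the pseudo-inverse norms, I would use the fact that for a Laplacian of a connected graph, the kernel is exactly $\mathrm{span}(\mathbf{1})$, so every other eigenvalue is strictly positive and $\|L^\dagger\| = 1/\lambda_2(L)$, where $\lambda_2$ denotes the smallest nonzero (i.e., second-smallest) eigenvalue. The same holds for $L'$ by the hypothesis that $G'$ is connected. Since $E' \subseteq E$, connectedness of $G'$ implies connectedness of $G$, so both claims apply. Next, I would invoke Cheeger's inequality (Lemma~\ref{lem:cheeger}) on $G$ to obtain
\begin{equation*}
\lambda_2(L) \;\geq\; \frac{\phi(G)^2}{2 d_{\max}},
\qquad\text{so}\qquad
\|L^\dagger\| \;\leq\; \frac{2 d_{\max}}{\phi(G)^2}.
\end{equation*}
To control $\|(L')^\dagger\|$, I would use Weyl's inequality (Lemma~\ref{lem:weyl}) in the form $|\lambda_2(L) - \lambda_2(L')| \leq \|L - L'\|$, together with $\|L - L'\| \leq \|L-L'\|_F \leq \delta$, giving
\begin{equation*}
\lambda_2(L') \;\geq\; \lambda_2(L) - \delta \;\geq\; \frac{\phi(G)^2}{2 d_{\max}} - \delta,
\qquad\text{hence}\qquad
\|(L')^\dagger\| \;\leq\; \left(\frac{\phi(G)^2}{2 d_{\max}} - \delta\right)^{-1}.
\end{equation*}
Since $\phi(G)^2/(2d_{\max}) - \delta \leq \phi(G)^2/(2d_{\max})$, the second bound is the larger of the two, so $\max\{\|L^\dagger\|^2, \|(L')^\dagger\|^2\} \leq (\phi(G)^2/(2d_{\max}) - \delta)^{-2}$.

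Finally I would plug these into Lemma~\ref{lem:p_inv}, together with $\|L-L'\| \leq \delta$, to conclude
\begin{equation*}
\|L^\dagger - (L')^\dagger\| \;\leq\; 2\,\|L-L'\| \cdot \max\{\|L^\dagger\|^2,\|(L')^\dagger\|^2\}
\;\leq\; 2\delta \left(\frac{\phi(G)^2}{2 d_{\max}} - \delta\right)^{-2},
\end{equation*}
which is the desired bound. The main obstacle (such as it is) is the pseudo-inverse bookkeeping: making sure that $L$ and $L'$ have matching kernels (the constants span, since both graphs are connected), so that $\|L^\dagger\|$ is governed by $1/\lambda_2$, and that Weyl's inequality on the spectral norm may be safely upper-bounded by the Frobenius-norm hypothesis. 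Everything else is a direct chain of the three cited inequalities (Wedin, Weyl, Cheeger), and the proof should require no substantive calculation beyond what is written above. Implicitly we need $\delta < \phi(G)^2/(2 d_{\max})$ for the bound to be meaningful (otherwise the lower bound on $\lambda_2(L')$ may be vacuous); this is a mild regularity assumption that we can state up front.
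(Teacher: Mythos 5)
Your proposal is correct and follows essentially the same route as the paper's proof: Wedin's pseudo-inverse perturbation bound (Lemma~\ref{lem:p_inv}), Weyl's inequality to transfer the $\lambda_2$ lower bound from $L$ to $L'$, and Cheeger's inequality to lower-bound $\lambda_2(L)$. Your closing remark that the bound is only meaningful when $\delta < \phi(G)^2/(2d_{\max})$ is a sensible observation that the paper leaves implicit.
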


\begin{proof}
The proof will be combining Lemma~\ref{lem:p_inv} and Lemma~\ref{lem:cheeger}. By Lemma~\ref{lem:p_inv}, we have
\begin{align*}
    \|L^\dagger - (L')^\dagger\| \leq & ~ 2\cdot \max\{\|L^\dagger\|^2, \|(L')^\dagger\|^2 \}\cdot \|L-L'\|,
\end{align*}
where we already have 
\begin{align*}
    \|L-L'\| \leq  \|L-L'\|_F \leq  \delta,
\end{align*}
meanwhile, by Weyl's inequality (Lemma~\ref{lem:weyl}), this also implies a bound on $\lambda_2(L')$:
\begin{align*}
    |\lambda_2(L)-\lambda_2(L')|\leq \|L-L'\| \leq \delta,
\end{align*}
it remains to establish a bound on $\lambda_2(L)$. By Cheeger's inequality, we obtain
\begin{align*}
    \lambda_2(L) \geq & ~ \frac{\phi(G)^2}{2d_{\max}},
\end{align*}
since $L$ has rank $n-1$, we know that
\begin{align*}
    \|L^\dagger\| = & ~ \frac{1}{\lambda_2(L)} \\
    \leq & ~ \frac{2d_{\max}}{\phi(G)^2},
\end{align*}
similarly we can attempt to establish a bound on $\lambda_2(L')$:
\begin{align*}
    \lambda_2(L)-\delta \leq \lambda_2(L')\leq \lambda_2(L)+\delta,
\end{align*}
therefore
\begin{align*}
    \lambda_2(L') \geq & ~ \frac{\phi(G)^2}{2d_{\max}}-\delta
\end{align*}
and by the same argument as $\|L^\dagger\|$,
\begin{align*}
    \|(L')^\dagger\| = & ~ \frac{1}{\lambda_2(L')} \\
    \leq & ~ \frac{2d_{\max}}{\phi(G)^2-2d_{\max}\delta},
\end{align*}
we conclude the following bound:
\begin{align*}
    \max\{\|L^\dagger\|^2, \|(L')^\dagger\|^2\} \leq & ~ \left(\frac{\phi(G)^2}{2d_{\max}}-\delta\right)^{-2}.
\end{align*}
Put things together, we obtain the final bound:
\begin{align*}
    \|L^\dagger - (L')^\dagger\| \leq & ~ 2\delta \left(\frac{\phi(G)^2}{2d_{\max}}-\delta\right)^{-2}.\qedhere
\end{align*}
\end{proof}

As a natural corollary, we also obtain a bound on the effective resistance.

\begin{corollary}
\label{cor:er_approx}
Let $G=(V, E, w)$ be a graph and $G'=(V, E', w')$ with $E'\subseteq E$, $L'$ be its Laplacian matrix and $G'$ is connected. Suppose $\|L-L'\|_F\leq \delta$ for some $\delta > 0$, then for any $u, v\in V$, 
\begin{align*}
    |R_G(u, v) - R_{G'}(u, v)| \leq & ~  4\delta \left(\frac{\phi(G)^2}{2d_{\max}}-\delta\right)^{-2}.
\end{align*}
\end{corollary}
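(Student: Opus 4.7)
The plan is to reduce the effective resistance comparison to a quadratic-form inequality in the pseudo-inverses, then invoke Lemma~\ref{lem:laplacian_inv_diff} as a black box. Concretely, recall from Definition~\ref{def:er} that $R_G(u,v) = \chi_{u,v}^\top L^\dagger \chi_{u,v}$ and $R_{G'}(u,v) = \chi_{u,v}^\top (L')^\dagger \chi_{u,v}$ for the same test vector $\chi_{u,v} = e_u - e_v$. Subtracting, the difference becomes a single quadratic form $\chi_{u,v}^\top (L^\dagger - (L')^\dagger) \chi_{u,v}$, which is the key reformulation.

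Next I would bound this quadratic form by the spectral norm of the operator $L^\dagger - (L')^\dagger$ times $\|\chi_{u,v}\|_2^2$. Since $u \neq v$ (otherwise the statement is trivial because both resistances are zero), $\chi_{u,v} = e_u - e_v$ is the difference of two distinct standard basis vectors, so $\|\chi_{u,v}\|_2^2 = 2$. This gives
\begin{align*}
    |R_G(u,v) - R_{G'}(u,v)| \;\leq\; 2 \cdot \|L^\dagger - (L')^\dagger\|.
\end{align*}

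Finally I would plug in the bound from Lemma~\ref{lem:laplacian_inv_diff}, namely $\|L^\dagger - (L')^\dagger\| \leq 2\delta \bigl(\tfrac{\phi(G)^2}{2 d_{\max}} - \delta\bigr)^{-2}$, whose hypotheses (connectedness of $G'$, $E' \subseteq E$, and $\|L - L'\|_F \leq \delta$) are exactly those assumed here. Multiplying the factor of $2$ from $\|\chi_{u,v}\|_2^2$ through yields the claimed constant $4\delta$, completing the derivation. I do not expect any real obstacle: the only thing one has to be slightly careful about is that the test vector $\chi_{u,v}$ is orthogonal to the all-ones vector, so it lies in the column space of both pseudo-inverses, and hence the quadratic form $\chi_{u,v}^\top (L^\dagger - (L')^\dagger) \chi_{u,v}$ is indeed controlled by the ordinary spectral norm rather than requiring a restricted-norm argument. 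With that observation the corollary follows immediately from Lemma~\ref{lem:laplacian_inv_diff}.
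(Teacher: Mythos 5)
Your proof is correct and mirrors the paper's argument exactly: express $R_G(u,v) - R_{G'}(u,v)$ as the quadratic form $\chi_{u,v}^\top (L^\dagger - (L')^\dagger)\chi_{u,v}$, bound it by $\|L^\dagger - (L')^\dagger\|\cdot\|\chi_{u,v}\|_2^2 = 2\|L^\dagger - (L')^\dagger\|$, and substitute Lemma~\ref{lem:laplacian_inv_diff}. Your added remark that $\chi_{u,v}$ lies in the range of both pseudo-inverses is a sensible sanity check, though the paper omits it.
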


\begin{proof}
Fix $u, v\in V$, then
\begin{align*}
    |\chi_{u,v}^\top (L^\dagger-(L')^\dagger) \chi_{u, v} | \leq & ~ \|L^\dagger-(L')^\dagger \| \cdot \|\chi_{u, v}\|_2^2 \\
    \leq & ~ 4\delta \left(\frac{\phi(G)^2}{2d_{\max}}-\delta\right)^{-2},
\end{align*}
where the last step is by invoking the bound on $\|L^\dagger-(L')^\dagger\|$ of Lemma~\ref{lem:laplacian_inv_diff}.
\end{proof}

It would also be useful to have a handle on the degree.

\begin{corollary}
\label{cor:deg_approx}
Let $G=(V, E, w)$ be a graph and $G'=(V, E', w')$ with $E'\subseteq E$, $L'$ be its Laplacian matrix and $G'$ is connected. Suppose $\|L-L'\|_F\leq \delta$ for some $\delta>0$, then for any $u\in V$, 
\begin{align*}
    |\deg_G(u)-\deg_{G'}(u) | \leq & ~ \delta.
\end{align*}
\end{corollary}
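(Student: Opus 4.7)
The plan is to reduce the degree of a vertex to a diagonal entry of the Laplacian, and then bound any single entry of the difference matrix by its Frobenius norm. Specifically, recall the identity $L = D - A$ from the definition of the Laplacian, where $D$ is the diagonal matrix of weighted degrees and $A$ is the weighted adjacency matrix with zeros on the diagonal. Consequently, for every $u \in V$ we have $L_{u,u} = \deg_G(u)$, and similarly $L'_{u,u} = \deg_{G'}(u)$.

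Fix any $u \in V$. The proof then reduces to a single inequality:
\begin{align*}
    |\deg_G(u) - \deg_{G'}(u)| = |L_{u,u} - L'_{u,u}| \leq \|L - L'\|_F \leq \delta,
\end{align*}
where the first inequality uses the elementary fact that $|M_{i,j}| \leq \|M\|_F$ for any matrix $M$ and any entry $(i,j)$, applied to $M = L - L'$, and the second inequality is the hypothesis.

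There is essentially no obstacle here: the only thing to verify is that the diagonal of the Laplacian indeed equals the weighted degree, which is immediate from the definition stated earlier in the paper. No appeal to connectivity of $G'$ or to the expansion parameters used in Lemma~\ref{lem:laplacian_inv_diff} is needed — this corollary is purely an entrywise consequence of the Frobenius norm bound, in contrast to the earlier spectral bounds on $L^\dagger$ and effective resistance where we had to go through Weyl's inequality and Cheeger's inequality.
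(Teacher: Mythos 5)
Your proof is correct and follows essentially the same route as the paper's: both reduce $\deg_G(u)-\deg_{G'}(u)$ to the diagonal entry $(L-L')_{u,u}$ (the paper writes this as $e_u^\top(L-L')e_u$) and bound it by $\|L-L'\|_F \leq \delta$. The only cosmetic difference is that you invoke $|M_{u,u}|\le\|M\|_F$ directly, while the paper passes through the spectral norm $\|M\|\,\|e_u\|_2^2$ and then uses $\|M\|\le\|M\|_F$; the substance is identical, and you are right that connectivity of $G'$ plays no role here.
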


\begin{proof}
Note that $\deg_G(u)=e_u^\top L e_u$ and $\deg_{G'}(u)=e_u^\top L' e_u$, thus
\begin{align*}
    |\deg_G(u)-\deg_{G'}(u)| = & ~ |e_u^\top (L-L') e_u | \\
    \leq & ~ \|L-L'\|\cdot \|e_u\|_2^2 \\
    = & ~ \|L-L'\| \\
    \leq & ~ \delta. \qedhere
\end{align*}
\end{proof}

We are in the position to prove our main theorem regarding hitting time.

\begin{theorem}
\label{thm:hitting_time_formal}
Let $G=(V, E, w)$ be a graph and $G'=(V, E', w')$ with $E'\subseteq E$, $L'$ be its Laplacian matrix and $G'$ is connected. Suppose $\|L-L'\|_F\leq \delta$ for some $\delta>0$, then for any $u, v\in V$ with $u\neq v$, 
\begin{align*}
    |h_G(u, v) - h_{G'}(u, v)| \leq & ~ \frac{\delta}{d_{\min}} h_G(u, v) + 12\delta \left(\frac{\phi(G)^2}{2d_{\max}}-\delta\right)^{-2} \sum_{e\in E} w_e
\end{align*}
\end{theorem}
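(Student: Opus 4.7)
The plan is to start from Tetali's identity (Lemma~\ref{lem:tetali_eq}), which expresses each hitting time as a weighted sum of ``metric triangle'' effective resistance combinations. Define
\[
A_z := R_G(u,v) + R_G(v,z) - R_G(u,z), \qquad A_z' := R_{G'}(u,v) + R_{G'}(v,z) - R_{G'}(u,z).
\]
Because effective resistance is a metric on vertices, the triangle inequality gives $A_z, A_z' \geq 0$, and Tetali's identity reads $2 h_G(u,v) = \sum_z \deg_G(z)\, A_z$, with the analogous formula for $G'$. This reduces the task of comparing hitting times to the task of comparing two degree-weighted sums.

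The heart of the argument is the telescoping decomposition, obtained by adding and subtracting $\deg_{G'}(z)\, A_z$:
\[
2\bigl(h_G(u,v) - h_{G'}(u,v)\bigr) = \sum_z \bigl(\deg_G(z) - \deg_{G'}(z)\bigr)\, A_z \;+\; \sum_z \deg_{G'}(z)\,(A_z - A_z').
\]
The first sum isolates the degree perturbation (controlled by Corollary~\ref{cor:deg_approx}), the second isolates the effective-resistance perturbation (controlled by Corollary~\ref{cor:er_approx}).

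For the first sum, I would use $|\deg_G(z) - \deg_{G'}(z)| \leq \delta$ together with $A_z \geq 0$ and $\deg_G(z) \geq d_{\min}$ to get $\sum_z A_z \leq d_{\min}^{-1} \sum_z \deg_G(z)\, A_z = 2 h_G(u,v)/d_{\min}$. After the factor of $1/2$, this produces exactly the claimed multiplicative term $\frac{\delta}{d_{\min}} h_G(u,v)$. For the second sum, three applications of Corollary~\ref{cor:er_approx} (one per effective-resistance term inside $A_z - A_z'$) give $|A_z - A_z'| \leq 12\delta\,\bigl(\phi(G)^2/(2d_{\max}) - \delta\bigr)^{-2}$, and pairing this with a bound on $\sum_z \deg_{G'}(z)$ coming from Corollary~\ref{cor:deg_approx} (which yields $\sum_z \deg_{G'}(z) \leq 2 \sum_{e\in E} w_e + n\delta$, equivalently a constant multiple of $2\sum_e w_e$ in the parameter regime of interest) delivers the additive term $12\delta\,(\phi(G)^2/(2d_{\max}) - \delta)^{-2} \sum_{e\in E} w_e$.

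The main obstacle is not conceptual but a bookkeeping one: lining the constants up to match the stated bound in the second piece requires controlling $\sum_z \deg_{G'}(z)$ in terms of $\sum_{e\in E} w_e$, which is immediate from Corollary~\ref{cor:deg_approx} up to a lower-order $n\delta$ contribution that must be absorbed. The essential conceptual inputs, namely that a Frobenius perturbation of $L$ simultaneously perturbs pairwise effective resistances (Corollary~\ref{cor:er_approx}) and pointwise degrees (Corollary~\ref{cor:deg_approx}), are already in place, so after these two corollaries the proof is exactly the telescoping decomposition above together with two clean triangle-inequality manipulations on $A_z$ and $A_z'$.
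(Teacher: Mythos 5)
Your plan follows the same overall architecture as the paper: both start from Tetali's identity, both control the two ingredients (degrees and effective resistances) via Corollaries~\ref{cor:deg_approx} and~\ref{cor:er_approx}, and both then sum the per-vertex errors. So this is not a genuinely different route; the only deviation is in how the per-$z$ difference $\deg_G(z)A_z - \deg_{G'}(z)A_z'$ is split, and that deviation is exactly where your proposal falls slightly short of the stated bound.

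Your telescoping
\begin{align*}
\deg_G(z)A_z - \deg_{G'}(z)A_z' &= (\deg_G(z)-\deg_{G'}(z))A_z + \deg_{G'}(z)(A_z-A_z')\\
&\leq \delta A_z + (\deg_G(z)+\delta)\cdot 3\delta'
= \delta A_z + 3\delta'\deg_G(z) + 3\delta\delta',
\end{align*}
with $\delta' := 4\delta\left(\phi(G)^2/(2d_{\max})-\delta\right)^{-2}$, leaves a \emph{positive} cross term $3\delta\delta'$ at each vertex, summing to an extra $\tfrac{3}{2}n\delta\delta'$ beyond what Theorem~\ref{thm:hitting_time_formal} asserts. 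You acknowledge this (``$\sum_z \deg_{G'}(z)\leq 2\sum_e w_e + n\delta$ \ldots must be absorbed''), but the theorem's constant is exact, and this lower-order term cannot simply be absorbed into the claimed $12\delta(\cdot)^{-2}\sum_e w_e$ without further assumptions on the relationship between $n\delta$ and $\sum_e w_e$. The paper's proof avoids the issue by \emph{not} telescoping: it instead lower-bounds both factors of the subtracted product simultaneously,
\begin{align*}
\deg_G(z)A_z - \deg_{G'}(z)A_z' &\leq \deg_G(z)A_z - (\deg_G(z)-\delta)(A_z - 3\delta')\\
&= 3\delta'\deg_G(z) + \delta A_z - 3\delta\delta' \;\leq\; 3\delta'\deg_G(z) + \delta A_z,
\end{align*}
so the cross term comes out \emph{negative} and is simply discarded (this step uses $A_z'\geq 0$ and $\deg_G(z)-\delta\geq 0$, the same implicit assumptions you are already making). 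Everything else in your argument, including $A_z\geq 0$ via the ER triangle inequality, $\sum_z A_z \leq d_{\min}^{-1}\sum_z \deg_G(z)A_z = 2h_G(u,v)/d_{\min}$, $\sum_z\deg_G(z)=2\sum_e w_e$, and the factor $3\cdot 4\delta = 12\delta$ from three applications of Corollary~\ref{cor:er_approx}, is correct. So the fix is local: replace the telescoping split with the cross-multiplication split above and the claimed bound follows with the exact stated constant.
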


\begin{proof}
By Lemma~\ref{lem:tetali_eq}, we know that 
\begin{align}\label{eq:tetali}
    h_G(u, v) = & ~ \frac{1}{2}\sum_{z\in V}\deg_G(z)(R_G(u, v)+R_G(v, z)-R_G(u, z)),
\end{align}
by Corollary~\ref{cor:er_approx}, we have that 
\begin{align*}
    |R_G(u, v) - R_{G'}(u, v)| \leq & ~  4\delta \left(\frac{\phi(G)^2}{2d_{\max}}-\delta\right)^{-2},
\end{align*}
for ease of notation, let $\delta':=4\delta \left(\frac{\phi(G)^2}{2d_{\max}}-\delta\right)^{-2}$, and by Corollary~\ref{cor:deg_approx}, 
\begin{align*}
     |\deg_G(u)-\deg_{G'}(u) | \leq & ~ \delta.
\end{align*}
To apply Eq.~\eqref{eq:tetali}, we examine one term as follows:
\begin{align*}
    \deg_G(z) R_G(u, v) - \deg_{G'}(z) R_{G'}(u, v) \leq & ~ \deg_G(z) R_G(u, v) - (\deg_G(z)-\delta)  R_{G'}(u, v) \\
    \leq & ~ \deg_G(z) R_G(u, v) - (\deg_G(z)-\delta)  (R_{G}(u, v)-\delta') \\
    = & ~ \delta' \deg_G(z)+\delta R_G(u, v) -\delta \delta' \\
    \leq & ~ \delta' \deg_G(z)+\delta R_G(u, v)
\end{align*}
Putting it together yields
\begin{align*}
    & ~ h_G(u, v)-h_{G'}(u, v)  \\
    = & ~ \frac{1}{2}\sum_{z\in V} (\deg_G(z)(R_G(u, v)+R_G(v, z)-R_G(u, z))-\deg_{G'}(z)(R_{G'}(u, v)+R_{G'}(v, z)-R_{G'}(u, z))) \\
    \leq & ~ \frac{1}{2}\sum_{z\in V} 3\delta' \deg_G(z)+\delta (R_G(u, v)+R_G(v, z)-R_G(u, z)) \\
    = & ~ \frac{3}{2}\delta' \sum_{z\in V}\deg_G(z)+\frac{1}{2}\delta \sum_{z\in V} R_G(u, v) + R_G(v, z) - R_G(u, z) \\
    \leq & ~ \frac{3}{2}\delta' \sum_{z\in V}\deg_G(z)+\frac{1}{2}\delta \sum_{z\in V} \deg_G(z) (R_G(u, v) + R_G(v, z) - R_G(u, z)) \cdot \frac{1}{\deg_G(z)} \\
    \leq & ~ \frac{3}{2}\delta' \sum_{z\in V}\deg_G(z)+\frac{1}{2}\delta \left(\sum_{z\in V} \deg_G(z) (R_G(u, v) + R_G(v, z) - R_G(u, z))\right) \cdot \frac{1}{d_{\min}} \\
    = & ~ 3\delta' \sum_{e\in E} w_e + \frac{\delta}{d_{\min}} h_G(u, v),
\end{align*}
this completes the proof.
\end{proof}

\section{Missing Proofs}
\label{sec:app:missing}
In this section, we include the missing proofs in previous sections.
\begin{theorem}[Restatement of Theorem~\ref{thm:nn_exact}]
Let $P\subset \R^d$ be an $n$-point dataset and $p\in P$ be a point to-be-deleted. Suppose $P$ is preprocessed by an HNSW data structure and $p$ is not the entry point of the HNSW. Fix a query point $q\in \R^d$ and suppose the search reaches layer $l\in \{1,\ldots,L\}$, let $N(p)$ denote the neighborhood of $p$ at layer $l$. Suppose $q$ reaches $N(p)$, visits and leaves $p$. Consider the deletion procedure that removes $p$ at layer $l$ and forms a clique over $N(p)$, then the search of $q$ on the new graph is equivalent to the search of $q$ on the old graph.
\end{theorem}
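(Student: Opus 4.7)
The plan is to trace the greedy walk step by step on both graphs and show that the new walk coincides with the old one, except that the two steps $u \to p \to v$ are replaced by the single step $u \to v$. Let $u$ denote the vertex at which the walk first enters $N(p)$ and let $v$ denote the vertex the walk moves to upon leaving $p$. The hypothesis supplies the key chain of inequalities: since greedy moved from $u$ to $p$, we have $\|q - p\| \le \|q - c\|$ for every $c \in N(u)$ in the old graph, and $\|q - p\| < \|q - u\|$; since greedy then moved from $p$ to $v$, we have $\|q - v\| \le \|q - c\|$ for every $c \in N(p)$, and $\|q - v\| < \|q - p\|$. In particular $\|q - v\| < \|q - p\| < \|q - u\|$, so $v \neq u$.

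Before the walk reaches $u$, every visited vertex $x$ lies outside $N(p) \cup \{p\}$, because $u$ is by assumption the first element of $N(p)$ to be visited. For such an $x$ the deletion leaves $N(x)$ unchanged: the edge $\{x,p\}$ is absent (since $x \notin N(p)$ implies $p \notin N(x)$), and the new clique edges are added only among members of $N(p)$. Hence the greedy step from $x$ is the same in both graphs, and an induction shows the prefix of the walk is identical. At $u$, the new neighborhood is $(N_{\mathrm{old}}(u) \setminus \{p\}) \cup (N(p) \setminus \{u\})$. Using the inequalities above, every $c \in N_{\mathrm{old}}(u) \setminus \{p\}$ satisfies $\|q - c\| \ge \|q - p\| > \|q - v\|$, while every $c \in N(p) \setminus \{u\}$ satisfies $\|q - c\| \ge \|q - v\|$. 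Thus $v$ is the closest new neighbor of $u$ to $q$ (with strict inequality against every competitor outside $N(p)$), so greedy moves from $u$ directly to $v$ in the new graph, exactly emulating the bypass of $p$.

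It remains to show that the walk from $v$ onward is also unchanged. At $v$, the new neighborhood is $(N_{\mathrm{old}}(v) \setminus \{p\}) \cup (N(p) \setminus \{v\})$. Since $v$ is the argmin of $\|q - \cdot\|$ over $N(p)$, no member of $N(p) \setminus \{v\}$ is strictly closer to $q$ than $v$, and $p$ itself was already not the old greedy choice from $v$ because $\|q - p\| > \|q - v\|$. Consequently the new greedy choice from $v$ agrees with the old greedy choice $w$, and the strict progress $\|q - w\| < \|q - v\|$ forces $w \notin N(p)$. Every subsequent vertex likewise has distance to $q$ strictly less than $\|q - v\|$ and therefore lies outside $N(p)$, with an unchanged neighborhood, and an induction closes the argument. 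The main subtlety I anticipate is the handling of ties in the greedy argmin: the argument above assumes the standard convention that greedy picks a unique argmin determined solely by the distances to $q$, so any two neighbors that remain neighbors in both graphs are compared identically; extra care would be needed only if the tie-breaking depended on the (changing) graph structure rather than on distances.
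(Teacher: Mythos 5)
Your proof is correct and follows essentially the same line as the paper's: identify the vertex $u$ from which the walk enters $p$ and the vertex $v$ to which it exits, then use the greedy inequalities to show that the patched walk takes the single step $u \to v$ in place of $u \to p \to v$, with the rest of the walk unchanged. You are in fact somewhat more careful than the paper in explicitly verifying that the prefix (before $u$) and the suffix (after $v$) are unaffected by the deletion, and in noting the one implicit fact both arguments rely on (that the first $N(p)$-vertex visited is the immediate predecessor of $p$, a consequence of the strictly decreasing distances along the greedy walk).
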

\begin{proof}
Let $G$ denote the graph at layer $l$ before deleting $p$ and $G_{\setminus p}$ denote the graph at layer $l$ after deleting $p$. By assumption, there is some vertex $a \in N(p)$ visited by the walk immediately before visiting $p$, and another vertex $b \in N(p)$ visited immediately after.
We focus our attention on how the graph transformation affects the $a \rightarrow p \rightarrow b$ section of the traversal.
In the original graph, the walk transitioned from $a$ to $p$ because $p$ is the vector in $N(a)$ nearest to $q$.
However, since the walk then transitions from $p$ to $b$, it must be the case that $b$ is closer to $q$ than is any point in $N(p) \cup \{p\}$, and thus $\|b-q\| = \min_{c \in N(p)} \|c - q \| \leq \|p-q\|$. 

Now, consider the new graph where $p$ is deleted and a clique is instead inserted between the vectors in its former neighborhood.
The walk remains the same until it first hits $a$.
From $a$, there are two possible (not mutually-exclusive) types of neighbors the walk could transition to: those that were neighbors of $a$ in the original graph, and those new neighbors it acquired when the clique was inserted on $N(p)$, including $b$.
Because the old walk transitioned into $p$, it must be the case that $\|p - q\| \leq \|x - q\|$ for each $x$ among the original $N(a)$.
Combining this with the previous inequality, it must be the case that $\|b - q\| = \min_y \|y - q\|$, with $y$ ranging over the entire \emph{new} neighborhood of $a$, and thus the walk must still transition into $b$, at which point it proceeds as before. 
\end{proof}

\begin{theorem}[Restatement of Theorem~\ref{thm:nn_prob}]
Let $G=(V, E, w)$ be a weighted graph, define the random walk under the weights $w$ for any edge $\{u, v\}\in E$ as $\Pr[u\rightarrow v\mid w] =  \frac{w(u, v)}{\deg(u)}$ where $\deg(u)=\sum_{z\in N(u)}w(u, z)$. Let $p$ be the point to be deleted as for any $u, v\in N(p)$, define the new weights $w'(u, v)$ as $w'(u, v) =  w(u, v) + \frac{w(u, p)\cdot w(p, v)}{\deg(p)}$, let $E(p)=\{\{u, p\}: u\in N(p) \}$ and $C(p)=\{\{u, v\}: u\neq v, u, v\in N(p)  \}$, then for the new graph $G'=(V\setminus \{p\}, E\setminus E(p)\cup C(p), w')$, we have
\begin{align*}
    \Pr[u\rightarrow v\mid w'] = & ~ \Pr[(u\rightarrow p \rightarrow v) \lor (u\rightarrow v)\mid w].
\end{align*}
\end{theorem}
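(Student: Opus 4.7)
The plan is to prove the identity by a direct expansion of both sides using the definition of the random walk transition probabilities and matching them termwise.

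For the right-hand side, observe that the events $\{u\to v\}$ and $\{u\to p\to v\}$ are disjoint: they correspond to distinct choices of the first step out of $u$. By the Markov property, the two-step event factors as $\Pr[u\to p\mid w]\cdot\Pr[p\to v\mid w]$. Substituting $\Pr[x\to y\mid w]=w(x,y)/\deg(x)$ and pulling out a factor of $1/\deg(u)$ yields
\begin{align*}
\Pr\bigl[(u\to p\to v)\lor(u\to v)\mid w\bigr]
=\frac{1}{\deg(u)}\!\left(w(u,v)+\frac{w(u,p)\,w(p,v)}{\deg(p)}\right)
=\frac{w'(u,v)}{\deg(u)},
\end{align*}
where the last equality is the definition of $w'$.

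For the left-hand side, by definition $\Pr[u\to v\mid w']=w'(u,v)/\deg_{G'}(u)$, so the theorem reduces to the degree identity $\deg_{G'}(u)=\deg_G(u)$. I would verify this by partitioning $u$'s neighbors in $G'$ into (i) vertices $z\in N(p)\setminus\{u\}$, whose weights come from the clique-patching rule $w'(u,z)=w(u,z)+w(u,p)w(p,z)/\deg(p)$, and (ii) vertices $z\notin N(p)\cup\{p\}$, whose weights are inherited unchanged from $G$. Summing and invoking $\sum_{z\in N(p)\setminus\{u\}}w(p,z)=\deg(p)-w(u,p)$, the extra clique contribution $\tfrac{w(u,p)}{\deg(p)}\bigl(\deg(p)-w(u,p)\bigr)$ compensates for the deleted edge weight $w(u,p)$, matching the Schur-complement form of the patched Laplacian and giving $\deg_{G'}(u)=\deg_G(u)$.

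The main obstacle is the bookkeeping in step (iii) above: one must correctly decompose $N_{G'}(u)$, avoid double-counting for pairs $u,v\in N(p)$ that were already adjacent in $G$, and handle the self-loop term at $u$ of weight $w(u,p)^2/\deg(p)$ that the star-mesh formula produces in the $z=u$ case (which is exactly what is needed to make the degrees agree on the nose). Once this accounting is in place, both sides collapse to $w'(u,v)/\deg_G(u)$ and the identity follows immediately.
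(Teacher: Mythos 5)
Your approach mirrors the paper's exactly: expand both sides to $w'(u,v)$ over the respective degree and reduce the claim to the degree identity $\deg_{G'}(u)=\deg_G(u)$. Your self-loop observation is in fact sharper than the paper's own write-up. Working it out, $\tfrac{w(u,p)}{\deg(p)}\bigl(\deg(p)-w(u,p)\bigr)=w(u,p)-w(u,p)^2/\deg(p)$, so \emph{without} a self-loop at $u$ of weight $w(u,p)^2/\deg(p)$ the new degree is $\deg_G(u)-w(u,p)^2/\deg(p)$, not $\deg_G(u)$ — which is exactly the issue you flag. The paper's displayed computation reaches $\deg'(u)=\deg(u)$ by summing $w'(u,v)$ over \emph{all} $v\in N(p)$, silently including the $v=u$ (self-loop) term, while the stated edge set $C(p)=\{\{u,v\}:u\neq v,\ u,v\in N(p)\}$ excludes it. So you have correctly identified both the hinge of the degree identity and a small mismatch between the theorem's stated $C(p)$ and what the star-mesh transform actually produces; with the self-loop admitted, your argument closes and coincides with the paper's.
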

\begin{proof}
Consider the neighborhood of $p$, $N(p)$, let $u, v\in N(p)$, we reason over the probability that the walk moves from $u$ to $v$. Note that after deletion, the only change is that the vertex $p$ has been removed from the graph, therefore, there is no path from $u\rightarrow p\rightarrow v$. On the other hand, there is now a direct path from $u$ to $v$ under the new weight $w'(u, v)$, so we need to show that the probability is not affected.

Recall that for any vertex $z\in N(u)$, we have that the probability the walk moves from $u$ to $z$ is
\begin{align*}
    \frac{w(u, z)}{\deg(u)}
\end{align*}
where $\deg(u)=\sum_{z\in N(u)}w(u, z)$. Before deleting $p$, we calculate the probability of the path $u\rightarrow p\rightarrow v$ together with the probability of $u\rightarrow v$:
\begin{align*}
    & ~ \Pr[(u\rightarrow p\rightarrow v) \lor (u\rightarrow v)\mid w] \\
    = & ~ \Pr[u\rightarrow p\mid w]\cdot \Pr[p\rightarrow v\mid u\rightarrow p, w] + \Pr[u\rightarrow v\mid w]\\
    = & ~ \frac{w(u, p)}{\deg(u)}\cdot \frac{w(p, v)}{\deg(p)}+\frac{w(u, v)}{\deg(u)}
\end{align*}
After deletion, the random walk is performed via new weights $w'$:
\begin{align}\label{eq:after_1}
    \Pr[u\rightarrow v\mid w'] = & ~ \frac{w'(u, v)}{\deg'(u)} \notag\\
    = & ~ \frac{w(u, v)}{\deg'(u)}+\frac{w(u,p)\cdot w(p, v)}{\deg(p) \deg'(u)}
\end{align}
note that
\begin{align*}
    \deg'(u) = & ~ \sum_{v\in N(u)\setminus N(p)} w(u, v)+\sum_{v\in N(p)} w'(u, v) \\
    = & ~ \deg(u) - w(u, p) + \sum_{v\in N(p)} \frac{w(u, p)\cdot w(p, v)}{\deg(p)} \\
    = & ~ \deg(u) - w(u, p) + w(u, p) \\
    = & ~ \deg(u)
\end{align*}
therefore
\begin{align*}
    \eqref{eq:after_1} = & ~ \frac{w(u, v)}{\deg(u)} + \frac{w(u, p)\cdot w(p, v)}{\deg(p)\cdot \deg(u)},
\end{align*}
which is the same as the probability of walking from $u$ to $v$ either via the edge $\{u, v\}$ or the path $u\rightarrow p \rightarrow v$. 
\end{proof}
\begin{theorem}[Restatement of Theorem~\ref{thm:sparsify}]
Let $G=(V, E, w)$ and $L\in \R^{n\times n}$ be its graph Laplacian matrix. Suppose we generate a matrix $\wt C\in \R^{s\times n}$ by sampling each row of $\sqrt W B$ proportional to its squared row norm with $s=200\epsilon^{-2}$, and reweight row $i$ by $1/(p_i s)$ where $p_i=\|(\sqrt W B)_{i,*}\|_2^2/\|\sqrt W B\|_F^2$, then with probability at least 0.99, $$\|\wt C^\top \wt C-L\|_F \leq \epsilon \cdot\tr[W].$$
\end{theorem}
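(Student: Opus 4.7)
The plan is to recognize this as the classical row-norm sampling guarantee of \cite{dk01,fkv04,kv17} applied to the factorization $L = B^\top W B = (\sqrt W B)^\top (\sqrt W B)$, and then tune constants. Write $M = \sqrt W B \in \R^{m\times n}$ and let $M_{i,*}$ denote its $i$th row. Each sample draws an index $k_i \in [m]$ independently with probability $p_{k_i} = \|M_{k_i,*}\|_2^2 / \|M\|_F^2$, and forms the rescaled row $\tilde C_{i,*} = M_{k_i,*}/\sqrt{p_{k_i} s}$, so that $\tilde C^\top \tilde C = \sum_{i=1}^s X_i$ with
\[
X_i := \frac{1}{p_{k_i} s}\, M_{k_i,*}^\top M_{k_i,*}.
\]
A direct calculation shows $\E[X_i] = \frac{1}{s}\sum_k p_k \cdot \frac{1}{p_k}\, M_{k,*}^\top M_{k,*} = \frac{1}{s} L$, so the estimator is unbiased, $\E[\tilde C^\top \tilde C] = L$.

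Next I would bound the second moment. Since the $X_i$'s are independent with mean $L/s$,
\[
\E\bigl[\,\|\tilde C^\top \tilde C - L\|_F^2\,\bigr] = \sum_{i=1}^s \E\bigl[\,\|X_i - L/s\|_F^2\,\bigr] \le \sum_{i=1}^s \E\bigl[\,\|X_i\|_F^2\,\bigr].
\]
Each $X_i$ is a rank-one matrix, so $\|X_i\|_F^2 = \frac{1}{p_{k_i}^2 s^2} \|M_{k_i,*}\|_2^4$, and thus
\[
\E\bigl[\,\|X_i\|_F^2\,\bigr] = \frac{1}{s^2}\sum_{k=1}^m p_k \cdot \frac{\|M_{k,*}\|_2^4}{p_k^2} = \frac{1}{s^2}\sum_{k=1}^m \frac{\|M_{k,*}\|_2^4}{p_k} = \frac{\|M\|_F^4}{s^2},
\]
where the last equality is the whole point of sampling proportional to squared row norms: it is precisely the choice of $p_k$ that makes $\sum_k \|M_{k,*}\|_2^4/p_k$ collapse to $\|M\|_F^4$. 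Summing over $i$ yields
\[
\E\bigl[\,\|\tilde C^\top \tilde C - L\|_F^2\,\bigr] \le \frac{\|M\|_F^4}{s}.
\]

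The last step is to translate $\|M\|_F$ into $\tr[W]$ and apply Markov. Each row of $M = \sqrt W B$ corresponding to edge $e=\{u,v\}$ equals $\sqrt{w_e}(e_u - e_v)^\top$, so $\|M_{e,*}\|_2^2 = 2 w_e$ and $\|M\|_F^2 = 2\sum_e w_e = 2\tr[W]$. Plugging in $s = 200\epsilon^{-2}$ gives
\[
\E\bigl[\,\|\tilde C^\top \tilde C - L\|_F^2\,\bigr] \le \frac{4\tr[W]^2}{s} = \frac{\epsilon^2 \tr[W]^2}{50}.
\]
Markov's inequality (Lemma~\ref{lem:markov}) then implies $\|\tilde C^\top \tilde C - L\|_F^2 \le 100 \cdot \epsilon^2 \tr[W]^2 / 50 = 2\epsilon^2\tr[W]^2$ with probability at least $0.99$, i.e., $\|\tilde C^\top \tilde C - L\|_F \le \sqrt{2}\,\epsilon\,\tr[W]$; rescaling $\epsilon$ by a constant (or choosing the universal sampling constant to be $400$ rather than $200$) absorbs the $\sqrt 2$ and gives the claimed bound.

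There is no real obstacle here: the entire argument is a standard second-moment computation. The one step to perform carefully is the algebraic identity $\sum_k \|M_{k,*}\|_2^4 / p_k = \|M\|_F^4$, which is exactly what forces the sampling probabilities to be proportional to squared row norms, and the identity $\|M\|_F^2 = 2\tr[W]$, which connects the matrix-theoretic quantity $\|M\|_F$ to the graph-theoretic total weight $\tr[W]$ appearing in the statement.
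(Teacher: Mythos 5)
Your proposal is correct and follows essentially the same route as the paper's own proof: expressing $\wt C^\top \wt C$ as an average of independent rank-one unbiased estimators of $L$, computing the second moment of the Frobenius-norm deviation (which collapses to $\|M\|_F^4/s$ precisely because of the squared-row-norm sampling probabilities), applying Markov, and converting $\|M\|_F^2 = 2\tr[W]$. The only cosmetic difference is that you upper-bound the centered second moment by $\E[\|X_i\|_F^2]$ in one step, whereas the paper expands $\E[\|Y - C^\top C\|_F^2]$ term by term via $\E[\tr[Y C^\top C]]$; your handling of the constant discrepancy (factor $\sqrt 2$) is also the right observation, and indeed the paper's own proof silently absorbs a similar constant.
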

\begin{proof}
For simplicity of notation, we let $C:=\sqrt W B$. 
Define the random variable $X_i=\frac{1}{p_i} C_{i,*}C_{i,*}^\top$, where $p_1,\ldots,p_m$ are the sampling probabilities of the process, i.e. $p_i=\|C_{i,*} \|_2^2/\|C\|_F^2$. We prove several important properties of these $X_i$'s.
\begin{itemize}
    \item Expectation. Note that
    \begin{align*}
        \E[X] = & ~ \sum_{i=1}^m p_i\cdot \frac{1}{p_i} C_{i,*}C_{i,*}^\top \\
        = & ~ \sum_{i=1}^m  C_{i,*}C_{i,*}^\top  \\
        = & ~ C^\top C
    \end{align*}
    \item Expected Frobenius norm. We compute the entrywise variance of $X$:
    \begin{align*}
        & ~ \E[\|X\|_F^2] \\
        = & ~ \sum_{i,j=1}^n \E[x_{i,j}^2] \\
        = & ~ (\sum_{i,j=1}^n \sum_{k=1}^n p_k\frac{1}{p_k^2}\cdot C_{k,i}^2C_{k,j}^2) \\
        = & ~ \sum_{k=1}^n \frac{1}{p_k} \|C_{k,*}\|_2^4  \\
        = & ~ \|C\|_F^4,
    \end{align*}
    let $Y=\frac{1}{s}\sum_{i=1}^s X_i$, then
    \begin{align*}
        & ~ \E[\|Y\|_F^2] \\
        = & ~ \E[\|\frac{1}{s}\sum_{i=1}^s X_i\|_F^2] \\
        = & ~ \frac{1}{s^2} (\sum_{i=1}^s \E[\|X\|_F^2]+2\sum_{i\neq j}\E[\tr[X_i X_j]]) \\
        = & ~ \frac{\|C\|_F^4}{s}+\frac{2}{s^2} \sum_{i\neq j}\tr[\E[X_iX_j]] \\
        = & ~ \frac{\|C\|_F^4}{s}+\frac{2}{s^2} \sum_{i\neq j}\tr[\E[X_i]\E[X_j]] \\
        = & ~ \frac{\|C\|_F^4}{s}+\frac{s-1}{s} \|C^\top C\|_F^2.
    \end{align*}
    \item Probability. We will be using Markov inequality on $\|Y-C^\top C\|_F^2$, to do so we first compute
    \begin{align*}
         \E[\tr[YC^\top C]]
        = & ~ \tr[\E[YC^\top C]] \\
        = & ~ \tr[\E[Y]C^\top C] \\
        = & ~ \tr[C^\top CC^\top C] \\
        = & ~ \|C^\top C\|_F^2,
    \end{align*}
    and we can compute the expectation of the squared Frobenius norm deviation:
    \begin{align*}
        & ~ \E[\|Y-C^\top C\|_F^2] \\
        = & ~ \E[\|Y\|_F^2]+\|C^\top C\|_F^2-2\E[\tr[YC^\top C]] \\
        = & ~ \frac{\|C\|_F^4}{s}+\frac{2s-1}{s}\|C^\top C\|_F^2-2\|C^\top C\|_F^2 \\
        \leq & ~ \frac{\|C\|_F^4}{s}.
    \end{align*}
    Set $s=100\epsilon^{-2}$, we obtain that $\E[\|Y-C^\top C\|_F^2]\leq \epsilon^2 \|C\|_F^4$. By Markov's inequality (Lemma~\ref{lem:markov}), we have 
    \begin{align*}
        \Pr[\|Y-C^\top C\|_F^2>\epsilon^2 \|C\|_F^4] \leq & ~ \frac{\epsilon^2/100 \cdot \|C\|_F^4}{\epsilon^2 \|C\|_F^4}
        = \frac{1}{100},
    \end{align*}
    as desired. Utilizing the structure of $C$, we could further simplify the bound: $\|C\|_F^2=2\sum_{e\in E} w_e=2\|w\|_1=2\tr[W]$. \qedhere 
\end{itemize}
\end{proof}

\begin{corollary}[Restatement of Corollary~\ref{lem:error_example}]
Let $G=(V, E, w)$ be a weighted complete graph and $G'=(V, E', w')$ be the induced graph by applying Theorem~\ref{thm:sparsify} to $G$, and assume $G'$ is connected. If $|E'|=O(\max_{u, v\in N(p)}h_G(u, v)\cdot n)$, then with probability at least 0.99, for any $u, v\in V$, $|h_G(u, v) - h_{G'}(u, v)|\leq \sqrt{n\cdot h_G(u, v)}$. given one of the two settings:
\begin{enumerate}
    \item Single cluster: for any $u, v\in V$, $w(u, v)=O(1)$;
    \item Many small clusters: there are $\sqrt{n}$ clusters of size $\sqrt{n}$. Within each cluster, the edge weights are $O(1)$, and between clusters, the edge weights are $O(1/n)$.
\end{enumerate}
\end{corollary}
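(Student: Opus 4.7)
The plan is to invoke Theorem~\ref{thm:hitting_time_formal} with $\delta := \epsilon\cdot\tr[W]$ as supplied by Theorem~\ref{thm:sparsify} (so $\delta = \Theta(\tr[W]/\sqrt{s})$), then substitute the explicit graph parameters of each setting into
\[
|h_G(u,v) - h_{G'}(u,v)| \leq \frac{\delta}{d_{\min}}\,h_G(u,v) + 12\delta\left(\frac{\phi(G)^2}{2 d_{\max}} - \delta\right)^{-2} \tr[W]
\]
and verify that taking $s = \Theta(\max_{u,v}h_G(u,v)\cdot n)$ makes both summands at most $O(\sqrt{n\,h_G(u,v)})$.

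For Setting 1, I would compute $\tr[W] = \Theta(n^2)$, $d_{\min} = d_{\max} = \Theta(n)$, and $\phi(G) = \Theta(n)$ (any cut $S$ has boundary weight $\Theta(|S|(n-|S|))$, giving ratio $\Theta(n)$); moreover the weighted random walk is essentially uniform so $h_G(u,v) = \Theta(n)$. Plugging $s = \Theta(n^2)$ gives $\delta = \Theta(n)$; the first term then evaluates to $\Theta(n)$ and the second to $\Theta(\delta\cdot\tr[W]/(\phi^2/d_{\max})^2) = \Theta(n\cdot n^2/n^2) = \Theta(n)$, both matching $\sqrt{n\,h_G(u,v)} = \Theta(n)$. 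For Setting 2 I would compute $\tr[W] = \Theta(n^{3/2})$ (dominated by the $\Theta(n^{3/2})$ intra-cluster edges) and $d_{\min} = d_{\max} = \Theta(\sqrt{n})$; the hitting time analysis gives $h_G(u,v) \in [\Theta(\sqrt{n}), \Theta(n)]$, since each intra-cluster excursion costs $\Theta(\sqrt{n})$ steps and $\Theta(\sqrt{n})$ inter-cluster excursions are needed to land in any particular target cluster.

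The main obstacle is the additive term in Setting 2. Applying Cheeger's inequality to whole-cluster cuts only yields $\phi(G)^2/(2d_{\max}) = \Theta(1/\sqrt{n})$, which is too small to absorb $\delta = \Theta(\sqrt{n})$ coming from $s = \Theta(n^2)$ samples; the bracketed denominator becomes non-positive under a naive substitution. I would close this gap by either (i) computing $\lambda_2(L)$ directly from the explicit block structure using cluster-indicator Rayleigh quotients, which sharpens the Cheeger-based lower bound on $\lambda_2$, or (ii) decomposing the Frobenius-norm error: since inter-cluster edges contribute only $\Theta(n)$ to $\tr[W]$ versus $\Theta(n^{3/2})$ from intra-cluster ones, the sampling error $\|L'-L\|_F$ concentrates within clusters, effectively reducing the within-cluster analysis to Setting~1 applied to $K_{\sqrt n}$, with inter-cluster transitions handled separately via a mixing-time argument over the quotient graph. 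Either refinement, combined with $s = \Theta(n^2)$, should yield $|h_G(u,v) - h_{G'}(u,v)| \leq O(\sqrt{n\,h_G(u,v)})$ with probability at least $0.99$, completing the proof.
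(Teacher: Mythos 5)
Your proposal follows the same template as the paper's proof: invoke Theorem~\ref{thm:sparsify} to get $\|L-L'\|_F\le\delta=\epsilon\tr[W]$, feed $\delta$ into Theorem~\ref{thm:hitting_time_formal}, compute $\tr[W]$, $d_{\min}$, $d_{\max}$, $\phi(G)$ for each setting, and balance the multiplicative and additive terms by choosing $\epsilon^{-1}=\sqrt{n\,h_G(u,v)}$. Your parameter estimates for both settings agree with the paper's.

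The issue you flag in Setting~2 is genuine, and in fact the paper's own proof runs into the same wall without acknowledging it. With $\phi(G)^2/(2d_{\max})=\Theta(n^{-1/2})$ and $\delta=\epsilon\tr[W]=\Theta(n^{3/2}/\sqrt{n\,h_G})$, the denominator $\phi(G)^2/(2d_{\max})-\delta$ is non-positive for the intended choice of $\epsilon$ whenever $h_G$ is at most $n$, so Lemma~\ref{lem:laplacian_inv_diff} and Theorem~\ref{thm:hitting_time_formal} cannot be applied verbatim. The paper's line ``$\frac{\epsilon n^4}{(1-\epsilon n^2)^2}\leq 4\epsilon^{-1}$'' is stated as an unconditional algebraic fact, but it quietly rests on $\epsilon n^2<1$, while the subsequent choice $\epsilon^{-1}=\sqrt{n\cdot h_G(u,v)}$ forces $\epsilon n^2\geq \sqrt{n}>1$. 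You identified this correctly, and you did not try to hide it.

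However, your proposed fix~(i) does not actually close the gap. Computing $\lambda_2$ exactly for the many-clusters graph (writing $L=(1-1/n)L_1+(1/n)L_{K_n}$ with $L_1$ the block-diagonal intra-cluster Laplacian, and noting these commute since both annihilate constants) yields $\lambda_2(L)=\Theta(1)$, via the cluster-indicator eigenvectors. That is indeed a factor of $\sqrt{n}$ better than Cheeger's $\Theta(n^{-1/2})$, but it is still far below $\delta=\Theta(\sqrt{n})$, so $\lambda_2-\delta$ remains negative and Lemma~\ref{lem:p_inv} still does not apply. (For Setting~1 the exact-eigenvalue sharpening does help, since $\lambda_2(L_{K_n})=n$ comfortably exceeds $\delta=\Theta(n)$ once constants are tracked; you implicitly relied on dropping the $-\delta$ there, which happens to be salvageable but deserves a sentence.) Your fix~(ii)---decomposing the Frobenius error into intra-cluster and inter-cluster pieces and analyzing each separately---is a plausible direction, but as written it is a sketch with no lemma behind it; in particular you would need a version of Tetali's identity or of the perturbation bound that is robust to losing connectivity in the quotient graph, and that is not immediate. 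The bottom line is that you spotted a real defect in the paper's argument for Setting~2 (and arguably for Setting~1 if only Cheeger is used), but neither of your proposed repairs is carried far enough to constitute a proof, and fix~(i) in particular cannot work on its own.
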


\begin{proof}
We prove the two settings item by item.
\begin{enumerate}
    \item Single cluster, that is, for all $u, v, u', v'\in V$, we have $w(u, v)=O(w(u', v'))=O(1)$. In this case, $\tr[W]=O(n^2)$, $d_{\min}, d_{\max}=O(n)$ and $\phi(G)=O(n)$. The multiplicative error factor for $h_G(u, v)$ is then $\epsilon\cdot n$ and the additive error term is $\epsilon \cdot n^4 (n-\epsilon \cdot n^2)^{-2}=\frac{\epsilon n^2}{(1-\epsilon n)^2}\leq 4\epsilon^{-1}$, so the overall error is $\epsilon n\cdot h_G(u, v)+4\epsilon^{-1}$, equating these two terms sets $\epsilon^{-1}=\sqrt{n\cdot h_G(u, v)}$. According to Theorem~\ref{thm:sparsify}, this means that we can sparsify the number of edges in clique $C(p)$ from $O(|N(p)|^2)$ down to $O(\max_{u, v\in N(p)} h(u, v)\cdot |N(p)|)$.
    \item Many small clusters, in particular the max edge weight is $O(1)$ while the min edge weight is $O(n^{-1})$. Among the $n$ points, we assume they are clustered $n^{0.5}$ parts, each of size $n^{0.5}$. For the $O(n^2)$ intercluster edges, the edge weights are $O(n^{-1})$, while other edges have weights $O(1)$. In this case, $\tr[W]=n+n^{1.5}\leq O(n^{1.5})$, $d_{\min}, d_{\max}=O(n^{0.5})$ and $\phi(G)=O(1)$, and the multiplicative factor is $\epsilon \cdot n$ and the additive factor is $\epsilon\cdot n^3 (n^{-0.5}-\epsilon \cdot n^{1.5})^{-2}=\frac{\epsilon n^4}{(1-\epsilon n^2)^2}\leq 4\epsilon^{-1}$, so the overall error is $\epsilon n \cdot h_G(u, v)+4\epsilon^{-1}$. We would set $\epsilon^{-1}=\sqrt{n\cdot h_G(u, v)}$ to minimize the error. Note that now the number of edges in the sparsified graph is $O(\max_{u, v\in N(p)} h(u, v)\cdot |N(p)|)$. \qedhere
\end{enumerate}
\end{proof}
\section{HNSW Data Structure}
\label{sec:app:hnsw}

We review the both classical HNSW data structure proposed in~\cite{my20}, also provide more details about our random walk-based variant of it in this section.

\subsection{Classical HNSW}
In this section, we provide a more in-depth review of the HNSW data structure. We lay out its structure in several algorithms, starting from the search procedure.

\begin{algorithm}[!ht]
\caption{HNSW algorithm: layer search.}
\label{alg:layer_search}
\begin{algorithmic}[1]
\Procedure{LayerSearch}{$q\in \R^d,P\subset \R^d, l\in \{1,\ldots,L\}, u\in P, m\in [|P|]$}
\State \Comment{$l$ is the layer of the graph, $u$ is the starting point and $m$ is the total number of nearest neighbors to return.}
\State $\texttt{visited}\gets \{ u\}$ \Comment{Visited vertices.}
\State $\texttt{candidates}\gets \{ u\}$ \Comment{Candidate vertices.}
\State $\texttt{nbrs}\gets \{u \}$ \Comment{Dynamic list of nearest neighbors.}
\While{$|\texttt{candidates}|>0$}
\State $c\gets \text{nearest neighbor of $q$ in \texttt{candidates}}$
\State $f\gets \text{furthest neighbor of $q$ in \texttt{nbrs}}$
\State $\texttt{candidates}\gets \texttt{candidates}\setminus \{c\}$
\If{$\|c-q\|>\|f-q\|$}
\State \textbf{break}
\EndIf
\For{$v\in \text{nbrhood}(c)$ at layer $l$}
\If{$v\not\in \texttt{visited}$}
\State $\texttt{visited}\gets \texttt{visited}\cup\{v\}$
\State $f\gets \text{furthest neighbor of $q$ in \texttt{nbrs}}$
\If{$\|v-q\|<\|f-q\|$ or $|\texttt{nbrs}|<m$} \Comment{Either $v$ is closer or \texttt{nbrs} is not full.}
\State $\texttt{candidates}\gets \texttt{candidates}\cup \{v\}$
\State $\texttt{nbrs}\gets \texttt{nbrs}\cup \{v\}$
\If{$|\texttt{nbrs}|>m$}
\State Remove furthest neighbor of $q$ in \texttt{nbrs}
\EndIf
\EndIf
\EndIf
\EndFor
\EndWhile
\State \Return $\texttt{nbrs}$
\EndProcedure
\end{algorithmic}
\end{algorithm}

To construct the HNSW data structure, we implement the insertion procedure.

\begin{algorithm}[!ht]
\caption{HNSW algorithm: insertion.}
\label{alg:insert}
\begin{algorithmic}[1]
\Procedure{Insert}{$q\in \R^d, P\subset \R^d, u\in P, \texttt{efConstruction} \in [|P|], m\in [|P|], m_{\max}\in [|P|]$}
\State $l\gets \lfloor -\ln({\rm Unif}(0, 1))/\ln m\rfloor$
\For{$l_c=L\to l+1$}
\State $\texttt{candidates}\gets \textsc{LayerSearch}(q, P, l_c, u, 1)$
\State $u\gets \arg\min_{v\in \texttt{candidates}} \|v-q\|$
\EndFor
\For{$l_c=\min\{L, l\}\to 1$}
\State $\texttt{candidates}\gets \textsc{LayerSearch}(q, P, l_c, u, \texttt{efConstruction})$
\State $\texttt{nbrs}\gets \textsc{NeighborSelect}(q, \texttt{candidates}, m, l_c)$
\State Add edges between $q$ and \texttt{nbrs} at layer $l_c$
\EndFor
\For{$v\in \texttt{nbrs}$}
\State $N(v)\gets \text{nbrhood}(v)$ at layer $l_c$
\If{$|N(v)|>m_{\max}$}
\State $\texttt{newNbrs}\gets \textsc{NeighborSelect}(v, N(v), m_{\max}, l_c)$
\State Add edges between $v$ and \texttt{newNbrs} at layer $l_c$
\EndIf
\EndFor
\State $u\gets \texttt{candidates}$
\If{$l>L$}
\State $u\gets q$
\EndIf
\EndProcedure
\end{algorithmic}
\end{algorithm}

The insertion works as follows: it simply performs Algorithm~\ref{alg:layer_search} from $L$ to $l+1$ where $l$ is the designated layer for $q$ to be inserted. Starting from layer $l$ to 1, we increase the number of points to be returned by Algorithm~\ref{alg:layer_search} (\texttt{efConstruction} can be much larger $m$) and then select $m$ of them to add edges using \textsc{NeighborSelect}. This procedure is then repeated for the new neighbors of $q$ to prune edges.

The only missing piece is the neighbor selection procedure.~\cite{my20} recommends two types of neighbor selection: one is simply taking the top-$m$ nearest neighbors, while the other involves a more sophisticated heuristic procedure. We refer readers to~\cite{my20} for more details.

\subsection{Probabilistic HNSW}
Inspired by the classical HNSW data structure, we propose a random walk-based approach for both searching and constructing the data structure. While we give an overview in Section~\ref{sec:softmax_walk}, here we provide the complete algorithm.

\begin{algorithm}[!ht]
\caption{HNSW algorithm: layer search.}
\label{alg:layer_search_rw}
\begin{algorithmic}[1]
\Procedure{LayerSearchRandomWalk}{$q\in \R^d,P\subset \R^d, l\in \{0,\ldots,L\}, u\in P, m\in [|P|], r\in \R_{+}$}
\State \Comment{$l$ is the layer of the graph, $u$ is the starting point and $m$ is the total number of nearest neighbors to return.}
\State $\texttt{visited}\gets \{ u\}$ \Comment{Visited vertices.}
\State $\texttt{candidates}\gets \{ u\}$ \Comment{Candidate vertices.}
\State $\texttt{nbrs}\gets \{u \}$ \Comment{Dynamic list of nearest neighbors.}
\While{$|\texttt{candidates}|>0$}
\State { Sample $c$ with probability $\frac{\exp(-r^2\cdot \|c-q\|^2)}{\sum_{v\in \texttt{candidates}} \exp(-r^2\cdot \|v-q\|^2)}$}
\State $f\gets \text{furthest neighbor of $q$ in \texttt{nbrs}}$
\State $\texttt{candidates}\gets \texttt{candidates}\setminus \{c\}$
\If{$\|c-q\|>\|f-q\|$}
\State \textbf{break}
\EndIf
\For{$v\in \text{nbrhood}(c)$ at layer $l$}
\If{$v\not\in \texttt{visited}$}
\State $\texttt{visited}\gets \texttt{visited}\cup\{v\}$
\State $f\gets \text{furthest neighbor of $q$ in \texttt{nbrs}}$
\If{$\|v-q\|<\|f-q\|$ or $|\texttt{nbrs}|<m$} \Comment{Either $v$ is closer or \texttt{nbrs} is not full.}
\State $\texttt{candidates}\gets \texttt{candidates}\cup \{v\}$
\State $\texttt{nbrs}\gets \texttt{nbrs}\cup \{v\}$
\If{$|\texttt{nbrs}|>m$}
\State Remove furthest neighbor of $q$ in \texttt{nbrs}
\EndIf
\EndIf
\EndIf
\EndFor
\EndWhile
\State \Return $\texttt{nbrs}$
\EndProcedure
\end{algorithmic}
\end{algorithm}

The key distinction between Algorithm~\ref{alg:layer_search} and~\ref{alg:layer_search_rw} is on line 7: instead of taking the nearest neighbor, Algorithm~\ref{alg:layer_search_rw} samples a point to move to based on the softmax of negative squared distance. For insertion, we give an alternative presentation that shows how to construct the one layer of HNSW by first building a complete graph, then randomly sparsifying it. It is showcases in Algorithm~\ref{alg:construct_sparsify} and Figure~\ref{fig:construct}.

\begin{algorithm}[!ht]
\caption{Construction via sparsification.}
\label{alg:construct_sparsify}
\begin{algorithmic}[1]
\Procedure{ConstructOneLayer}{$P\in (\R^d)^n, m\in [n], r\in \R_{+}$}
\State Determine an ordering of the points in $P$, label them as $v_1,\ldots,v_n$
\State $G\gets (\{v_1\}, \emptyset)$
\For{$i=2\to n$}
\State $\texttt{candidates}\gets \{v_1\}$
\State $\texttt{visited}\gets \{ v_1\}$
\State $V\gets V\cup\{v_i\}$
\State {\color{blue} // Densify phase}
\State $E\gets E\cup \{\{v_j, v_i \}: j\in [i-1], w(v_j, v_i)=\exp(-r^2\cdot \|v_j-v_i\|^2)\}$
\State {\color{blue} // Random walk phase}
\While{$|\texttt{candidates}|>0$}
\State Sample $c\in \texttt{candidates}$ with probability $\frac{\exp(-r^2\cdot \|v_i-c\|^2)}{\sum_{u\in \texttt{candidates}} \exp(-r^2\cdot \|v_i-u\|^2)}$
\State $\texttt{candidates}\gets \texttt{candidates}\setminus \{ c\}$
\For{$u\in \text{nbrhood}(c)$}
\If{$u\not\in \texttt{visited}$}
\State $\texttt{visited}\gets \texttt{visited}\cup \{u\}$
\State $\texttt{candidates}\gets \texttt{candidates}\cup \{u\}$ 
\EndIf
\EndFor
\EndWhile
\State {\color{blue} // Sparsify phase}
\State $\texttt{sample}\gets$ Sample $m$ points from \texttt{visited} independently without replacement, with probability of sampling $u$ being $\frac{\exp(-r^2\cdot \|v_i-u\|^2)}{\sum_{v\in \texttt{visited}} \exp(-r^2\cdot \|v_i-v\|^2)}$
\State $E\gets E\setminus \{\{v_j, u\}: u\not\in \texttt{sample} \}$
\For{$u\in \texttt{sample}$}
\If{${\rm deg}(u)>m$}
\State Sparsify $\text{nbrhood}(u)$ by sampling $m$ edges
\EndIf
\EndFor
\EndFor
\EndProcedure
\end{algorithmic}
\end{algorithm}

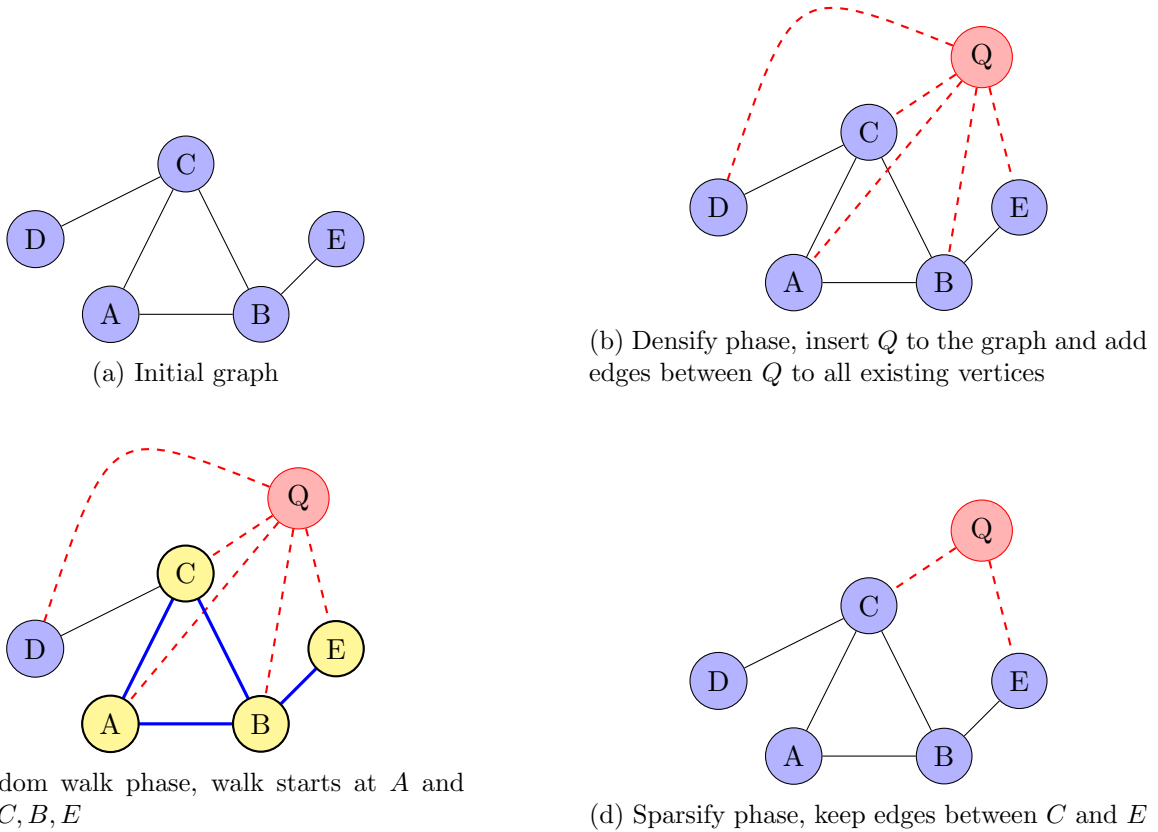
\begin{figure*}[!ht]
    \centering
    \begin{subfigure}[b]{0.45\textwidth}
        \centering
        \begin{tikzpicture}
            \node[circle, draw=black, fill=blue!30] (A) at (0, 0) {A};
            \node[circle, draw=black, fill=blue!30] (B) at (2, 0) {B};
            \node[circle, draw=black, fill=blue!30] (C) at (1, 2) {C};
            \node[circle, draw=black, fill=blue!30] (D) at (-1, 1) {D};
            \node[circle, draw=black, fill=blue!30] (E) at (3, 1) {E};
            \draw[-] (A) -- (B);
            \draw[-] (A) -- (C);
            \draw[-] (B) -- (C);
            \draw[-] (C) -- (D);
            \draw[-] (B) -- (E);
        \end{tikzpicture}
        \caption{Initial graph}
    \end{subfigure}
    \hfill
    \begin{subfigure}[b]{0.45\textwidth}
        \centering
        \begin{tikzpicture}
            \node[circle, draw=black, fill=blue!30] (A) at (0, 0) {A};
            \node[circle, draw=black, fill=blue!30] (B) at (2, 0) {B};
            \node[circle, draw=black, fill=blue!30] (C) at (1, 2) {C};
            \node[circle, draw=black, fill=blue!30] (D) at (-1, 1) {D};
            \node[circle, draw=black, fill=blue!30] (E) at (3, 1) {E};
            \node[circle, draw=red, fill=red!30] (Q) at (2.5, 3) {Q};
            
            \draw[-] (A) -- (B);
            \draw[-] (A) -- (C);
            \draw[-] (B) -- (C);
            \draw[-] (C) -- (D);
            \draw[-] (B) -- (E);
            
            \draw[red, dashed, thick] (Q) -- (A);
            \draw[red, dashed, thick] (Q) -- (B);
            \draw[red, dashed, thick] (Q) -- (C);
            \draw[red, dashed, thick] (Q) .. controls (0, 4) .. (D);
            \draw[red, dashed, thick] (Q) -- (E);
        \end{tikzpicture}
        \caption{Densify phase, insert $Q$ to the graph and add edges between $Q$ to all existing vertices}
    \end{subfigure}

    \vspace{1em}
    \begin{subfigure}[b]{0.45\textwidth}
        \centering
        \begin{tikzpicture}
            \node[circle, draw=black, fill=blue!30] (A) at (0, 0) {A};
            \node[circle, draw=black, fill=blue!30] (B) at (2, 0) {B};
            \node[circle, draw=black, fill=blue!30] (C) at (1, 2) {C};
            \node[circle, draw=black, fill=blue!30] (D) at (-1, 1) {D};
            \node[circle, draw=black, fill=blue!30] (E) at (3, 1) {E};
            \node[circle, draw=red, fill=red!30] (Q) at (2.5, 3) {Q};
            
            \draw[-] (A) -- (B);
            \draw[-] (A) -- (C);
            \draw[-] (B) -- (C);
            \draw[-] (C) -- (D);
            \draw[-] (B) -- (E);

            \draw[red, dashed, thick] (Q) -- (A);
            \draw[red, dashed, thick] (Q) -- (B);
            \draw[red, dashed, thick] (Q) -- (C);
            \draw[red, dashed, thick] (Q) .. controls (0, 4) .. (D);
            \draw[red, dashed, thick] (Q) -- (E);
            
            \draw[blue, very thick] (A) -- (C);
            \draw[blue, very thick] (C) -- (B);
            \draw[blue, very thick] (A) -- (B);
            \draw[blue, very thick] (B) -- (E);
            \node[circle, draw=black, fill=yellow!50, thick] at (0, 0) {A};
            \node[circle, draw=black, fill=yellow!50, thick] at (2, 0) {B};
            \node[circle, draw=black, fill=yellow!50, thick] at (1, 2) {C};
            \node[circle, draw=black, fill=yellow!50, thick] at (3, 1) {E};
        \end{tikzpicture}
        \caption{Random walk phase, walk starts at $A$ and goes to $C, B, E$}
    \end{subfigure}
    \hfill
    \begin{subfigure}[b]{0.45\textwidth}
        \centering
        \begin{tikzpicture}
            \node[circle, draw=black, fill=blue!30] (A) at (0, 0) {A};
            \node[circle, draw=black, fill=blue!30] (B) at (2, 0) {B};
            \node[circle, draw=black, fill=blue!30] (C) at (1, 2) {C};
            \node[circle, draw=black, fill=blue!30] (D) at (-1, 1) {D};
            \node[circle, draw=black, fill=blue!30] (E) at (3, 1) {E};
            \node[circle, draw=red, fill=red!30] (Q) at (2.5, 3) {Q};
            
            \draw[-] (A) -- (B);
            \draw[-] (A) -- (C);
            \draw[-] (B) -- (C);
            \draw[-] (C) -- (D);
            \draw[-] (B) -- (E);
            
            \draw[red, dashed, thick] (Q) -- (C);
            \draw[red, dashed, thick] (Q) -- (E);
        \end{tikzpicture}
        \caption{Sparsify phase, keep edges between $C$ and $E$}
    \end{subfigure}
    \caption{We construct the graph by first adding edges between $Q$ to all vertices, then perform a random walk to determine the candidate edges to keep, and sparsify them by sampling.}
    \label{fig:construct}
\end{figure*}

{
\subsection{Runtime Analysis of \texttt{SPatch}}
We provide a preliminary runtime analysis of \texttt{SPatch} (Algorithm~\ref{alg:sparsify}). Let $p$ be the point we want to delete, then note that the most expensive operation is to compute and update $w'(u,v)$, which takes $O(|N_{\rm in}(p)|\cdot |N_{\rm out}(p)| \cdot d)$ time, all other operations are subsumed by this step.

\subsection{Construction via Sparsification}
 The typical construction process of an HNSW graph involves the following steps:
\begin{enumerate}
    \item Given a point $u$, determine the layer $l_u$ to be inserted;
    \item Perform greedy search for $u$  from layer $L$ down to $l_u+1$, moving a layer downward each time the search is stuck;
    \item In layers $l=l_u\ldots1$, perform a greedy search for $u$, and draw edges between $u$ and the $m$ nearest points to it that were visited along the search path.
\end{enumerate}
Having replaced greedy search with a softmax walk, we can provide a probabilistic interpretation of the graph construction process as a sparsification of the weighted complete graph, as follows.
Let $G_0$ be a weighted complete graph with the edge weight between two vertices $u, v$ being $\exp(-r^2\cdot \|u-v\|^2)$. We claim that each layer of the HNSW graph $G_1$ can be constructed via a random walk-based sparsification of $G_0$. To this end, fix an ordering of the vertices $v_1,\ldots,v_n$, and initialize the graph with a single node $v_1$. Then, alternate between the complete graph $G_0$ and a sparsified graph as follows: for $i=2,\ldots, n$, 
\begin{enumerate}
    \item \textbf{Densify}: Add $v_i$ to $G_1$ by adding all edge $\{v_i,v_j\}$ for $j\in [i-1]$, with edge weights $\exp(-r^2\cdot \|v_i-v_j\|^2)$.
    \item \textbf{Random walk}: Perform a softmax walk for $v_i$ in $G_1$ starting at $v_1$, maintaining a list of the nodes \texttt{visited} along the walk.
    \item \textbf{Sparsify}:
    \begin{enumerate}
        \item After the random walk terminates, sample $m$ points from \texttt{visited} with probability proportional to the edge weights attached to them in the Densify step, forming a set \texttt{sample}. Remove edges between $v_i$ and points not in \texttt{sample}.
        \item Sparsify the neighborhood of each $u\in\texttt{sample}$ by subsampling $m$ of the edges incident to $u$ with probability to their edge weight.
    \end{enumerate}
\end{enumerate}
This point of view also justifies our use of edge weights when performing deletion, as we could treat the edge weights are formed during construction and inserting the corresponding the point.
}
\section{Additional Experiments}
\label{sec:app:add_exp}

\subsection{Details of Previous Experiments}
\label{sec:app:add_exp:details}

We start by examining more details regarding prior experiments conducted in Section~\ref{sec:exp}.

\noindent\textbf{Deletion experiments.} Regarding deletion algorithms, we note that several of them have hyper-parameters:
\begin{itemize}
    \item FreshDiskANN~\citep{ssks21} requires a hyper-parameter $\alpha$, which governs how many edges to prune after rerouting $u$ to $N(p)\cup N(u)$, intuitively, the larger the $\alpha$, the denser the graph. According to~\cite{ssks21}, $\alpha$ should be chosen $>1$. In our experiment, due to the time-consuming nature of FreshDiskANN deletion procedure, we set $\alpha=1.2$. It is also worth noting that FreshDiskANN is designed for DiskANN, which has a slightly different insertion procedure from HNSW.
    \item Our algorithm \texttt{SPatch} also requires a hyper-parameter $\alpha$, which determines how many edges to keep in the clique after sparsification. Intuitively, the larger the $\alpha$, the denser the graph. Through out experiments, we observe that choosing $\alpha=1.2$ except for \texttt{GIST} yields good performances. In particular, this consistently holds for \texttt{MPNet} and \texttt{MiniLM}. For \texttt{SIFT}, we could further improve the recall and efficiency by choosing $\alpha=0.6$. For \texttt{GIST} however, one has to choose $\alpha$ to be smaller than 1 to obtain good recall and efficiency. In our experiment, we choose $\alpha=0.4$. We summarize the choices in Table~\ref{tab:alpha}. To choose $\alpha$, we recommend either using $\alpha=1.2$ or $\alpha=0.6$.
    \begin{table}[!ht]
        \centering
        \begin{tabular}{|c|c|c|c|c|} \hline
             & \texttt{SIFT} & \texttt{GIST} & \texttt{MPNet} &  \texttt{MiniLM} \\ \hline
           $\alpha$  & 0.6 & 0.4 & 1.2 & 1.2 \\ \hline
        \end{tabular}
        \caption{Choices of hyper-parameter $\alpha$ for different datasets.}
        \label{tab:alpha}
    \end{table}
\end{itemize}
\noindent\textbf{Random softmax walk vs. greedy search.} For this experiment, we need to choose a hyper-parameter $r$ (recall the softmax walk samples for the next visit with probability $\exp(-r^2\cdot \|q-u\|^2)$). Intuitively, we want to choose $r$ so that the softmax walk samples the nearest neighbor with exponentially higher probability than the second nearest neighbor. This could be achieved by choosing $r$ to be arbitrarily large. However, if $r$ is chosen to be too large, the probability can easily overflow. To resolve this issue, we adapt the following approach for computing $r$: given a collection of points \texttt{candidates} to consider (line 7 of Algorithm~\ref{alg:layer_search_rw}), we compute the empirical average of the distances $\mu=\sum_{u\in \texttt{candidates}} \frac{\|u-q\|}{|\texttt{candidates}|}$, then we set $r=15/\mu$. This scales $r\cdot \|q-u\|$ to a value between 10 and 20, and empirically, we observe that this choice of $r$ can differentiate among the top nearest neighbors and henceforth, give similar recall as the greedy search.

Figure~\ref{fig:r_hat_impact} justifies this decision to fix $r\mu = 15$ by demonstrating the behavior of Random Softmax search for different values $\hat{r} := r\mu$. The left plot shows that as $\hat{r}$ increases, the softmax converges toward a true maximum, and randomized softmax correspondingly better matches the greedy search algorithm. Thus, $\hat{r}$ in this regime usually -- but far from always -- transitions to the current node's neighbor closest to $q$. The right plot shows the impact of $\hat{r}$ on recall, and in particular its convergence to the performance of the pure greedy algorithm (dashed horizontal lines) for $\hat{r} \approx 15$.
\begin{figure*}[!ht]
    \centering
    \includegraphics[width=1\linewidth]{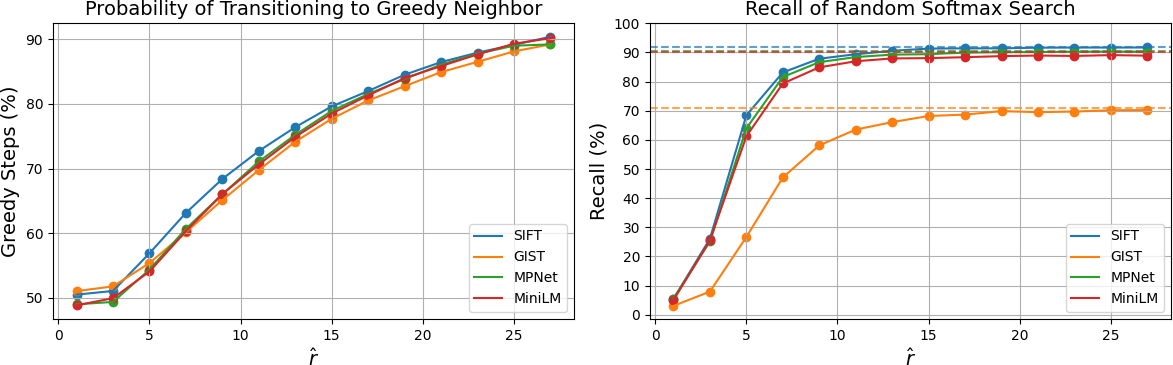}
    \caption{The impact of varying $\hat{r}$ (i.e. $r\mu$) on transition probabilities and recall. Left: The frequency with which the random softmax algorithm truly transitions to the nearest neighbor (i.e. a greedy step), as a function of $\hat{r}$. Right: The impact of different choices of $\hat{r}$ on the recall of the randomized search algorithm. Horizontal line indicates the recall of greedy search algorithm.}
    \label{fig:r_hat_impact}
\end{figure*}
\noindent\textbf{Deletion time without FreshDiskANN.} As we have observed in the experiment, FreshDiskANN has much slower deletion time than other methods, in the following figures, we provide deletion time comparison \emph{without} FreshDiskANN.
\begin{figure*}[!ht]
\centering
\includegraphics[scale=0.25]{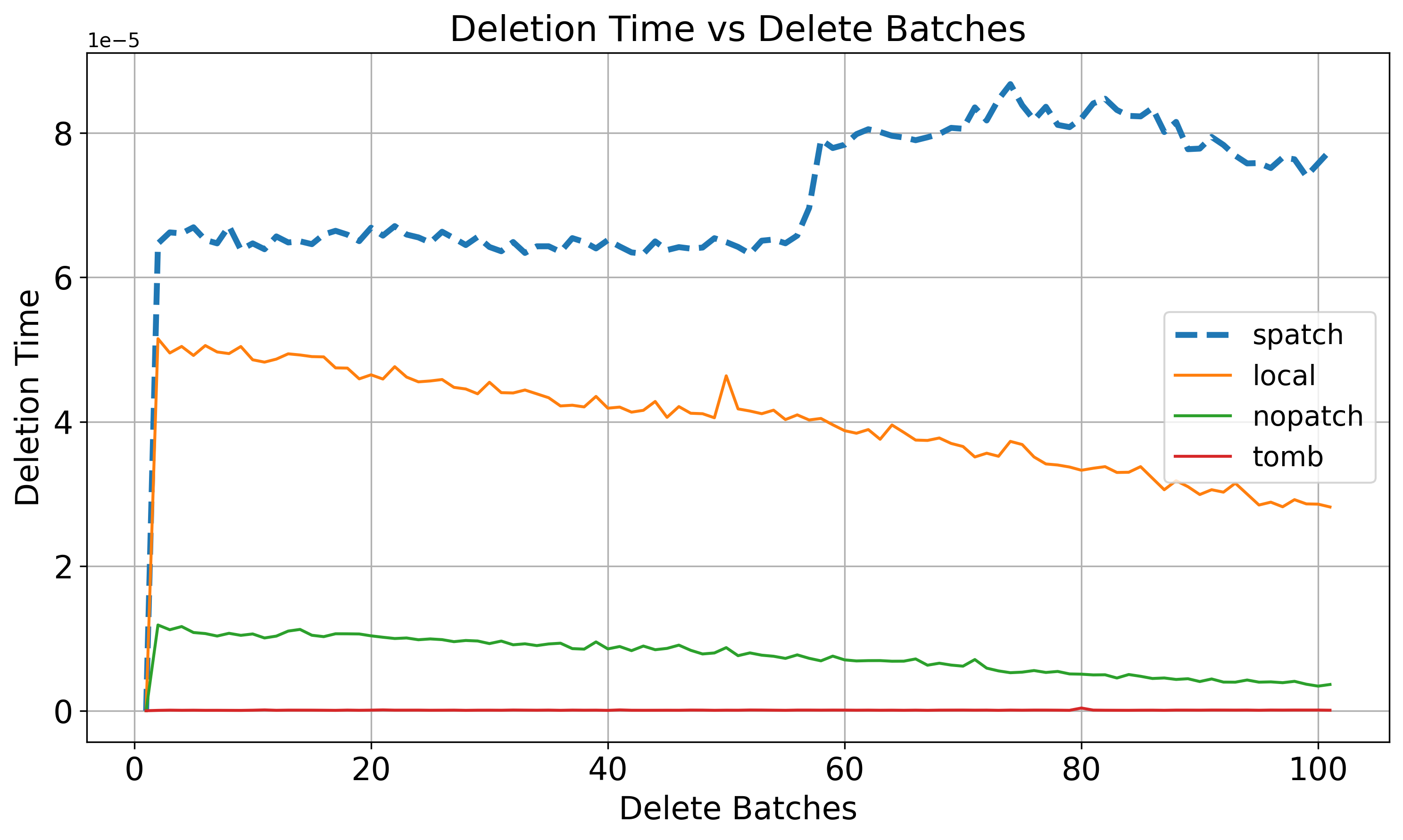}
\hfill
\includegraphics[scale=0.25]{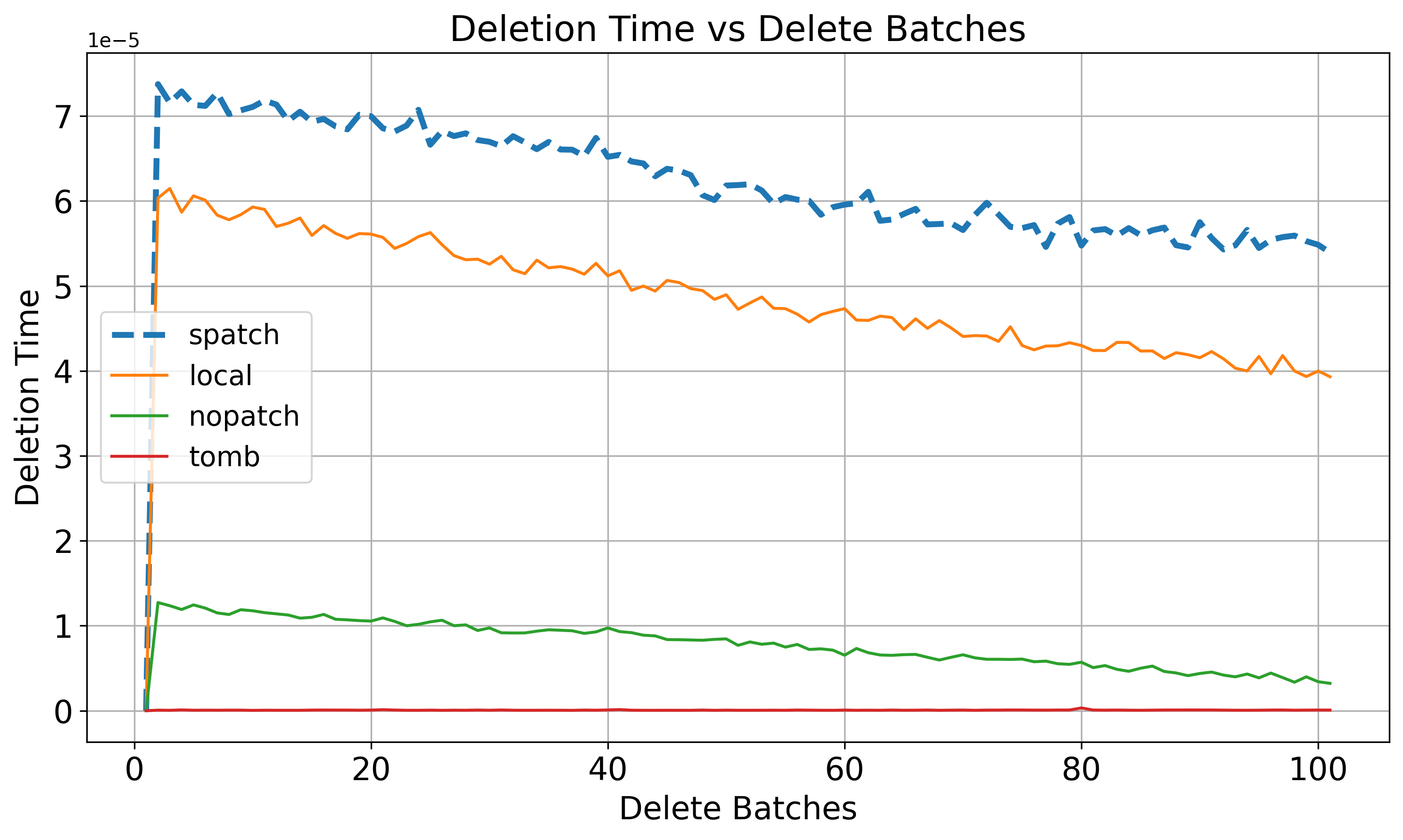}
\hfill
\includegraphics[scale=0.25]{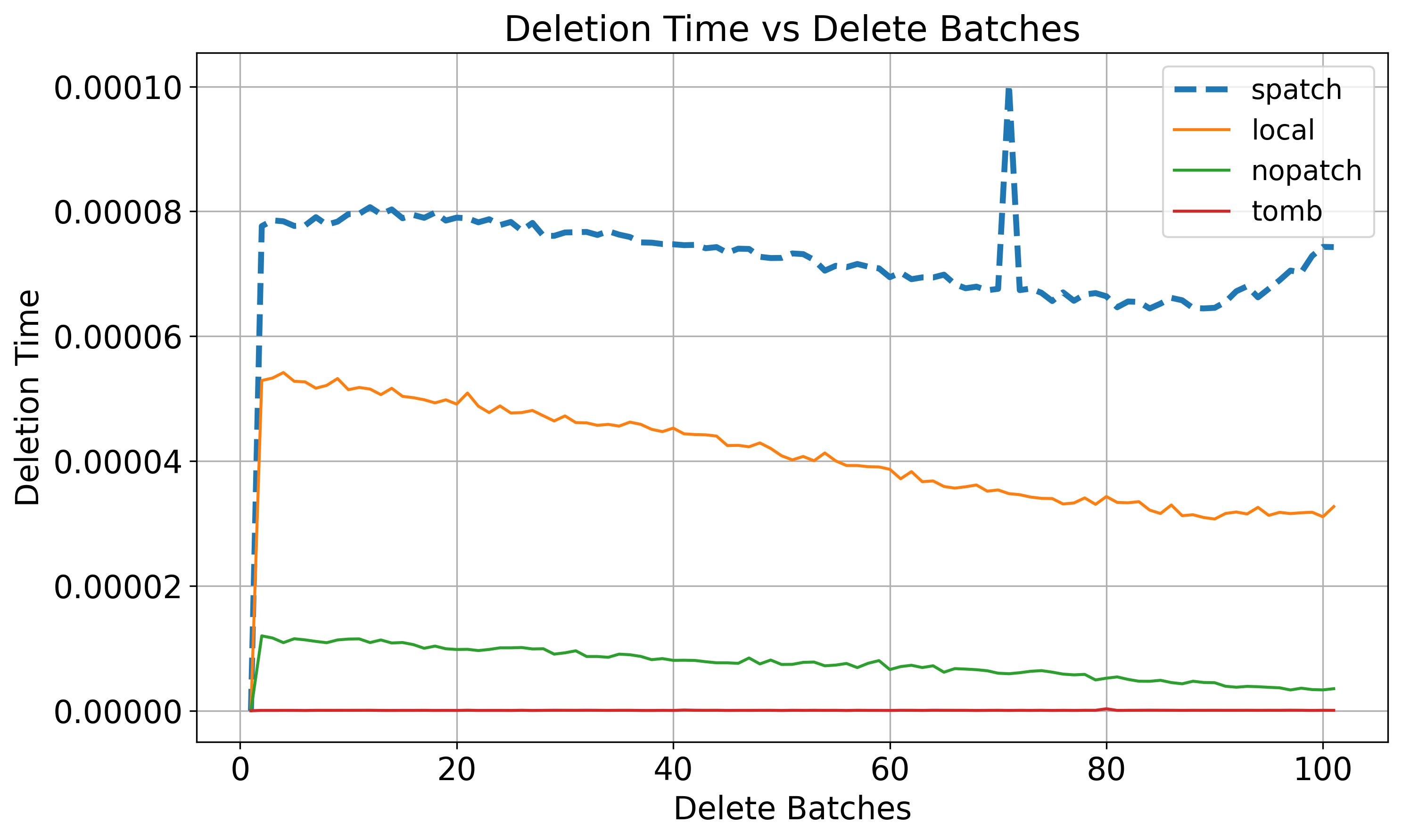}
\hfill
\includegraphics[scale=0.25]{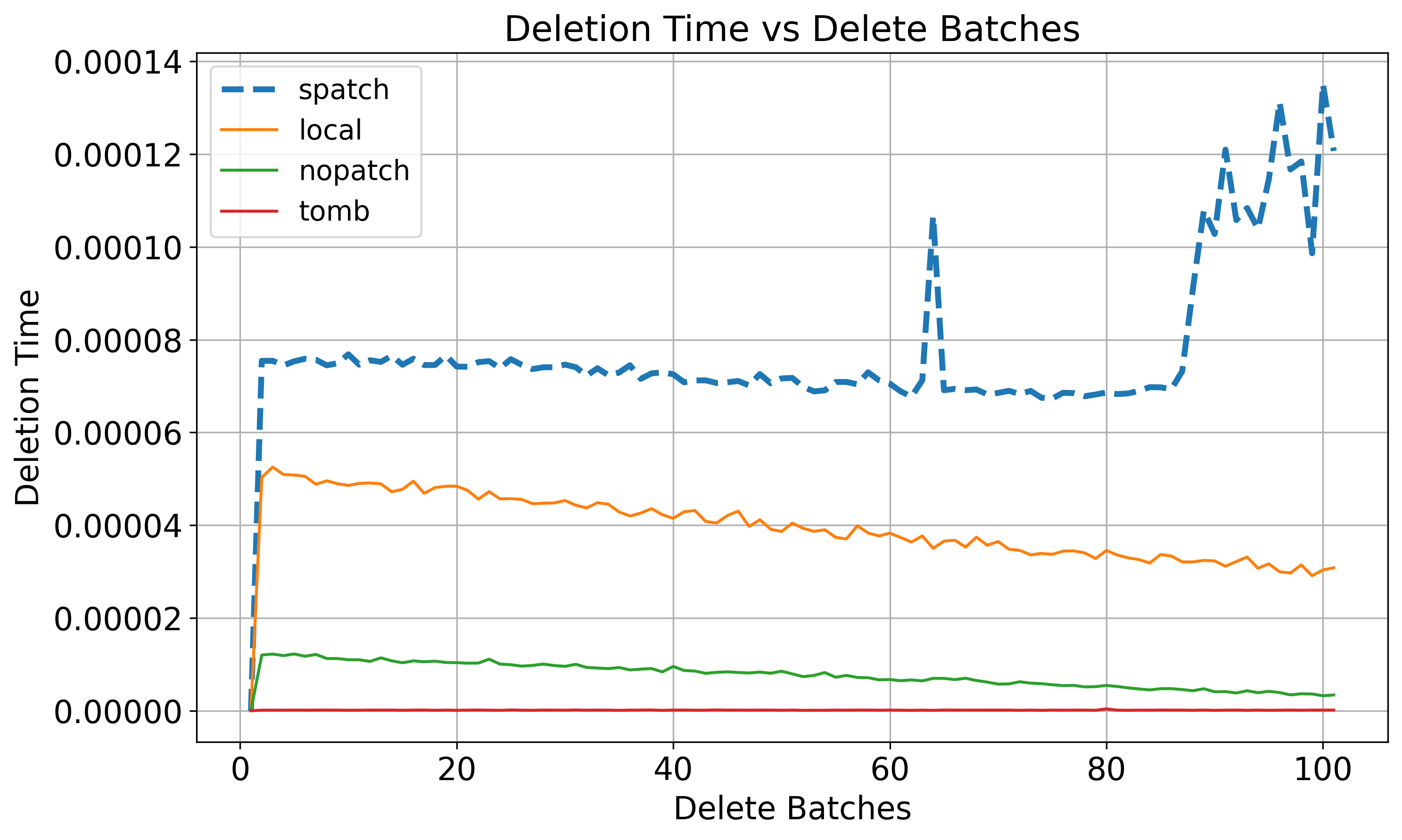}
\caption{Top left: \texttt{SIFT}, top right: \texttt{GIST}, bottom left: \texttt{MPNet}, bottom right: \texttt{MiniLM}.}
\label{fig:delete_time_no_fresh}
\end{figure*}

\subsection{Steady State Setting}

In this experiment, we consider the steady state setting introduced in~\cite{ssks21}, where 10\% of the points are deleted from the data structure then inserted back, and queries are measured. We repeat this process 10 times, albeit all points have been deleted from the data structure and reinserted. Similar to the mass deletion experiment, we randomly pick 5,000 query points for \texttt{SIFT, MPNet, MBREAD} and \texttt{MiniLM} and 1,000 query points for \texttt{GIST}.

Through Figure~\ref{fig:steady_state}, we could see a similar trend as in the deletion experiment (Section~\ref{sec:exp}), except that the no patching algorithm gives better recall than before. This is because in the steady state setting, the graph is automatically ``patched'' by re-inserting the same set of points back to the data structure. In contrast to the experimental results in~\cite{ssks21}, FreshDiskANN does not perform even as well as no patching, this in part is because the insertion algorithms for DiskANN and HNSW are quite different. Regarding hyper-parameters: for FreshDiskANN, we again choose $\alpha=1.2$, and for \texttt{SPatch}, we summarize it in Table~\ref{tab:alpha_steady}.

\begin{table}[!ht]
        \centering
        \begin{tabular}{|c|c|c|c|c|c|} \hline
             & \texttt{SIFT} & \texttt{GIST} & \texttt{MPNet} & \texttt{MBREAD} & \texttt{MiniLM}  \\ \hline
           $\alpha$  & 0.5 & 0.5 & 1.2 & 1.6 & 1.2 \\ \hline
        \end{tabular}
        \caption{Choices of hyper-parameter $\alpha$ for different datasets.}
        \label{tab:alpha_steady}
    \end{table}

\begin{figure*}[!ht]
    \centering
    \includegraphics[width=1\linewidth, height=3.3cm]{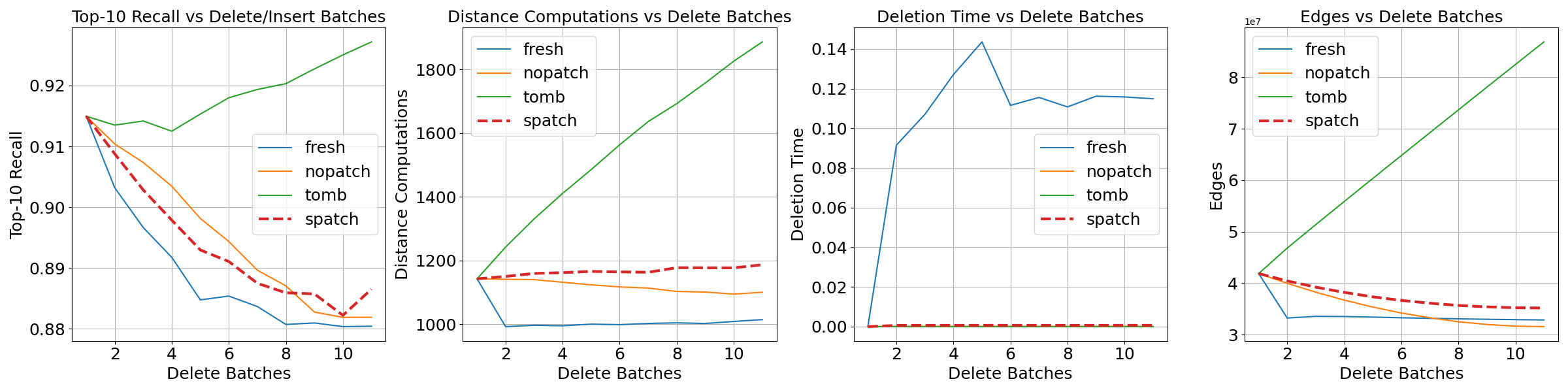}
    \includegraphics[width=1\linewidth, height=3.3cm]{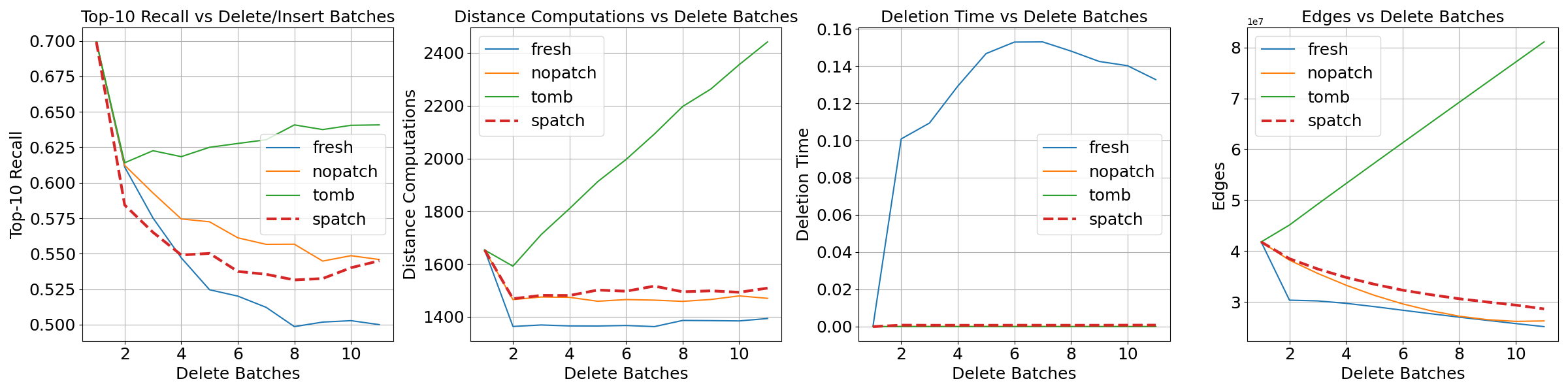}
    \includegraphics[width=1\linewidth, height=3.3cm]{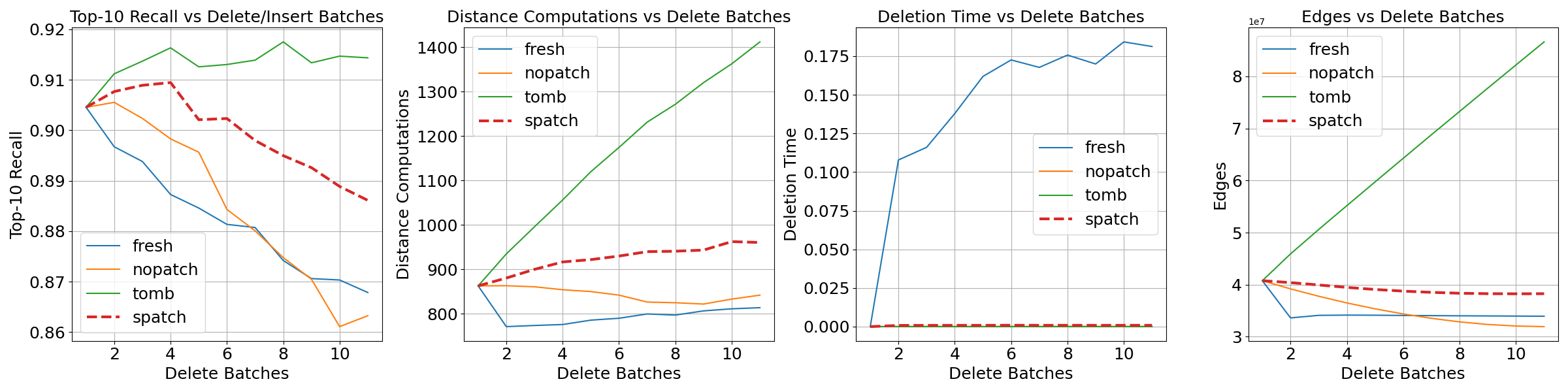}
    \includegraphics[width=1\linewidth, height=3.3cm]{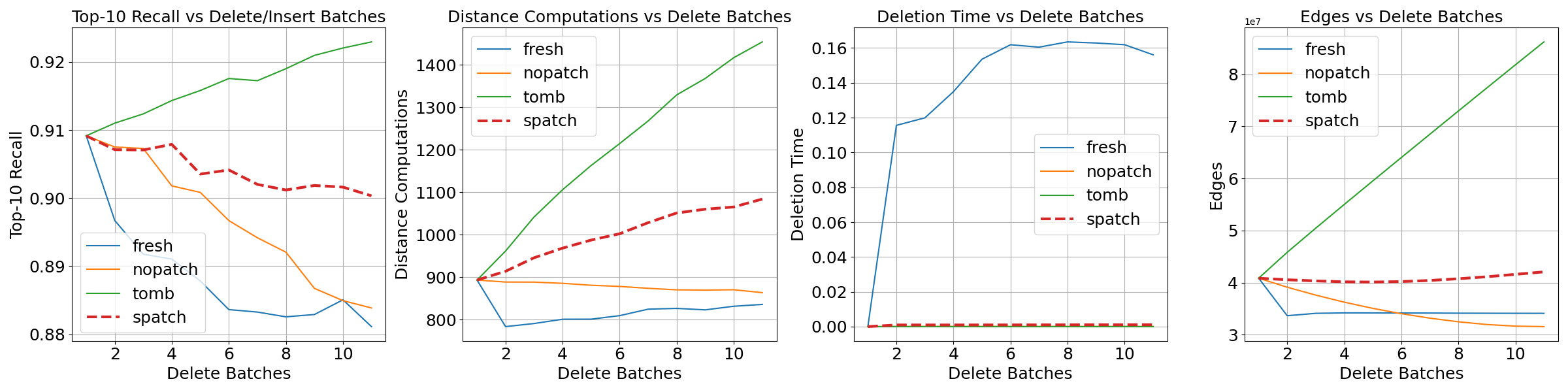}
    \includegraphics[width=1\linewidth, height=3.3cm]{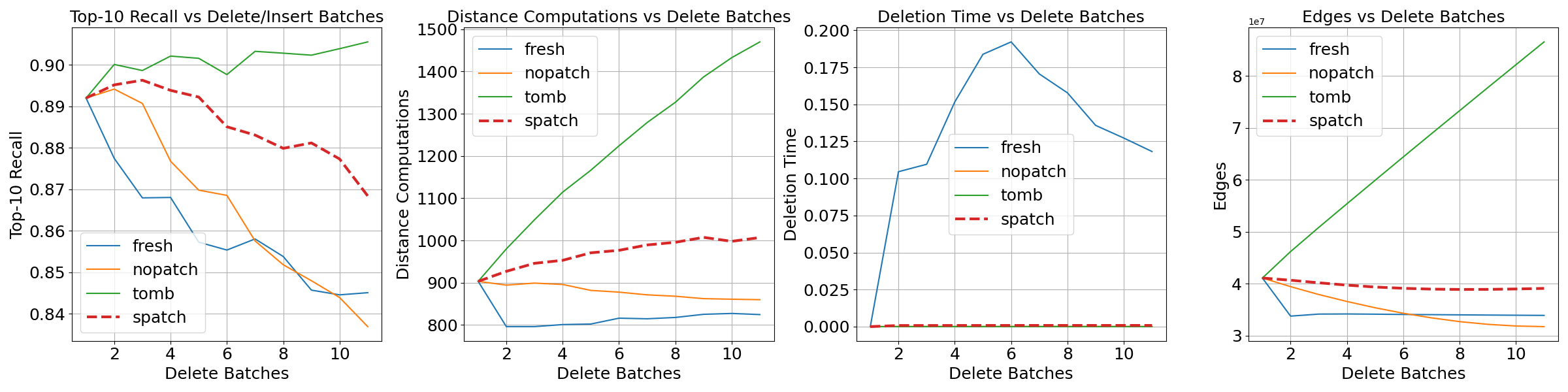}
    \caption{The $5\times 4$ grid of figures, where the rows are \texttt{SIFT}, \texttt{GIST}, \texttt{MPNet}, \texttt{MBREAD} and \texttt{MiniLM}, and the columns are recall, number of distance computations per query, total deletion time and total deletion time excluding fresh DiskANN. Legends: \texttt{spatch} -- our deletion algorithm \texttt{SPatch}, \texttt{fresh} -- FreshDiskANN algorithm, \texttt{tomb} -- tombstone algorithm, \texttt{nopatch} -- no patching algorithm.}
    \label{fig:steady_state}
\end{figure*}

\subsection{Vertex counts, Edge Counts, and Memory Utilization}
\label{sec:memory}
\begin{samepage}
To see the impact of \texttt{SPatch} on the graph size and empirical memory utilization, we design an alternative steady-state experiment in an environment with very high vector turnover.
Starting with an empty database, we continuously insert into it one vector per (simulated) second from the \texttt{MiniLM} dataset over the duration of 12 (simulated) hours. 
Each vector survives for a number of steps following an exponential distribution with a mean of $2$ hours (i.e. has a half-life of $2\ln 2 \approx 1.39$ hours), after which it is deleted.
Hence, while the experiment has more insertions than deletions for the first few simulated hours, it eventually enters a steady state at which insertions and deletions occur at roughly the same rate.
At the end of the 12 hours, no new vectors are inserted, and the remaining vectors continue to fall out as they expire.

We track the vertex counts, edge counts, and memory utilization of both the \texttt{Tombstone} algorithm and of \texttt{SPatch}, and the results are shown in Figure~\ref{fig:memory}.
Vertex and edges are counted cumulatively over all levels of the graph. 
Hence, we continue to see a small rise even in the steady state: the upper layers of the data structure still continue to accumulate them at their prior rate.
However, the rate of increase is \emph{far} slower than that of the \texttt{Tombstone} algorithm, which intrinsically grows at a steady rate independent of the number of deletions.

We also see a stark difference in memory utilization (as measured by python's \texttt{psutil} package). 
While the \texttt{Tombstone} algorithm is marginally more memory efficient before the effect of deletions begins to kick in (as a result of storing approximately $d+m=384+32=416$ words per vector instead of $d+2m=448$ due to the bidirected nature of the HNSW graph), soon the vector deletion begins saving large amounts of memory due to the combined effect of the vertices, edges, and $d$-dimensional vectors removed from the data store.
Although vertex, edge, and memory all continues to rise with \texttt{SPatch}, it does so substantially slower than with \texttt{Tombstone}.
Finally, after we enter the pure-deletion phase of the experiment 12 hours in, we see no further change in \emph{any} of the plots for \texttt{Tombstone} (as expected), but each of the metrics begins its decrease for \texttt{SPatch}.
While not all of the memory can be directly recovered by the Operating System due to fragmentation or other similar considerations, we do see the memory utilization of \texttt{SPatch} begin to dip, and (unlike in \texttt{Tombstone}) much of the unrecovered memory is ready for reuse by the algorithm should it face more insertions in the future.

\begin{figure}[!h]
\includegraphics[width=\textwidth]{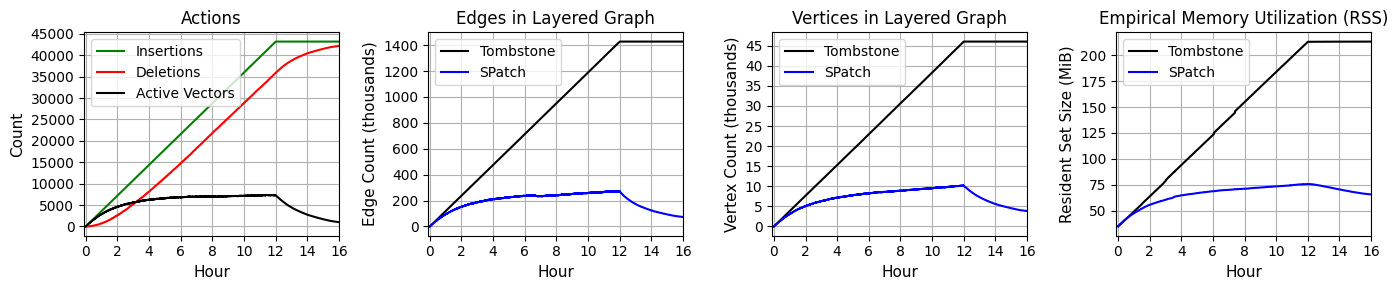}
\caption{Vertex counts, edge counts, and empirical memory utilization as measured by the experiment described in Section~\ref{sec:memory}.}
\label{fig:memory}
\end{figure}
\end{samepage}

{
\subsection{Softmax Walk and Greedy Search: A More Detailed Comparison}

In this section, we give a more detailed comparison between softmax walk and greedy search. For the first set of plots, we plot beam size vs recall@10, and for the second set of plots, we plot the total number of distance computations vs recall@10 in Figure~\ref{fig:soft_greedy}. We can see that fixing the beam size, softmax walk gives slightly lower recall, but the second row shows that fixing the number of distance computations, softmax walk is very close to greedy search.

\begin{figure*}[!ht]
\centering
\includegraphics[scale=0.3]{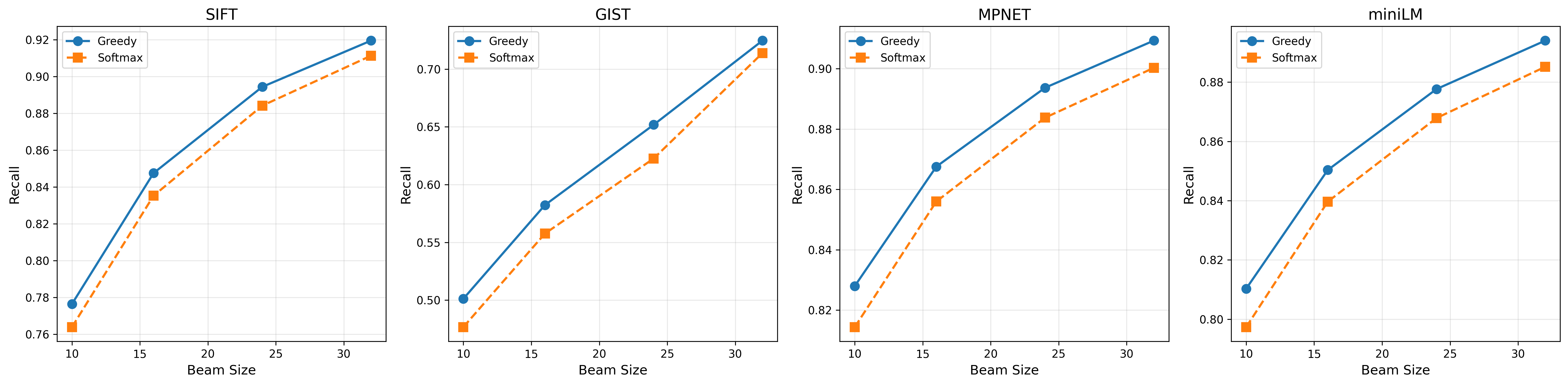}
\includegraphics[scale=0.3]{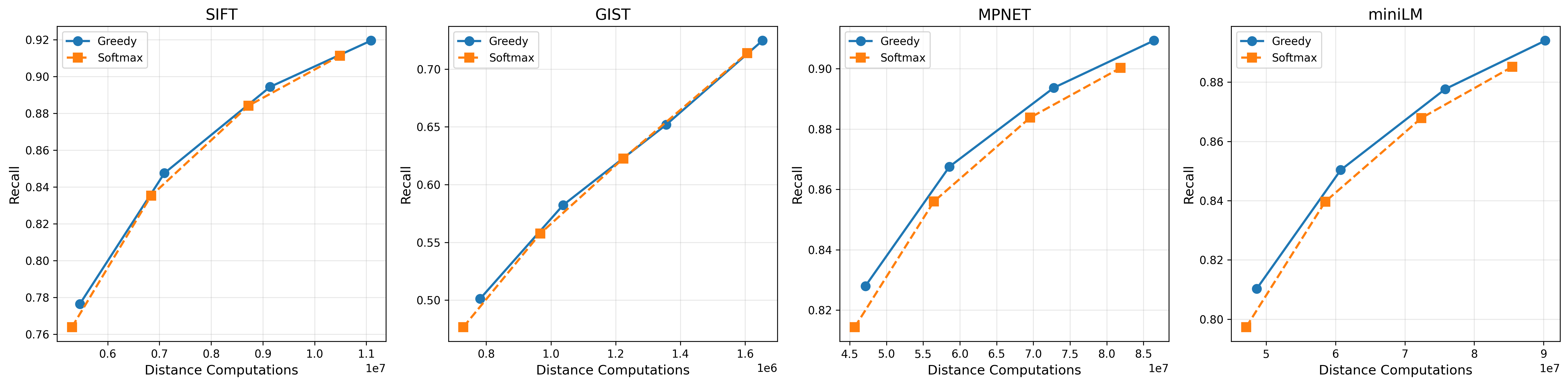}
\caption{From left to right: \texttt{SIFT, GIST, MPNet} and \texttt{MiniLM}. Top row: beam size vs recall@10, bottom row: total number of distance computations vs recall@10.}
\label{fig:soft_greedy}
\end{figure*}

\subsection{Experiments on DiskANN}
In this section, we provide preliminary experiments on DiskANN, specifically we test \texttt{SPatch}, FreshDiskANN, tombstone, local and no patch algorithm. We test it on a subsampled 100k points from \texttt{SIFT} dataset, each deletion batch, we delete 800 points then perform a query. Compared to the experiments on HNSW, FreshDiskANN has improved performance in terms of recall, however, we note the deletion time of FreshDiskANN is still very large.
\begin{figure*}[!ht]
    \centering
    \includegraphics[scale=0.2]{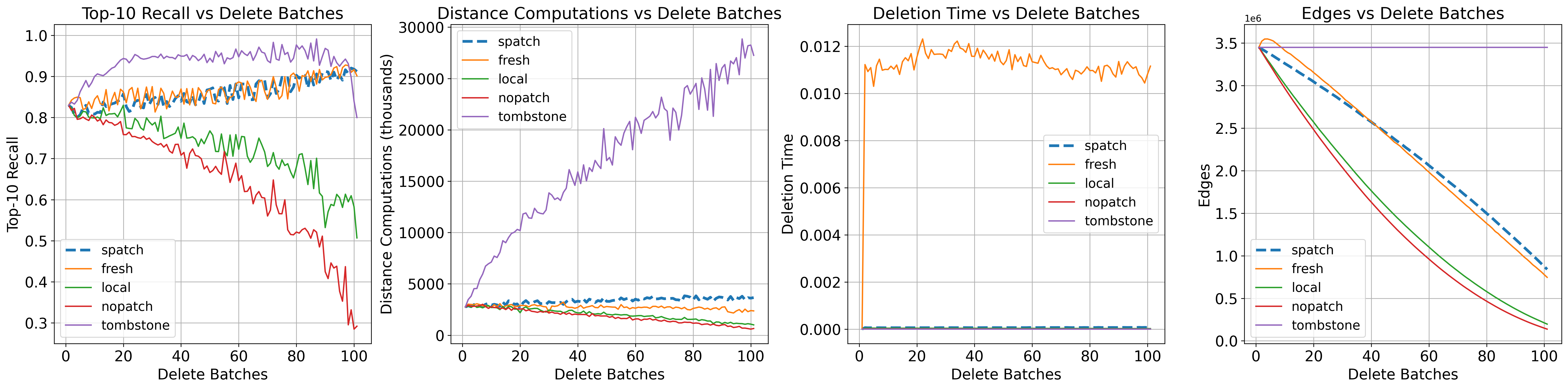}
    \caption{Experiments for DiskANN on 10\% of points from \texttt{SIFT}.}
    \label{fig:diskann}
\end{figure*}

\subsection{Experiments on Dynamic Exploration Graphs (DEG)}
We also provide preliminary comparisons to the deletion strategy used in the Dynamic Exploration Graph (DEG) data structure of~\cite{dynamic-exploration-graph}.
In DEG, the deletion of a vertex $v$ is patched by building an independent BFS trees from \emph{each} neighbor of $v$ and adding an edge between two of these neighbors when their corresponding trees collide, until the total degree on the neighborhood is restored.
Like in the previous section, we subsample $100$k points from \texttt{SIFT} and monitor  (i) recall, (ii) distance computations per query, (iii) deletion time, and (iv) graph size after repeatedly deleting $10$K points.

\begin{figure*}[!ht]
    \centering
    \includegraphics[width=\textwidth]{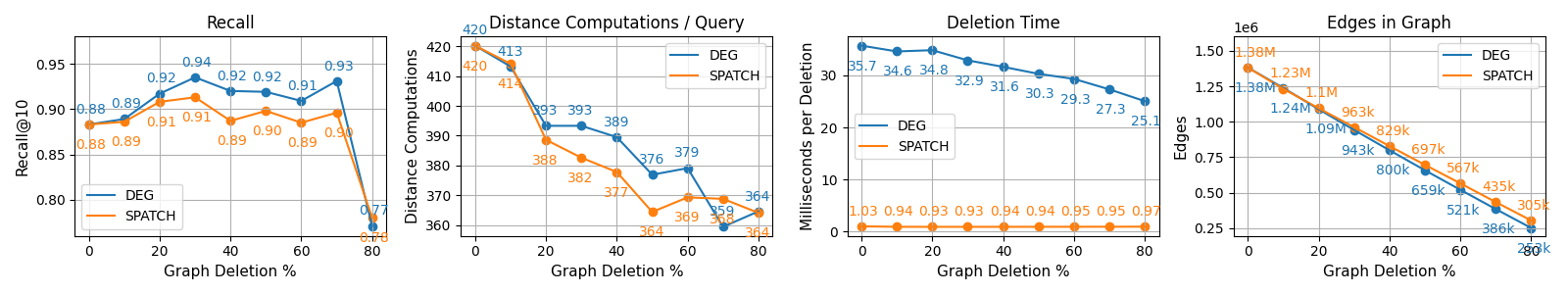}
    \caption{Experiments for DEG on 10\% of points from \texttt{SIFT}.}
    \label{fig:deg}
\end{figure*}

The results are shown in Figure~\ref{fig:deg}.
Note that DEG's deletion time is much slower than our method, SPATCH.  In the cost/accuracy trade-off, SPatch and DEG perform very similarly, with DEG achieving slightly higher recall (within 2-3\%) with a slightly higher number of distance computations, rendering their cost/accuracy tradeoff comparable.
The crucial difference is deletion time: {\bf the deletion time of DEG is up to 35× larger than SPatch}, reflecting the inherent cost of DEG’s global rebuild strategy compared to SPatch local updates. This gap makes DEG impractical when deletions are frequent or substantial, which is precisely the regime our method targets.
While DEG achieves slightly higher recall than SPatch, its much slower deletion time makes it impractical for use in scenarios where deletions are plentiful.

}

\end{document}